\newtheorem{theorem}{Theorem}
\newtheorem{lemma}{Lemma}
\newcommand{\etal}{\emph{et~al.}\xspace}
\newcommand{\flp}{\mbox{\rm FLP}\xspace} 
\newcommand{\mflp}{\mbox{\rm MFLP}\xspace} 
\newcommand{\smflp}{\mbox{\rm SMFLP}\xspace} 
\newcommand{\seflp}{\mbox{\rm E$^2$FLP}\xspace} 
\newcommand{\facloc}{{\sc Facility Location Problem}\xspace} 
\newcommand{\metfacloc}{{\sc Metric FLP}\xspace} 
\newcommand{\squmetfacloc}{{\sc Squared Metric FLP}\xspace} 
\newcommand{\eucfacloc}{{\sc Euclidean FLP}\xspace} 
\newcommand{\squeucfacloc}{{\sc Squared Euclidean FLP}\xspace} 
\newcommand{\Oh}{\textrm{O}}
\newcommand{\PP}{\mbox{\rm P}\xspace}
\newcommand{\NP}{\mbox{\rm NP}\xspace}
\newcommand{\DTIME}{\mbox{\rm DTIME}\xspace}
\newcommand{\ceil}[1]{\lceil{#1}\rceil}
\newcommand{\RNN}{{\mathbb{R}_{+}}} 
\newcommand{\RP}{{\mathbb{R}_{+}^{*}}} 
\newcommand{\coefa}[2]{\textrm{coeff}_#1[\alpha_#2]}
\newcommand{\coefd}[2]{\textrm{coeff}_#1[-d_#2]}
\newcommand{\coeff}[1]{\textrm{coeff}_#1[f]}
\newcommand{\coef}[1]{\textrm{coeff}[#1]}
\newcommand{\optval}[3]{{#1}^{\scriptscriptstyle #2}_{#3}}
\newcommand{\zz}[2]{\optval{z}{#1}{#2}}
\newcommand{\ww}[2]{\optval{w}{#1}{#2}}
\newcommand{\xx}[2]{\optval{x}{#1}{#2}}
\newcommand{\wwr}[2]{\optval{\dot{w}}{}{#2}}
\newcommand{\xxr}[2]{\optval{\dot{x}}{}{#2}}
\newcommand{\zzm}[2]{\optval{\hat{z}}{#1}{#2}}
\newcommand{\xxm}[2]{\optval{\hat{x}}{#1}{#2}}
\newcommand{\cali}{\mathcal{I}}
\newcommand{\calj}{\mathcal{J}}
\newcommand{\cals}{\mathcal{S}}
\newcommand{\calu}{\mathcal{U}}
\newcommand{\ox}{\overline{x}}
\newcommand{\oy}{\overline{y}}
\newcommand{\djf}{d_j}       
\newcommand{\djc}{d^{(c)}_j}
\newcommand{\djd}{d^{(d)}_j}
\newcommand{\djm}{d^{(\max)}_j}
\newcommand{\djlc}{d^{(c)}_{j'}}
\newcommand{\djlm}{d^{(\max)}_{j'}}
\newcommand{\djjl}{d_{jj'}}
\newcommand{\mystyle}{\textstyle}
\newcommand{\suml}{\sum\limits}
\definecolor{magiccolor}{rgb}{0.72,0.56,0.06}
\newcommand{\MAGIC}[1]{{#1}} 
\begin{document}
\title{A Systematic Approach to Bound Factor-Revealing LPs and its Application to 
the Metric and Squared Metric Facility Location Problems\footnotemark[1]}
\author{%
C.G. Fernandes\footnotemark[2],
L.A.A. Meira\footnotemark[3],
F.K. Miyazawa\footnotemark[4],
and L.L.C. Pedrosa\footnotemark[4]}
\date{August 6, 2013}

\renewcommand{\thefootnote}{\fnsymbol{footnote}}
\footnotetext[1]{This research was partially supported by CNPq
(grant numbers 306860/2010-4, 473867/2010-9, and 309657/2009-1),
FAPESP (grant number 2010/20710-4) and Project MaCLinC of NUMEC/USP.}
\footnotetext[2]{Institute of Mathematics and Statistics, University of S\~ao Paulo, Brazil, \texttt{cris@ime.usp.br}.}
\footnotetext[3]{Faculty of Technology, University of Campinas, Brazil, \texttt{meira@ft.unicamp.br}}
\footnotetext[4]{Institute of Computing, University of Campinas, Brazil, \texttt{fkm@ic.unicamp.br}, \texttt{lehilton@ic.unicamp.br}.}

\setcounter{page}{0}
\maketitle
\begin{abstract}
A systematic technique to bound factor-revealing linear programs is
presented.  We show how to derive a family of \emph{upper bound}
factor-revealing programs (UPFRP), and show that each such program can
be solved by a computer to bound the approximation factor of an
associated algorithm. Obtaining an UPFRP is straightforward, and can
be used as an alternative to analytical proofs, that are usually very
long and tedious. We apply this technique to the Metric Facility
Location Problem (\mflp) and to a generalization where the distance
function is a squared metric. We call this generalization the Squared
Metric Facility Location Problem (\smflp) and prove that there is no
approximation factor better than $2.04$, assuming~$\PP\neq\NP$.  Then,
we analyze the best known algorithms for the \mflp based on
primal-dual and LP-rounding techniques when they are applied to the
\smflp. We prove very tight bounds for these algorithms, and show that
the LP-rounding algorithm achieves a ratio of $2.04$, and therefore
has the best factor for the \smflp. We use UPFRPs in the dual-fitting
analysis of the primal-dual algorithms for both the \smflp and the
\mflp, improving some of the previous analysis for the \mflp.
\end{abstract}

\thispagestyle{empty}
\newpage

\section{Introduction}

Let $C$ and $F$ be finite disjoint sets. Call \emph{cities} the elements of
$C$ and \emph{facilities} the elements of~$F$. For each facility $i$ and city
$j$, let $c_{ij}$ be a non-negative number representing the cost to connect~$i$
to~$j$. Additionally, let $f_i$ be a non-negative number representing the cost
to open facility~$i$. For each city $j$ and subset $F'$ of~$F$, let $c(F',j) =
\min_{i \in F'} c_{ij}$. The \facloc (\flp) consists of the following: given
sets~$C$ and $F$, and $c$ and $f$ as above, find a subset $F'$ of $F$ such that
$\sum_{i \in F'} f_i + \sum_{j \in C} c(F',j)$ is minimum.
Hochbaum~\cite{Hochbaum82a} presented an $\Oh(\log n)$-approximation for the
\flp.

A well-studied particular case of the \flp is its so called metric variant. We
say that an instance $(C,F,c,f)$ of the \flp is \emph{metric} if $c_{ij} \leq
c_{ij'} + c_{i'j'} + c_{i'j}$, for all facilities $i$ and~$i'$, and cities $j$
and~$j'$. In the context of the \flp, this inequality is called the
\emph{triangle inequality}. The \metfacloc, denoted by \mflp, is the
particular case of the \flp that considers only metric instances. Several
algorithms were proposed in the literature for the
\mflp~\cite{ByrkaA2010,ChudakS99,GuhaK98,JainMMSV03,JainV01,Li2011,MahdianYZ06,
  ShmoysTA97}. In particular, the best known algorithm for the \mflp is a
$1.488$-approximation proposed by Li~\cite{Li2011}. Also, Guha and
Khuller~\cite{GuhaK99} proved an inapproximability result that states that
there is no approximation algorithm for the \mflp with a ratio smaller than
$1.463$, unless $\NP \subseteq \DTIME[n^{\Oh(\log \log n)}]$. This result was
strengthened by Sviridenko, who showed that the lower bound holds unless $\PP
= \NP$ (see~\cite{Vygen2005}).

The \eucfacloc is a particular case of the \mflp also considered in the
literature. In the \eucfacloc, one is given a position in an Euclidean space
for each city and for each facility, and the cost $c_{ij}$ is the Euclidean
distance between the position of facility~$i$ and the position of city $j$.
There is a PTAS for the Euclidean \flp in 2-dimensional space, by Arora,
Raghavan, and Rao~\cite{AroraRR98}.

Yet another variant considered in the literature is the so called \squeucfacloc,
denoted here by \seflp. In this variant, as in the Euclidean case, one is given
a position in an Euclidean space for each city and for each facility. Here, the
cost $c_{ij}$ is the square of the Euclidean distance between the position of
facility~$i$ and the position of city~$j$. This cost measure is known as
$\ell_2^2$, and was, for instance, considered by Jain and
Vazirani~\cite[pp.~292--293]{JainV01} in the context of the \flp. Their approach
implies a 9-approximation for the \seflp.

We consider instances $(C,F,c,f)$ of the \flp such that a relaxed version of the
triangle inequality is satisfied. We say that a cost function $c$ is a
\emph{squared metric}, if, for all facilities~$i$ and~$i'$, and cities~$j$ and
$j'$, we have $\sqrt{c_{ij}}\leq\sqrt{c_{ij'}} + \sqrt{c_{i'j'}} +
\sqrt{c_{i'j}}$. The particular case of the \flp that only considers instances
with a squared metric is called \squmetfacloc, and is denoted by \smflp. Notice
that the \smflp is a generalization of the \seflp and of the \mflp. Thus any
approximation for the \smflp is also an approximation for the \seflp or the
\mflp, and the inapproximability results for the \mflp are also valid for the
\smflp.
The 9-approximation of Jain and Vazirani~\cite{JainV01} applies also to the
\smflp and, to our knowledge, it has the best previously known approximation
factor. The choice of squared metrics discourages excessive distances in the
solution. This effect is important in several applications, such as $k$-means
and classification problems. 

Although there are several algorithms for the \mflp in the literature, there
are very few works on the \smflp. Nevertheless, one may try to solve an
instance of the \smflp using good algorithms designed for the \mflp. Since
these algorithms and their analysis are based on the assumption of the
triangle inequality, it is reasonable to expect that they generate good
solutions also for the \smflp.  However, there is no trivial way to derive an
approximation factor from the \mflp to the \smflp, so each algorithm must be
reanalyzed individually. In this paper, we analyze three primal-dual
algorithms (the $1.861$ and the $1.61$-approximation algorithms of
Jain~\etal~\cite{JainMMSV03}, and the $1.52$-approximation of Mahdian, Ye, and
Zhang~\cite{MahdianYZ06}) and an LP-rounding algorithm (Chudak and Shmoys's
algorithm~\cite{Chudak2004} used in the $1.5$-approximation of Byrka and
Aardal~\cite{ByrkaA2010}) when applied to \smflp instances.  We show that
these algorithms achieve ratios of \MAGIC{$2.87$}, \MAGIC{$2.43$},
\MAGIC{$2.17$}, and \MAGIC{$2.04$} for the \smflp, respectively.  
The last approximation factor is the best possible, as we show a
$2.04$-inapproximability limit for the \smflp. This was obtained by extending
the metric case hardness results of Guha and Khuller~\cite{GuhaK99}.

The original analysis of the three primal-dual algorithms are based on the so
called families of \emph{factor-revealing linear
  programs}~\cite{JainMMSV03,MahdianYZ06}.  The value of a computer calculated
optimal solution for any program in this family gives a lower bound on the
approximation factor.  An upper bound, however, is obtained analytically by
bounding the value of every program in this family, which requires long and
tedious proofs.  In this paper, we propose a way to obtain a new family of
upper bound factor-revealing programs, as an alternative technique to achieve
an upper bound.  Now, the upper bound on the approximation factor is also
obtained by a computer calculated solution of a single program. We note that,
for the \smflp, our factor-revealing programs are nonlinear, since the squared
metric constraints contain square roots. We tackle this by replacing these
constraints with an infinite set of linear constraints.

Recently, Mahdian and Yan~\cite{MahdianY2011} introduced the \emph{strongly
factor-revealing linear programs}. Our upper bound factor-revealing program is
similar to a strongly factor-revealing program. The techniques involved in
obtaining our program, however, are different. To obtain a strongly
factor-revealing linear program, one projects a solution of an arbitrarily large
linear program into a linear program with a constant number of variables, and
guesses how to adjust the restrictions to obtain a feasible solution. In our
approach, we define a candidate dual solution for a program with a fixed number
of variables, and obtain an upper bound factor-revealing program directly in the
form of a minimization program using only straightforward calculations. For the
case of the \smflp, we observed that calculating the dual upper bound program is
easier than projecting the solutions on the primal. Also, we have considered the
case of the \mflp, for which the obtained lower and upper bound factor-revealing
programs converge.

Our contribution is two-folded. First, we make an important step towards
generalizing the squared Euclidean distance and successfully analyze this
generalization in the context of the \flp. Second, more importantly, we propose a
new technique to systematically bound factor-revealing programs. This technique
is used in the dual-fitting analysis of the primal-dual algorithms for both the
\smflp and the \mflp. We hope that this technique can also be used in the
analysis of other dual-fitting algorithms analyzed through factor-revealing LPs.

The paper is organized as follows. In Section~\ref{sec:newanalysis},
we present the new technique analyzing the performance of the first
algorithm of Jain \etal~\cite{JainMMSV03} for the \smflp.
Section~\ref{sec:segundo} applies the technique in the analysis of the
second algorithm of algorithm of Jain \etal~\cite{JainMMSV03}.
Section~\ref{sec:MahdianYZ} analyzes the performance of the algorithm
of Mahdian, Ye, and Zhang~\cite{MahdianYZ06} for the \smflp. In
Section~\ref{sec-optimal}, we present the analysis of the algorithm of 
Li~\cite{Li2011}, which happens to be optimal for \smflp, according 
to the complexity result we present in Section~\ref{sec:complexity}.
Finally we make some concluding remarks in Section~\ref{sec:conclu}.

\section{A new factor-revealing analysis}\label{sec:newanalysis}

We analyze the algorithms of Jain \etal~\cite{JainMMSV03} using
a new systematic factor-revealing technique.
For each algorithm, Jain \etal~\cite{JainMMSV03} analysis uses a
family of factor-revealing LPs parameterized by some $k$. The optimal
value $z_k$ of the corresponding LP in the family is such that
$\sup_{k\ge1} z_k$ is the approximation factor of the algorithm. 
Thus each value $z_k$ is a \emph{lower} bound on the approximation
factor and one has to analytically upper bound $\sup_{k\ge1} z_k$ to
obtain an approximation factor. This is a nontrivial analysis, since
it is done by guessing a general suboptimal dual solution for the LP,
usually inspired by numerically obtained dual LP solutions for small
values of $k$.
In this section, we show how to derive a family of \emph{upper
  bound factor-revealing programs} (UPFRP) parameterized by some $t$,
so that, for any given~$t$, the optimal value $x_t$ of one such
program is an upper bound on $\sup_{k\ge1} z_k$. Obtaining a UPFRP
and solving it using a computer is much simpler and more
straightforward than using an analytical proof to obtain the
approximation factor, since this does not include a guessing step and
a manual verification of the feasibility of the
solution. Additionally, as a property of the UPFRPs, we may tighten
the obtained factor by solving the LP for larger values of~$t$. In
fact, in some cases (see Theorem~\ref{theorem-alg1-conv} below), the
lower and upper bound factor-revealing programs converge, that is,
$\sup_{k\ge1} z_k = \inf_{t \ge 1} x_t$.
%

We use a UPFRP to show that, when applied to \smflp instances, the
first algorithm of Jain \etal~\cite{JainMMSV03}, denoted by $A1$,
is a \MAGIC{$2.87$}-approximation.
For the sake of completeness, the algorithm is described in the following.

\paragraph{Algorithm $A1$ $(C,F,c,f)$~\cite{JainMMSV03}}
\begin{enumerate}
\item Set $U:=C$, meaning that every facility starts unopened, and every city
unconnected. Each city $j$ has some budget $\alpha_j$, initially 0, and, at
every moment, the budget that an unconnected city $j$ offers to some unopened
facility $i$ equals to $\max(\alpha_j - c_{ij}, 0)$.
\item While $U \ne \emptyset$, the budget of each unconnected city is increased
continuously until one of the following events occur:
  \begin{enumerate}
    \item For some unconnected city $j$ and some open facility $i$, $\alpha_j = c_{ij}$.
    In this case, connect city $j$ to facility $i$ and remove $j$ from $U$.
    \item For some unopened facility $i$, $\sum_{j \in U} \max(\alpha_j - c_{ij},0) = f_i$.
    In this case, open facility~$i$ and, for every unconnected city $j$ with
    $\alpha_j \ge c_{ij}$, connect $j$ to $i$ and remove $j$ from~$U$.
  \end{enumerate}
\end{enumerate}

\medskip

The analysis presented by Jain \etal~\cite{JainMMSV03} uses the dual fitting
method. That is, their algorithms produce not only a solution for the \mflp, but
also a vector $\alpha=(\alpha_1,\ldots,\alpha_{|C|})$ such that the value of the
solution produced is equal to $\sum_j \alpha_j$. Moreover, for the first
algorithm, following the dual fitting method, Jain \etal~\cite{JainMMSV03}
proved that the vector $\alpha/1.861$ is a feasible solution for the dual linear
program presented as~(3) in~\cite{JainMMSV03}, concluding that the algorithm is
a $1.861$-approximation for the \mflp. To present a similar analysis for the
\smflp, we use the same definitions and follow the steps of Jain \etal~analysis.
We start by adapting Lemma~3.2 from~\cite{JainMMSV03} for a squared metric.

\newcommand{\PasteLemmaAlgOneMetric}{
For every facility $i$, cities $j$ and $j'$, and vector $\alpha$ obtained by the
first algorithm of Jain~\etal~\cite{JainMMSV03} given an instance of the \smflp,
$
\sqrt{\alpha_j} \le \sqrt{\alpha_{j'}} +\sqrt{c_{ij'}} + \sqrt{c_{ij}}.
$%
}
\begin{lemma}\label{lemma-alg1-metric}
\PasteLemmaAlgOneMetric
\end{lemma}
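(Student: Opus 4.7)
The plan is to follow the structure of the original proof of Lemma~3.2 of Jain \etal~\cite{JainMMSV03} for the metric case and replace the single invocation of the triangle inequality by the squared metric inequality $\sqrt{c_{ij}} \le \sqrt{c_{ij'}} + \sqrt{c_{i'j'}} + \sqrt{c_{i'j}}$. The argument is entirely about the dynamics of Algorithm $A1$ combined with a one-line geometric estimate, so no dual feasibility guess or factor-revealing machinery is needed at this stage.

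First I would dispose of the easy case $\alpha_j \le \alpha_{j'}$: here $\sqrt{\alpha_j} \le \sqrt{\alpha_{j'}}$, and the claim follows since $\sqrt{c_{ij'}}$ and $\sqrt{c_{ij}}$ are non-negative. The interesting case is $\alpha_j > \alpha_{j'}$. I would pick $i'$ to be the facility to which $j'$ is connected at time $\alpha_{j'}$ (whether $i'$ was already open or opens via event~2(b) at that moment). By the algorithm, $c_{i'j'} \le \alpha_{j'}$ and $i'$ is open at time $\alpha_{j'}$. Since $j$ is still unconnected at time $\alpha_{j'}$, step~2(a) forces $j$ to be connected no later than the instant its budget reaches $c_{i'j}$, whence $\alpha_j \le c_{i'j}$ (otherwise $j$ would have been connected to $i'$ at time $\max(\alpha_{j'},c_{i'j}) < \alpha_j$, contradicting the definition of $\alpha_j$).

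Combining $\alpha_j \le c_{i'j}$ with the squared metric inequality applied to the pair $(i,i')$ and $(j,j')$, and using $c_{i'j'} \le \alpha_{j'}$, yields
\[
\sqrt{\alpha_j} \;\le\; \sqrt{c_{i'j}} \;\le\; \sqrt{c_{i'j'}} + \sqrt{c_{ij'}} + \sqrt{c_{ij}} \;\le\; \sqrt{\alpha_{j'}} + \sqrt{c_{ij'}} + \sqrt{c_{ij}},
\]
which is the desired inequality.

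The only step that really needs care is the algorithmic claim $\alpha_j \le c_{i'j}$: once it is established, the squared metric inequality substitutes for the triangle inequality verbatim and no additional technical obstacle arises beyond what was present in the metric setting. The passage from a linear inequality in $\alpha,c$ to one in $\sqrt{\alpha},\sqrt{c}$ is harmless because $\sqrt{\cdot}$ is monotone.
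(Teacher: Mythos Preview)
Your proof is correct and follows essentially the same approach as the paper's own argument: handle the trivial case $\alpha_j \le \alpha_{j'}$, then for $\alpha_j > \alpha_{j'}$ let $i'$ be the facility serving $j'$, derive $\alpha_j \le c_{i'j}$ from the dynamics of Algorithm~$A1$, and finish with the squared metric inequality together with $c_{i'j'} \le \alpha_{j'}$.
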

\begin{proof}
If $\alpha_j \leq \alpha_{j'}$, the inequality obviously holds.  So assume
$\alpha_j > \alpha_{j'}$. Let $i'$ be the facility to which the algorithm
connects city $j'$. Thus $\alpha_{j'} \ge c_{i'j'}$ and facility $i'$
is open at time $\alpha_{j'} <  \alpha_j$. If $\alpha_j > c_{i'j}$, then city
$j$ would have connected to facility $i'$ at some time $t \le
\max(\alpha_{j'},c_{i'j}) < \alpha_j$, and $\alpha_j$ would have stopped
growing then, a contradiction. Hence $\alpha_j \leq c_{i'j}$.
Furthermore, by the squared metric constraint,
$\sqrt{c_{i'j}} \leq \sqrt{c_{i'j'}} + \sqrt{c_{ij'}} + \sqrt{c_{ij}}$.
Therefore $\sqrt{\alpha_j} \le \sqrt{\alpha_{j'}} +\sqrt{c_{ij'}} + \sqrt{c_{ij}}$.
\end{proof}

A facility $i$ is said to be $\gamma$-\emph{overtight} for some
positive~$\gamma$ if, at the end of the algorithm,
\begin{equation} \label{inequality-gamma}
 \sum_j \max\big(\frac{\alpha_j}{\gamma} - c_{ij},0\big) \ \leq \ f_i.
\end{equation}
Observe that, if every facility is $\gamma$-overtight, then the vector
$\alpha/\gamma$ is a feasible solution for the dual linear program presented
as~(3) in~\cite{JainMMSV03}. Jain~\etal\ proved that, for the \mflp, every
facility is $1.861$-overtight. We want to find a $\gamma$ for the \smflp, as
close to $1$ as possible, for which every facility is $\gamma$-overtight.

Fix a facility $i$. Let us assume without loss of generality that
$\alpha_j\geq \gamma\,c_{ij}$ only for the first~$k$ cities.  Following the
lines of Jain \etal~\cite{JainMMSV03}, we want to obtain the so called
\emph{(lower bound) factor-revealing} program. We define a set of
variables $f$, $d_j$, and $\alpha_j$, corresponding to facility cost $f_i$,
distance~$c_{ij}$, and city contribution $\alpha_j$.  Then, we capture the
intrinsic properties of the algorithm using constraints over these
variables. We assume without loss of generality that $\alpha_1 \le \cdots \le
\alpha_k$.  Also, we use Lemma~3.3 from~\cite{JainMMSV03}, that states that
the total contribution offered to a facility at any time is at most its cost,
that is, $\sum_{l = j}^k \max(\alpha_j - d_l, 0) \le f$. Additionally, we have
the inequalities from Lemma~\ref{lemma-alg1-metric}. Subject to all of these
constraints, we want to find the minimum~$\gamma$ such that the facility is
$\gamma$-overtight. In terms of the defined variables, we want the maximum
ratio $\sum_{j=1}^k \alpha_j/(f+\sum_{j=1}^k d_j)$. We obtain the following
lower bound factor-revealing program:
\begin{equation}\label{program-alg1-frlp}
\begin{array}{rlll}
\zz{A1}{k} = & \mbox{\rm max  }  & \frac{\sum_{j=1}^{k} \alpha_j}{f + \sum_{j=1}^{k} d_j}  \\
      & \mbox{\rm s.t.}   & \alpha_j \le \alpha_{j+1}
                          & \quad \forall \; 1 \le j < k \\
      &                   & \sqrt{\alpha_j} \le \sqrt{\alpha_l} + \sqrt{d_j} + \sqrt{d_l}
                          & \quad \forall \; 1 \le j, \, l \le k \\
      &                   & \sum_{l = j}^{k} \max(\alpha_j - d_l, 0) \le f
                          & \quad \forall \; 1 \le j \le k \\
      &                   & \alpha_j, d_j, f \ge 0
                          & \quad \forall \; 1 \le j \le k.
\end{array}
\end{equation}

The next lemma has the same statement of Lemma 3.4 in~\cite{JainMMSV03}, but it
refers to program~\eqref{program-alg1-frlp}. Since the proof is the same, we
omit it.

\begin{lemma}\label{l34}
  Let $\gamma=\sup_{k \geq 1}\zz{A1}{k}$. Every facility is $\gamma$-overtight.
\end{lemma}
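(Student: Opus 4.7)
The plan is to fix an arbitrary facility $i$ and exhibit, from the state of the algorithm at the end of its execution, a feasible point for program~\eqref{program-alg1-frlp} whose objective is exactly the ratio that needs to be bounded in~\eqref{inequality-gamma}; the definition $\gamma = \sup_{k \ge 1} \zz{A1}{k}$ will then yield the conclusion immediately. First I would reindex the cities so that the first $k$ are exactly those with $\alpha_j \ge \gamma\, c_{ij}$, sorted with $\alpha_1 \le \cdots \le \alpha_k$; all remaining cities contribute $0$ to the left-hand side of~\eqref{inequality-gamma} and may be ignored. Then set $f := f_i$ and $d_j := c_{ij}$ for $1 \le j \le k$, taking the $\alpha_j$ produced by the algorithm as the remaining coordinates.

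Next I would verify each constraint of program~\eqref{program-alg1-frlp} on this point. The monotonicity $\alpha_j \le \alpha_{j+1}$ and the nonnegativity hold by construction. The squared-metric inequality $\sqrt{\alpha_j} \le \sqrt{\alpha_l} + \sqrt{d_j} + \sqrt{d_l}$ is exactly Lemma~\ref{lemma-alg1-metric} applied to facility $i$ and the pair of cities $j,l$. For the contribution constraint $\sum_{l=j}^{k} \max(\alpha_j - d_l, 0) \le f$, I would argue that at the instant $t = \alpha_j^{-}$ (just before city $j$ stops growing), each index $l \ge j$ has $\alpha_l \ge \alpha_j$, so city $l$ is still unconnected and offers at least $\max(\alpha_j - c_{il}, 0)$ to facility $i$; by Lemma~3.3 of~\cite{JainMMSV03}, which bounds the total contribution received by any facility at any time by its opening cost, the sum is at most $f_i = f$.

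With feasibility established, the value of this point is $\sum_{j=1}^{k}\alpha_j \big/ \bigl(f_i + \sum_{j=1}^{k} c_{ij}\bigr) \le \zz{A1}{k} \le \gamma$. Rearranging gives $\sum_{j=1}^{k}(\alpha_j/\gamma - c_{ij}) \le f_i$; since every city with index larger than $k$ satisfies $\alpha_j < \gamma\, c_{ij}$ and thus contributes $0$ to~\eqref{inequality-gamma}, this is precisely the $\gamma$-overtightness bound for facility $i$. The main obstacle, modest as it is, lies in the contribution constraint: one must choose the correct moment in the run of the algorithm at which to invoke Lemma~3.3 of~\cite{JainMMSV03} and check that the sum indexed by $l \ge j$ really lower-bounds the load on facility $i$ at that moment; the rest is a direct identification of LP variables with quantities produced by the algorithm.
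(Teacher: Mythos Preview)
Your proposal is correct and is precisely the standard argument the paper defers to: the paper omits the proof, stating that it is identical to that of Lemma~3.4 in~\cite{JainMMSV03}, and what you wrote is exactly that proof adapted to the squared-metric constraint via Lemma~\ref{lemma-alg1-metric}. Your identification of the LP variables with the algorithm's output, the verification of feasibility (ordering, Lemma~\ref{lemma-alg1-metric}, and Lemma~3.3 of~\cite{JainMMSV03} for the contribution bound), and the final rearrangement are all as in the original.
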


Therefore $\sup_{k \geq 1}\zz{A1}{k}$ is an upper bound on the approximation
factor of the algorithm for the \smflp. A slight modification of the example
presented in Theorem 3.5 of~\cite{JainMMSV03} shows that this upper bound is
tight. (Take $c_{ij} = (\sqrt{d_i}+\sqrt{d_j}+\sqrt{\alpha_i})^2$ if $k \geq i 
\neq j$.)

Although the constraints comming from Lemma~\ref{lemma-alg1-metric}
are defined by square roots, they are convex. Indeed, the next lemma,
whose proof is presented in Appendix~\ref{apendice-provas}, 
shows that they can be expressed by an infinite set of linear
inequalities. 


\begin{lemma}\label{corollary-alg1-inequality}
Given an instance of the \smflp, for every facility $i$, cities $j$
and $j'$, the vector $\alpha$ produced by the first algorithm of Jain
\etal~\cite{JainMMSV03} is such that, for every positive $\beta$,
$\gamma$, and~$\delta$, 
$$
\alpha_j \le (1 + \beta  + \frac{1}{\gamma}) \alpha_{j'}
           + (1 + \gamma + \frac{1}{\delta}) c_{ij'}
           + (1 + \delta + \frac{1}{\beta} ) c_{ij}.
$$
\end{lemma}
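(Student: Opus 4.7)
The plan is to start from Lemma~\ref{lemma-alg1-metric}, which gives $\sqrt{\alpha_j} \le \sqrt{\alpha_{j'}} + \sqrt{c_{ij'}} + \sqrt{c_{ij}}$, and eliminate the square roots by squaring followed by a parametric AM-GM bound on each of the three resulting cross terms.

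More concretely, I would first square the inequality from Lemma~\ref{lemma-alg1-metric}. Since both sides are non-negative, this yields
$$
\alpha_j \;\le\; \alpha_{j'} + c_{ij'} + c_{ij} + 2\sqrt{\alpha_{j'} c_{ij'}} + 2\sqrt{\alpha_{j'} c_{ij}} + 2\sqrt{c_{ij'} c_{ij}}.
$$
Each of the three square-root cross terms can then be linearized independently using the weighted AM-GM inequality $2\sqrt{ab} \le \lambda a + b/\lambda$, valid for every $a, b \ge 0$ and $\lambda > 0$. I would apply this to each cross term with a different parameter chosen so that the coefficient pattern matches the target inequality: namely $2\sqrt{\alpha_{j'} c_{ij}} \le \beta\, \alpha_{j'} + c_{ij}/\beta$, $\,2\sqrt{\alpha_{j'} c_{ij'}} \le \alpha_{j'}/\gamma + \gamma\, c_{ij'}$, and $2\sqrt{c_{ij'} c_{ij}} \le \delta\, c_{ij} + c_{ij'}/\delta$.

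Summing these three inequalities together with the linear terms $\alpha_{j'} + c_{ij'} + c_{ij}$ already present on the right-hand side gives exactly the coefficients $(1 + \beta + 1/\gamma)$, $(1 + \gamma + 1/\delta)$, and $(1 + \delta + 1/\beta)$ in front of $\alpha_{j'}$, $c_{ij'}$, and $c_{ij}$ respectively, which is the claimed bound.

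There is essentially no obstacle here beyond bookkeeping: the parameters $\beta$, $\gamma$, $\delta$ can be chosen freely and independently since each AM-GM step involves a distinct pair of variables, so no hidden coupling arises. The only thing to be careful about is the pairing of parameters with cross terms, making sure the coefficients match the asymmetric pattern on the right-hand side of the statement.
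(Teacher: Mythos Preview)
Your proposal is correct and follows essentially the same route as the paper: the paper factors the argument through an auxiliary Lemma~\ref{lemma-ineq} stating that $\sqrt{A}\le\sqrt{B}+\sqrt{C}+\sqrt{D}$ implies $A\le(1+\beta+1/\gamma)B+(1+\gamma+1/\delta)C+(1+\delta+1/\beta)D$, proved exactly by squaring and applying the three parametric AM--GM bounds $2\sqrt{BD}\le\beta B+D/\beta$, $2\sqrt{CB}\le\gamma C+B/\gamma$, $2\sqrt{DC}\le\delta D+C/\delta$, and then invokes Lemma~\ref{lemma-alg1-metric}. The only additional content in the paper is that Lemma~\ref{lemma-ineq} also establishes the converse (used later to generate cutting planes), which is not needed for the present statement.
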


As a consequence, for any candidate solution for
program~\eqref{program-alg1-frlp} not satisfying a constraint comming
from Lemma~\ref{lemma-alg1-metric}, there exists some violated linear
inequality.

\subsection{A first analysis}

Our first step is to relax~\eqref{program-alg1-frlp} into a linear
program. For that, we adjust the objective function as
in~\cite{JainMMSV03}, and we approximate the inequalities with square
roots using inequalities given by Lemma~\ref{corollary-alg1-inequality}. 
For simplicity, here we will use only the inequalities corresponding
to $\beta=\gamma=\delta=1$. With this, we will prove that $\sup_{k
  \geq 1}\zz{A1}{k}$ is not greater than \MAGIC{$3.236$}.  Later, we
will improve the obtained result by using a whole set of inequalities
from Lemma~\ref{corollary-alg1-inequality}, and using a more standard
factor-revealing analysis for the \smflp. The relaxed lower
 factor-revealing linear program is:
\begin{equation}\label{program-alg1-reveal-tres}
\begin{array}{rlll}
\wwr{A1}{k} = & \mbox{\rm max  }  & \sum_{j=1}^{k} \alpha_j \\
      & \mbox{\rm s.t.}  & f + \sum_{j=1}^k d_j \le 1 & \\
      &                        & \alpha_j \le \alpha_{j+1}
                               & \quad \forall \; 1 \le j < k \\
      &                        & \alpha_j \le 3 \alpha_l + 3 d_j + 3 d_l
                               & \quad \forall \; 1 \le j, \, l \le k \\
      &                        & x_{jl} \ge  \alpha_j - d_l
                               & \quad \forall \; 1 \le j \le l \le k \\
      &                        & \sum_{l = j}^{k} x_{jl} \le f
                               & \quad \forall \; 1 \le j \le k \\
      &                        & \alpha_j, d_j, f, x_{jl} \ge 0
                               & \quad \forall \; 1 \le j \le l \le k.
\end{array}
\end{equation}

As \eqref{program-alg1-reveal-tres} is a relaxation of
\eqref{program-alg1-frlp}, we have that $\zz{A1}{k} \leq \wwr{A1}{k}$ and thus
an upper bound on $\sup_{k \geq 1}\wwr{A1}{k}$ is also an upper bound on
$\sup_{k \geq 1}\zz{A1}{k}$. Solving linear
program~\eqref{program-alg1-reveal-tres} using CPLEX for \MAGIC{$k=540$}, we
obtain the next lemma.

\begin{lemma}\label{lemma-alg1-lower-tres}
$\sup_{k \ge 1} \wwr{A1}{k} \ge \MAGIC{3.220}$.
\end{lemma}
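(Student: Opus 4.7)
The plan is to establish the lower bound by exhibiting a feasible solution to program~\eqref{program-alg1-reveal-tres} (for a single specific value of~$k$) whose objective value is at least $3.220$; since $\wwr{A1}{k}$ is by definition the maximum objective of that LP and $\sup_{k\ge 1}\wwr{A1}{k}\ge \wwr{A1}{k}$ for each~$k$, any such feasible solution proves the claim. Concretely, I would fix $k=540$ and instantiate the LP explicitly: it has $\Theta(k^2)$ variables (the $\alpha_j$, $d_j$, $f$, and all $x_{jl}$ for $1\le j\le l\le k$) and $\Theta(k^2)$ constraints (the monotonicity bounds, the linearized squared-metric inequalities from Lemma~\ref{corollary-alg1-inequality} with $\beta=\gamma=\delta=1$, the $x_{jl}$ lower bounds, the facility budget constraints, and the normalization $f+\sum_j d_j\le 1$). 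For $k=540$ this is a problem of about $10^5$ variables and constraints, well within the routine reach of a modern LP solver.

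The first step is purely mechanical: generate the LP file listing variables and constraints exactly as in~\eqref{program-alg1-reveal-tres}. The second step is to call CPLEX and read off the reported primal optimum $\wwr{A1}{540}$; provided the value returned is at least $3.220$, the lemma follows immediately from $\sup_{k\ge 1}\wwr{A1}{k}\ge\wwr{A1}{540}$.

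The only real obstacle is numerical rigor rather than mathematical depth. LP solvers certify feasibility only up to a tolerance, so to be fully rigorous one should either run the solver in exact rational arithmetic, or extract the primal basis produced by CPLEX, re-solve the basis system in exact arithmetic, and then evaluate the objective. Alternatively, one can perturb the returned floating-point solution slightly (shrinking each $\alpha_j$ and enlarging each $x_{jl}$ by a tiny amount, and scaling so that $f+\sum_j d_j\le 1$ holds strictly) and evaluate the resulting objective in exact arithmetic. Since the computed optimum is expected to exceed $3.220$ by a margin far larger than CPLEX's default tolerance, any such post-processing leaves the objective above the threshold, completing the argument. No analytical manipulation beyond invoking the LP solver is required.
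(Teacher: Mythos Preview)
Your proposal is correct and matches the paper's approach exactly: the paper simply states that solving linear program~\eqref{program-alg1-reveal-tres} with CPLEX for $k=540$ yields the lemma. Your additional remarks on numerical certification go beyond what the paper does, but the core argument is identical.
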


To obtain an upper bound on their factor-revealing linear program,
Jain~\etal~\cite{JainMMSV03} presented a general dual solution of a
relaxed version of the lower bound factor-revealing linear
program. This solution is deduced from computational experiments and
empirical results for small values of~$k$. In their analysis, they
guessed 2- and 3-step functions for a set of dual variables, and used
a long verification to show that the value of such solution was not
greater than $1.861$. For the squared metric case, if we use step
functions for the dual variables, the bound on the factor would be as
bad as $3.625$. One can improve the obtained factor to \MAGIC{$3.512$}
by guessing a piecewise function whose pieces are either constants or
hyperboles.

Instead of looking for a good general dual solution, we use an alternative 
analysis and 
derive a linear minimization program from~\eqref{program-alg1-reveal-tres}
whose feasible solutions are upper bounds on $\sup_{k \ge 1} \wwr{A1}{k}$. 
Afterwards, we give an upper bound on the approximation factor by 
presenting a feasible solution for this program of value less than 
\MAGIC{$3.236$}.

The idea is to determine a conical combination of the inequalities
of~\eqref{program-alg1-reveal-tres} that imply
inequality~\eqref{inequality-gamma} for a $\gamma$ as small as possible.  The
linear minimization program will help us to choose the coefficients of such
conical combination.

First, rewrite the third inequality of program~\eqref{program-alg1-reveal-tres},
so that the right-hand side is zero. For each $j$ and $l$, we multiply
the corresponding inequality by $\varphi_{jl}$. Denote by $A$ the sum of all
these inequalities, that is,
\begin{equation*}\label{eq-alg1-323-a}
 \sum_{j=1}^{k} \sum_{l = 1}^{k} \varphi_{jl} (\alpha_j - 3 \alpha_l - 3 d_l - 3 d_j) \le 0.
\end{equation*}


The fourth and fifth inequalities of program~\eqref{program-alg1-reveal-tres}
can be relaxed to the set of inequalities $\sum_{i = j}^{l} (\alpha_j - d_i)
\le f$, one for each $l$ such that $j \le l \le k$. For each~$j$ and~$l$,
we multiply the corresponding inequality by $\theta_{jl}$ and denote by $B$
the inequality resulting of summing them up, that is,
\begin{equation*}\label{eq-alg1-323-b}
 \sum_{j=1}^{k} \sum_{l = j}^{k} \theta_{jl} \sum\nolimits_{i = j}^{l} (\alpha_j - d_i)
 \le \left(  \sum_{j=1}^{k} \sum_{l = j}^{k} \theta_{jl} \right) f.
\end{equation*}

The coefficients of $\alpha_j$ in $A$ and $B$ are, respectively,
\begin{equation*}\label{eq-alg1-323-coefa}
  \coefa{A}{j} = \sum_{l=1}^{k} (\varphi_{jl} - 3 \varphi_{lj})
  \qquad\mbox{ and }\qquad
  \coefa{B}{j} = \sum_{l=j}^k (l - j + 1) \theta_{jl},
\end{equation*}
and the coefficients of $-d_j$ in $A$ and $B$ are, respectively,
\begin{equation*}\label{eq-alg1-323-coefd}
  \coefd{A}{j} = \sum_{l=1}^{k} 3 (\varphi_{jl} + \varphi_{lj})
  \qquad\mbox{ and }\qquad
  \coefd{B}{j} = \sum_{i=1}^{j} \sum_{l = j}^{k} \theta_{il}.
\end{equation*}

Now, we sum inequalities $A$ and $B$ and obtain a new inequality $C$:
\begin{equation}\label{eq-alg1-323-c}
 \sum_{j=1}^{k} \coefa{C}{j} \; \alpha_j - \sum_{j=1}^{k} \coefd{C}{j} \; d_j \le \coeff{C} \; f.
\end{equation}

We want to find values for $\gamma$, $\theta_{jl}$, and $\varphi_{jl}$ so that
the corresponding coefficients of~$C$ are such that inequality~\eqref{eq-alg1-323-c}
implies, for sufficiently large $k$, that
\begin{equation}\label{eq-alg1-323-objective}
 \sum_{j=1}^{k} \alpha_j - \gamma \sum_{j=1}^{k} d_j \le \gamma f.
\end{equation}
Moreover, we want $\gamma$ as small as possible.
To obtain inequality~\eqref{eq-alg1-323-objective} from
inequality~\eqref{eq-alg1-323-c}, it is enough that, for each $j$, coefficient
$\coefa{C}{j} \ge 1$, $\coefd{C}{j} \le \gamma$, and $\coeff{C} \le \gamma$.
Hence, this can be expressed by the following linear program.
\begin{equation}\label{program-alg1-dual-tres}
\begin{array}{rlll}
y_k \ = & \mbox{\rm min }   & \gamma  \\
      & \mbox{\rm s.t.}  & \coefa{C}{j} \ge 1
                               & \quad \forall \; 1 \le j \le k \\
      &                        & \coefd{C}{j} \le \gamma
                               & \quad \forall \; 1 \le j \le k \\
      &                        & \coeff{C} \le \gamma
                               & \\
      &                        & \varphi_{jl} \ge 0
                               & \quad \forall \; 1 \le j, \, l \le k \\
      &                        & \theta_{jl} \ge 0
                               & \quad \forall \; 1 \le j \le l \le k.
\end{array}
\end{equation}

The interested reader may observe that program~\eqref{program-alg1-dual-tres}
is the dual of a relaxed version of the lower bound factor-revealing
linear program~\eqref{program-alg1-reveal-tres}.  Therefore, its optimal value
is an upper bound on the optimal value of~\eqref{program-alg1-reveal-tres}, 
that is, $\wwr{A1}{k} \le y_k$ for every $k$.



\newcommand{\PasteLemmaAlgOneThreeTwoThree}{
$\sup_{k \geq 1}\wwr{A1}{k} \le \MAGIC{3.236}$.
}
\begin{lemma}\label{lemma-alg1-323}
\PasteLemmaAlgOneThreeTwoThree
\end{lemma}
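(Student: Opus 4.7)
The plan is to exploit the relation $\wwr{A1}{k}\le y_k$ established in the paragraph before the lemma: it suffices to exhibit, for every $k\ge 1$, a triple $(\gamma,\varphi,\theta)$ feasible for program~\eqref{program-alg1-dual-tres} with $\gamma\le 3.236$. Since~\eqref{program-alg1-dual-tres} is a linear program with $O(k^2)$ variables and $O(k)$ constraints, I would first solve it numerically with CPLEX for a moderately large $k$ (say, the same $k=540$ used in Lemma~\ref{lemma-alg1-lower-tres}), observe that the optimum is very close to the target value~$3.23$, and inspect the support and qualitative shape of the resulting optimal multipliers $\varphi_{jl}^*$ and $\theta_{jl}^*$.

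From this numerical data I would extract a closed-form parametric family of multipliers $\varphi_{jl}=\varphi(j,l,k)$ and $\theta_{jl}=\theta(j,l,k)$ (likely piecewise: constant or smooth on a few regions separated by thresholds linear in $k$), scaled so that the guessed solution mimics the computed optimum. Plugging this family into the three explicit formulas for $\coefa{C}{j}$, $\coefd{C}{j}$, and $\coeff{C}$ stated just above~\eqref{eq-alg1-323-c}, the feasibility of~\eqref{program-alg1-dual-tres} reduces to three families of scalar inequalities: one lower bound (on $\coefa{C}{j}$) and two upper bounds (on $\coefd{C}{j}$ and $\coeff{C}$), each parameterized by $j$ and $k$. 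These can be verified by direct double summation, telescoping, and standard bounds on partial sums; the inequalities should be uniform in $k$, so that $\gamma$ can be taken independent of $k$.

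Finally I would compute the resulting value of $\gamma$ explicitly as a function of the parameters of the piecewise family, optimize it numerically, and check that the minimum lies below $3.236$. Combining this with $\sup_{k\ge1}\wwr{A1}{k}\le \sup_{k\ge1} y_k$ then yields the lemma.

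The main obstacle is the guessing step. Because the target slack $3.236-\sup_k y_k$ is small (Lemma~\ref{lemma-alg1-lower-tres} already forces $y_k\ge 3.220$), any parametric family that is too crude will violate one of the constraints or overshoot the bound. The paper's own remarks make this pitfall explicit: ordinary $2$- or $3$-step functions for $\theta$ give only $3.625$, and even a refinement by hyperbolic pieces stops at $3.512$. So the crucial technical work is identifying a family with enough degrees of freedom (probably at least piecewise-linear in $j/k$ with a handful of break points determined by the numerical solution) to close the remaining gap, while still being simple enough that the feasibility inequalities can be checked in closed form. As a safety net, if the closed-form verification turns out to be too delicate, one can fall back to a hybrid argument: use the explicit family on the ``tail'' indices $j$ near $1$ and $k$, where the behaviour is predictable, and absorb the remaining middle range into a small residual checked by a single LP solve.
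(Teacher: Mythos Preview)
Your plan is the classical ``guess a closed-form dual solution and verify it for all $k$'' strategy, and you correctly identify its weak point: the guessing step is hard, and the paper itself reports that step functions only reach $3.625$ and hyperbolic pieces $3.512$. You have not supplied the guess, so the proposal as written is a strategy outline rather than a proof, and the evidence suggests that carrying it to $3.236$ would require a fairly intricate parametric family.

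The paper's proof takes a genuinely different route whose whole purpose is to eliminate the guessing step. First it observes that replicating a solution of~\eqref{program-alg1-reveal-tres} shows $\wwr{A1}{p}\le\wwr{A1}{pt}$, so it suffices to bound $y_k$ for $k=pt$ with $t$ fixed. Then, instead of guessing a functional form, it \emph{restricts} the dual multipliers to be block-constant in the index groups $\hat n=\lceil n/p\rceil$, setting $\varphi_{jl}=\varphi'_{\hat j\hat l}/p$ and $\theta_{jl}=\theta'_{\hat j\hat l}/p^2$. Substituting this ansatz into the three coefficient formulas and bounding the resulting sums shows that feasibility of~\eqref{program-alg1-dual-tres} is implied, uniformly in $p$, by a finite set of linear inequalities on the primed variables that depend only on $t$. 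Those inequalities define the upper bound factor-revealing program~\eqref{program-alg1-upfrlp-tres}; its optimum $\xxr{A1}{t}$ is then an upper bound on $y_{pt}$ for every $p$, hence on $\sup_k\wwr{A1}{k}$. A single CPLEX solve at $t=800$ gives $\xxr{A1}{800}<3.236$.

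In short: where you would search for the right shape of $\varphi,\theta$ by inspection, the paper lets an LP solver do it, after first arranging (via the block-constant ansatz and the replication argument) that one finite LP suffices for all $k$. That reduction is the missing idea in your proposal and is precisely the content of the UPFRP technique.
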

\begin{proof}
  We start by observing that $\sup_{k \geq 1}\wwr{A1}{k}$ does not decrease if
  we restrict attention to values of $k$ that are multiples of a fixed
  positive integer~$t$. Indeed, for an arbitrary positive integer~$p$, by
  making $t$ replicas of a solution of~\eqref{program-alg1-reveal-tres}
  for~$k=p$, and scaling the variables by~$1/t$, we obtain a solution
  of~\eqref{program-alg1-reveal-tres} for~$k=pt$, that is, we deduce that
  $\wwr{A1}{p} \le \wwr{A1}{pt}$. So we may assume that $k$ has the form $k =
  pt$ with $p$ and $t$ positive integers, and our goal is to prove that 
  $\wwr{A1}{k} \le \MAGIC{3.236}$.

  We will use program~\eqref{program-alg1-dual-tres} to obtain a tight
  upper bound on $\wwr{A1}{k}$. The size of this program however
  depends on $k$, which can be arbitrarily large. So we will use a
  scaling argument to create another linear minimization program with
  a fixed number (depending only on $t$) of variables, and obtain a
  feasible solution for program~\eqref{program-alg1-dual-tres} from a
  solution for this smaller program. Then, we will show that the value
  of the generated solution for~\eqref{program-alg1-dual-tres} is
  bounded by the value of the small solution.

  Consider variables $\gamma' \in \RNN$, $\varphi'_{jl} \in \RNN$ for $1 \le
  j, \, l \le t$, and $\theta'_{jl} \in \RNN$ for $1 \le j \le l \le t$. For
  an arbitrary $n$, let $\hat{n} = \lceil \frac{n}{p} \rceil$. We obtain a
  candidate solution for program~\eqref{program-alg1-dual-tres} by taking
\begin{equation}\label{eq-alg1-323-vardef}
  \varphi_{jl} = \frac{\varphi'_{\hat{j} \hat{l}}}{p}
  \mbox{,}\qquad
  \theta_{jl} = \frac{\theta'_{\hat{j}\hat{l}}}{p^2}\mbox{,}
  \qquad\mbox{and}\qquad
  \gamma = \gamma'.
\end{equation}

Let us calculate each coefficient of $C$ for this solution.

\begin{align*}
 \coefa{C}{j} &=   \sum_{l=1}^{k} (\varphi_{jl} - 3 \varphi_{lj})
                 + \sum_{l=j}^{k} (l - j + 1) \theta_{jl} \\
              &=   \sum_{l=1}^{k} (\frac{\varphi'_{\hat{j}\hat{l}}}{p} - 3 \frac{\varphi'_{\hat{l}\hat{j}}}{p})
                 + \sum_{l=j}^{k} (l - j + 1) \frac{\theta'_{\hat{j}\hat{l}}}{p^2} \\
              &\ge \sum_{l=1}^{pt} (\frac{\varphi'_{\hat{j}\hat{l}}}{p} - 3 \frac{\varphi'_{\hat{l}\hat{j}}}{p})
                 + \sum_{l=p\,\hat{j} + 1}^{pt} (l - p\,\hat{j}) \frac{\theta'_{\hat{j}\hat{l}}}{p^2} \\
              &=   \sum_{l'=1}^{t} p (\frac{\varphi'_{\hat{j}l'}}{p} - 3 \frac{\varphi'_{l'\hat{j}}}{p})
                 + \sum_{l'=\hat{j} + 1}^{t} \frac{\theta'_{\hat{j}l'}}{p^2} \sum_{i = 0}^{p - 1} (p\,l' - i - p\,\hat{j}) \\
              &=   \sum_{l'=1}^{t} (\varphi'_{\hat{j}l'} - 3 \varphi'_{l'\hat{j}})
                 + \sum_{l'=\hat{j} + 1}^{t} \frac{\theta'_{\hat{j}l'}}{p^2} (p^2\,l' - \frac{p(p-1)}{2} - p^2\,\hat{j}) \\
              &\ge \sum_{l'=1}^{t} (\varphi'_{\hat{j}l'} - 3 \varphi'_{l'\hat{j}})
                 + \sum_{l'=\hat{j} + 1}^{t} (l' - \hat{j} - \frac{1}{2}) \theta'_{\hat{j}l'}.
\end{align*}

\begin{align*}
 \coefd{C}{j} &= \sum_{l=1}^{k} 3 (\varphi_{jl} + \varphi_{lj})
               + \sum_{i=1}^{j} \sum_{l = j}^{k} \theta_{il} \\
              &= \sum_{l=1}^{pt} 3 (\frac{\varphi'_{\hat{j} \hat{l}}}{p} + \frac{\varphi'_{\hat{l} \hat{j}}}{p})
               + \sum_{i=1}^{j} \sum_{l = j}^{pt} \frac{\theta'_{\hat{i}\hat{l}}}{p^2} \\
              &\le \sum_{l'=1}^{t} p \cdot 3 (\frac{\varphi'_{\hat{j} l'}}{p} + \frac{\varphi'_{l' \hat{j}}}{p})
               + \sum_{i'=1}^{\hat{j}} p \cdot \sum_{l' = \hat{j}}^{t} p \cdot \frac{\theta'_{i'l'}}{p^2} \\
              &= \sum_{l'=1}^{t} 3 (\varphi'_{\hat{j} l'} + \varphi'_{l' \hat{j}})
               + \sum_{i'=1}^{\hat{j}} \sum_{l' = \hat{j}}^{t} \theta'_{i'l'}.
\end{align*}

\begin{align*}
 \coeff{C}    &=    \sum_{j=1}^{k} \sum_{l= j}^{k} \theta_{jl}
               =    \sum_{j=1}^{pt} \sum_{l = j}^{pt} \frac{\theta'_{\hat{j}\hat{l}}}{p^2}
               \le  \sum_{j'=1}^{t} p \cdot \sum_{l'=\hat{j}}^{t} p \cdot \frac{\theta'_{j'l'}}{p^2}
               =    \sum_{j'=1}^{t} \sum_{l' = \hat{j}}^{t} \theta'_{j'l'}.
\end{align*}

Now, we want to find the minimum value of $\gamma'$ and values for $\varphi'_{jl}$ and
$\theta'_{jl}$ such that the candidate solution for program~\eqref{program-alg1-dual-tres}
is feasible. We may define the following linear program,
named the \emph{upper bound factor-revealing program}.
\begin{equation}\label{program-alg1-upfrlp-tres}
\begin{array}{rllll}
\xxr{A1}{t} = & \mbox{\rm min }   & \gamma'  \\
      & \mbox{\rm s.t.}  & \sum_{l=1}^{t} (\varphi'_{jl} - 3 \varphi'_{lj})
                               + \sum_{l=j + 1}^{t} (l - j - \frac{1}{2}) \theta_{jl} \ge 1
                               & \quad \forall \; 1 \le j \le t \\
      &                        & \sum_{l=1}^{t} 3 (\varphi'_{j l} + \varphi'_{l j})
                               + \sum_{i=1}^{j} \sum_{l = j}^{t} \theta'_{il}   \le \gamma'
                               & \quad \forall \; 1 \le j \le t \\
      &                        & \sum_{j=1}^{t} \sum_{l = j}^{t} \theta'_{jl} \le \gamma'
                               & \\
      &                        & \varphi'_{jl} \ge 0
                               & \quad \forall \; 1 \le j, \, l \le t \\
      &                        & \theta'_{jl} \ge 0
                               & \quad \forall \; 1 \le j  \le l \le t.
\end{array}
\end{equation}

Consider an optimal solution for program~\eqref{program-alg1-dual-tres}.
Replacing it in~\eqref{eq-alg1-323-c}, that is, in inequality $C$, we obtain
$\sum_{j=1}^{k} \alpha_j - \gamma \sum_{j=1}^{k} d_j \le \gamma f$. Thus,
$\wwr{A1}{k} \le \gamma = y_k$. Now, consider an optimal solution for
program~\eqref{program-alg1-upfrlp-tres} and the corresponding generated
solution for program~\eqref{program-alg1-dual-tres}. We obtain $y_k \le \gamma =
\gamma' = \xxr{A1}{t}$ and conclude that $\wwr{A1}{k} \le \xxr{A1}{t}$, 
and that holds for every positive integer $k$.

Using CPLEX to solve program \eqref{program-alg1-upfrlp-tres}, we obtained
\MAGIC{$\xxr{A1}{800} \approx 3.23586 < 3.236$}, and this concludes the proof
of Lemma~\ref{lemma-alg1-323}.
\end{proof}

\subsection{An improved factor-revealing analysis}

In Lemma~\ref{lemma-alg1-323}, we obtained the minimization program
(\ref{program-alg1-upfrlp-tres}) from a conical combination of constraints
from program~\eqref{program-alg1-reveal-tres}
that bounds the approximation factor.
This process is similar to obtaining the dual and using a
scaling argument. Indeed, we propose a systematic way to obtain
an upper bound factor-revealing program.

Consider the dual program of a traditional maximization
factor-revealing linear program for some~$k$. Take $k$ in the form $k
= pt$, for a fixed $t$. We want to create a minimization program that
mimics the dual, but depends only on $t$ and bounds the dual optimal
value for every~$k$. The idea is to constrain the variables of the
small program to obtain a feasible solution for the dual program. To
obtain a linear program independent of $k$, we scale the variables 
by~$p$.  The strategy to obtain an upper bound factor-revealing program
may be summarized as follows:
\begin{enumerate}
  \item obtain the dual $P(k)$ of the lower bound factor-revealing linear program;
  \item consider a block variable $x_i'$ for variables $x_{(i-1)p+1}, \ldots, x_{(i-1)p + p}$ of $P(k)$;
  \item identify each variable $x_i$ with the block variable $x_{\lceil i / p \rceil}'$ scaled by $p$; 
  \item replace identified variables in $P(k)$, canceling factors $p$.
\end{enumerate}
Denote the resulting program by $P'(t)$. If $P'(t)$ depends only on $t$, both
in number of variables and constraints, then any feasible
solution of $P'(t)$ is an upper bound on the solution of $P(pt)$ for every~$p$.
Also, if it is the case that the value of $P(k)$ is not greater than the value of $P(kt)$, for
every $t$, then a solution of $P'(t)$ for any $t$ is also a bound on the
approximation factor. Therefore, we call $P'(t)$ an \emph{upper bound
factor-revealing program}.

\medskip

Although program~\eqref{program-alg1-frlp} is nonlinear, we can still
use the presented strategy.  If the nonlinear constraint is convex, we
can approximate it by using a set of linear inequalities, and
calculate the dual normally. In order to derive a better upper bound
factor-revealing linear program, this time we will use a whole set of
linear inequalities. Consider $m$ tuples $(\beta_i, \gamma_i,
\delta_i)$ of positive real numbers and $B_i = 1 + \beta_i +
\frac{1}{\gamma_i}$, $C_i = 1 + \gamma_i + \frac{1}{\delta_i}$, $D_i =
1 + \delta_i + \frac{1}{\beta_i}$ for $1 \le i \le m$.
Using Lemma~\ref{corollary-alg1-inequality}, we insert inequalities
corresponding to the given tuples, replacing the nonlinear constraint, and
obtain that $\zz{A1}{k} \le \ww{A1}{k}$, where $\ww{A1}{k}$ is given by
\begin{equation}\label{program-alg1-frlp-relax-ineq}
\begin{array}{rlll}
\ww{A1}{k} = & \mbox{\rm max  }  & \sum_{j=1}^{k} \alpha_j \\
      & \mbox{\rm s.t.}  & f + \sum_{j=1}^k d_j \le 1 & \\
      &                        & \alpha_j \le \alpha_{j+1}
                               & \quad \forall \; 1 \le j < k \\
      &                        & \alpha_j \le B_i  \alpha_l + C_i  d_j + D_i  d_l
                               & \quad \forall \; 1 \le j, \, l \le k ,\quad 1 \le i \le m\\
      &                        & x_{jl} \ge  \alpha_j - d_l
                               & \quad \forall \; 1 \le j \le l \le k \\
      &                        & \sum_{l = j}^{k} x_{jl} \le f
                               & \quad \forall \; 1 \le j \le k \\
      &                        & \alpha_j, d_j, f, x_{jl} \ge 0
                               & \quad \forall \; 1 \le j \le l \le k.
\end{array}
\end{equation}

The following lemma gives a lower bound on the approximation factor
of the algorithm for the \smflp using a cutting plane insertion strategy.

\newcommand{\PasteAlgOneFrlpCut}{
$\sup_{k \ge 1} \zz{A1}{k}  \ge \MAGIC{2.86}$.
}
\begin{lemma}\label{lemma-alg1-frlp-cut}
\PasteAlgOneFrlpCut
\end{lemma}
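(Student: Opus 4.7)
The plan is to produce a lower bound on $\sup_{k\ge1}\zz{A1}{k}$ by numerically computing a feasible solution of the nonlinear program~\eqref{program-alg1-frlp} of value at least $2.86$. The tool is a cutting-plane procedure on top of the LP relaxation~\eqref{program-alg1-frlp-relax-ineq}, which exploits the fact that the squared-metric constraint $\sqrt{\alpha_j}\le\sqrt{\alpha_l}+\sqrt{d_j}+\sqrt{d_l}$ defines a convex region whose boundary is exactly the envelope of the linear inequalities $\alpha_j\le B_i\alpha_l+C_i d_j+D_i d_l$ given by Lemma~\ref{corollary-alg1-inequality} as $(\beta_i,\gamma_i,\delta_i)$ ranges over positive triples. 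Hence, by iteratively adding well-chosen cuts from this family, the relaxation can be tightened arbitrarily close to the true feasible region.

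Concretely, I would fix a large~$k$ (comparable to the $k=540$ used in Lemma~\ref{lemma-alg1-lower-tres}) and start with $m=1$ and $(\beta_1,\gamma_1,\delta_1)=(1,1,1)$, which already yields value at least $3.220$ by Lemma~\ref{lemma-alg1-lower-tres}. At each iteration, I would solve~\eqref{program-alg1-frlp-relax-ineq}, extract the current values $(\alpha,d,f,x)$, and locate the pair $(j,l)$ with the largest violation of $\sqrt{\alpha_j}\le\sqrt{\alpha_l}+\sqrt{d_j}+\sqrt{d_l}$. For that pair, I would determine the triple $(\beta^*,\gamma^*,\delta^*)$ minimizing the right-hand side of the inequality of Lemma~\ref{corollary-alg1-inequality} at the current solution; setting the partial derivatives with respect to $\beta,\gamma,\delta$ to zero yields a closed-form minimizer. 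I would append the resulting linear inequality to the LP, re-solve, and repeat. Since every cut is valid for~\eqref{program-alg1-frlp} and strictly separates the current infeasible point, the LP optimum decreases monotonically towards $\zz{A1}{k}$.

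Once the maximum violation across all squared-metric constraints falls below a small tolerance $\varepsilon$, the resulting solution is near-feasible for~\eqref{program-alg1-frlp}, and a small perturbation (for instance, a uniform rescaling of the $\alpha_j$) turns it into a truly feasible one whose objective deviates from the LP optimum by an amount vanishing with $\varepsilon$. The value of this rectified solution is a valid lower bound on $\zz{A1}{k}$, hence on $\sup_{k\ge1}\zz{A1}{k}$. The main obstacle is numerical rather than conceptual: one must choose $k$ large enough (since $\zz{A1}{k}$ approaches its supremum only as $k\to\infty$) and insert a sufficiently rich set of cuts so that the LP optimum, even after the feasibility-restoration step, stays above $2.86$. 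Convexity of the square root together with Lemma~\ref{corollary-alg1-inequality} guarantees both that separating cuts always exist and that the restoration is routine, so in practice a moderate number of iterations suffices to exhibit the desired feasible solution and complete the proof.
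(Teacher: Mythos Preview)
Your proposal is correct and follows essentially the same approach as the paper: solve the LP relaxation~\eqref{program-alg1-frlp-relax-ineq}, iteratively separate violated square-root constraints via the linear cuts of Lemma~\ref{corollary-alg1-inequality}, and use the resulting (near-)feasible point as a certificate for the lower bound. The paper reports $\zz{A1}{700}\approx 2.86099$, so you may need $k$ somewhat larger than the $540$ you suggest, but otherwise the method is identical.
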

\begin{proof}
Although program~\eqref{program-alg1-frlp} contains nonlinear
constraints, we may use linear program packages to solve it. We start
by solving program~\eqref{program-alg1-frlp-relax-ineq} with a fixed
number of inequalities. Then, we employ a cutting plane insertion
strategy: if the obtained solution violates some inequality with
square roots of~\eqref{program-alg1-frlp}, we derive a cutting plane
using Lemma~\ref{corollary-alg1-inequality}, and resolve the linear
program with this additional constraint. Using CPLEX with the cutting
plane strategy, we obtained \MAGIC{$\zz{A1}{700} \approx 2.86099 >
  2.86$}.  
\end{proof}

Now, we can bound  the approximation factor of the algorithm
using an upper bound factor-revealing program.

\newcommand{\PasteLemmaAlgOneUpfrlpCut}{
$\sup_{k \ge 1} \zz{A1}{k} \le \MAGIC{2.87}$.
}
\begin{lemma}\label{lemma-alg1-upfrlp-cut}
\PasteLemmaAlgOneUpfrlpCut
\end{lemma}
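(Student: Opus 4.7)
The plan is to mirror the derivation in the proof of Lemma~\ref{lemma-alg1-323}, but starting from the tighter relaxation~\eqref{program-alg1-frlp-relax-ineq} instead of the single-inequality relaxation~\eqref{program-alg1-reveal-tres}. Since for any choice of $m$ tuples $(\beta_i,\gamma_i,\delta_i)$ we have $\zz{A1}{k}\le\ww{A1}{k}$, it suffices to exhibit, for a well-chosen finite collection of tuples, a feasible solution of value less than $2.87$ for the upper bound factor-revealing program obtained from~\eqref{program-alg1-frlp-relax-ineq} via the four-step recipe at the beginning of the subsection.

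First I would form a conic combination of the constraints of~\eqref{program-alg1-frlp-relax-ineq} analogous to the one in the proof of Lemma~\ref{lemma-alg1-323}: for each tuple $i$ and each pair $(j,l)$, multiply the squared-metric surrogate $\alpha_j - B_i\alpha_l - C_id_j - D_id_l\le 0$ by a non\-negative multiplier $\varphi^{(i)}_{jl}$; multiply $\sum_{i=j}^l(\alpha_j - d_i)\le f$ by $\theta_{jl}$; and sum to obtain an inequality $C$ of the form~\eqref{eq-alg1-323-c}. The only change with respect to Lemma~\ref{lemma-alg1-323} is that the coefficients $1,3,3$ of $\alpha_j,\alpha_l,d_j,d_l$ get replaced by $1,B_i,C_i,D_i$ and that a separate family of multipliers is introduced for each tuple. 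Demanding $\coefa{C}{j}\ge 1$, $\coefd{C}{j}\le\gamma$, and $\coeff{C}\le\gamma$ gives the direct analogue of program~\eqref{program-alg1-dual-tres}.

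Next, following steps (2)--(4) of the recipe, I would set $k=pt$, define $\hat n=\lceil n/p\rceil$, scale $\varphi^{(i)}_{jl}=\varphi'^{(i)}_{\hat j\hat l}/p$ and $\theta_{jl}=\theta'_{\hat j\hat l}/p^2$, and reproduce the two chains of (in)equalities in the proof of Lemma~\ref{lemma-alg1-323} that bound $\coefa{C}{j}$ from below and $\coefd{C}{j},\coeff{C}$ from above by expressions depending only on $t$. This yields an upper bound factor-revealing LP whose constraints read, for $1\le j\le t$,
\begin{align*}
\sum_{l=1}^{t}\sum_{i=1}^{m}\bigl(\varphi'^{(i)}_{jl} - B_i\,\varphi'^{(i)}_{lj}\bigr)
  + \sum_{l=j+1}^{t}\bigl(l-j-\tfrac{1}{2}\bigr)\theta'_{jl} &\ge 1,\\
\sum_{l=1}^{t}\sum_{i=1}^{m}\bigl(C_i\,\varphi'^{(i)}_{jl} + D_i\,\varphi'^{(i)}_{lj}\bigr)
  + \sum_{i=1}^{j}\sum_{l=j}^{t}\theta'_{il} &\le \gamma',\\
\sum_{j=1}^{t}\sum_{l=j}^{t}\theta'_{jl} &\le \gamma',
\end{align*}
with $\varphi'^{(i)}_{jl}\ge 0$ and $\theta'_{jl}\ge 0$, to be minimized in $\gamma'$. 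The replication argument used in Lemma~\ref{lemma-alg1-323} to pass from $k=pt$ to arbitrary~$k$ applies verbatim to~\eqref{program-alg1-frlp-relax-ineq}, since replicating and scaling a primal solution preserves feasibility of the new linear squared-metric surrogates; hence any feasible $\gamma'$ of the program above is an upper bound on $\sup_{k\ge 1}\ww{A1}{k}$ and therefore on $\sup_{k\ge 1}\zz{A1}{k}$.

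The main obstacle is that a fixed, handpicked set of tuples $(\beta_i,\gamma_i,\delta_i)$ is unlikely to push the bound all the way down to $2.87$; the gap between $2.87$ and the lower bound $2.86$ of Lemma~\ref{lemma-alg1-frlp-cut} is narrow, so the linear envelope must be close to the true convex constraint along the binding solutions. To handle this I would adopt the dual counterpart of the cutting plane loop used in Lemma~\ref{lemma-alg1-frlp-cut}: start with a few tuples, solve the current upper bound program with CPLEX at a moderate~$t$, reconstruct the associated primal candidate (the optimal $\alpha,d,f$ for the small program one is dualizing), check whether any squared-metric inequality of~\eqref{program-alg1-frlp} is violated, and if so add the corresponding tuple from Lemma~\ref{corollary-alg1-inequality} and re-solve. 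Running this loop until no violation is found and taking $t$ of the same order as in Lemma~\ref{lemma-alg1-323} (several hundred) should yield $\gamma'<2.87$ and conclude the proof.
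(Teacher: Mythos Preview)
Your overall plan matches the paper's: derive an upper bound factor-revealing LP from~\eqref{program-alg1-frlp-relax-ineq} via the four-step recipe, then drive the bound down by iteratively adding linear surrogates for the squared-metric constraint through a cutting-plane loop. The substantive difference is in \emph{which} dual you form.

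You carry over the $\theta_{jl}$ multipliers from the proof of Lemma~\ref{lemma-alg1-323}, that is, you replace the pair of constraints $x_{jl}\ge\alpha_j-d_l$ and $\sum_{l\ge j}x_{jl}\le f$ by the weaker family $\sum_{i=j}^{l}(\alpha_j-d_i)\le f$ of contiguous-interval sums. The paper instead takes the full LP dual~\eqref{program-alg1-frlp-relax-ineq-dual} of~\eqref{program-alg1-frlp-relax-ineq}, keeping separate dual variables $e_{jl}$ and $h_j$ for those two constraint families (and also variables $a_j$ for the ordering constraints, preserved under scaling by interpolation rather than a simple factor). This matters for tightness: the dual~\eqref{program-alg1-upfrlp-relax-ineq-dual} of the paper's upper bound program differs from~\eqref{program-alg1-frlp-relax-ineq} only in that the diagonal constraints $x_{jj}\ge\alpha_j-d_j$ are dropped, whereas the dual of your minimization program relaxes \emph{all} the $x$-constraints to the contiguous-interval inequalities, a strictly larger feasible region (e.g., it cannot capture $\max(\alpha_j-d_l,0)$ when the $d_l$ are not monotone). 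Since the paper's computed value is $\approx 2.8697$, already within a hair of the target $2.87$, your looser formulation carries a real risk of not reaching $2.87$.

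A second, smaller point: the paper runs the cutting-plane loop not on the minimization program~\eqref{program-alg1-upfrlp-relax-ineq} directly but on its dual maximization form~\eqref{program-alg1-upfrlp-relax-ineq-dual}, which has the same $\alpha,d,f,x$ variables as~\eqref{program-alg1-frlp-relax-ineq}; this makes the violation check against the square-root constraints of~\eqref{program-alg1-frlp} immediate. Your ``reconstruct the associated primal candidate'' is the same idea, but making this re-dualization explicit would close the description.
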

\begin{proof}
It is easy to see that, for program \eqref{program-alg1-frlp-relax-ineq}, 
as in the proof of Lemma~\ref{lemma-alg1-323}, we can restrict
attention to values of $k$ that are multiples of a fixed positive
integer~$t$, that is, $\zz{A1}{k} \le \ww{A1}{kt}$, for every positive 
integer $t$. So we assume that $k$ has the form $k = pt$, with $p$ and 
$t$ positive integers. The dual of the linear
program~\eqref{program-alg1-frlp-relax-ineq} is
\begin{equation}\label{program-alg1-frlp-relax-ineq-dual}\mystyle
\begin{array}{rllll}
\!\!\! \ww{A1}{k} = \mbox{\rm min}    & \gamma  \\
             \mbox{\rm s.t.}   & a_j  - a_{j-1}
                               + \suml_{i=1}^{m}     \suml_{l=1}^{k}                    c_{jli}
                               - \suml_{i=1}^{m} B_i \suml_{l=1}^{k}                    c_{lji}
                               +                     \suml_{l=j}^{k}                    e_{jl}  \ge 1
                               & \ \forall & 1 \le j \le k \\
                               & \suml_{i=1}^{m} C_i \suml_{l=1}^{k} c_{jli}
                               + \suml_{i=1}^{m} D_i \suml_{l=1}^{k} c_{lji}
                               +                     \suml_{l=1}^{j} e_{lj} \le \gamma
                               & \ \forall & 1 \le j \le k \\
                               & \suml_{j=1}^{k} h_j \le \gamma
                               &  \\
                               & e_{jl} \le h_j
                               & \ \forall & 1 \le j \le l \le k \\
                               & a_0 = a_k = 0, a_j, h_j, e_{jl}, c_{jli} \ge 0
                               & \ \forall & \!\!\!
                               \begin{array}{l}
                                 1 \le j, \, l \le k  \\
                                 1 \le i \le m.
                               \end{array}
\end{array}
\end{equation}

We can derive the upper bound factor-revealing linear program. We would like to
define variables as in equation~\eqref{eq-alg1-323-vardef}. Just using a scale
factor is not sufficient to preserve the variables $a_j$ in
program~\eqref{program-alg1-frlp-relax-ineq-dual}. The variables $a_j$
correspond to the ordering restrictions of primal variables $\alpha_j$ in
program~\eqref{program-alg1-frlp-relax-ineq}, and computational experiments have
indicated that removing such restrictions does not change the optimal value
significantly, for large values of $k$. So, we could just set $a_j = 0$ for
all~$j$. However, we want to preserve such restrictions, as they will shortly be
needed to prove Lemma~\ref{lemma-alg1-gapuplower}. To do this, we can
interpolate the variables of the upper bound factor-revealing program to obtain
the variables of the lower bound program.
Again, we group sets of variables based on their indices.  For that, we denote
the group of a variable of index $n$ as~$\hat{n}$. Let $\hat{n} =
\ceil{\frac{n}{p}}$ and consider prime variables $\gamma', a_j', c_{jli}', e_{jli}',
h_j'$. We obtain a candidate solution for
program~\eqref{program-alg1-frlp-relax-ineq-dual} by defining
\begin{equation}\label{eq-alg1-upfrep-relax-ineq-def}
\gamma = \gamma',
\ \ 
a_j  = p \, a_{\hat{j}}' (p \, \hat{j} - j) (a_{\hat{j}}' - a_{\hat{j}-1}'),
\ \
c_{jli} = \frac{c_{\hat{j}\hat{l}i}'}{p},
\ \
e_{jl} = \frac{e_{\hat{j}\hat{l}}'}{p},
\ \
\mbox{and}
\ \
h_j = \frac{h_{\hat{j}}'}{p}.
\end{equation}

In the following, we will use definition~\eqref{eq-alg1-upfrep-relax-ineq-def}
to obtain a candidate solution for
program~\eqref{program-alg1-frlp-relax-ineq-dual} from a small set of prime
variables. Then, for each constraint of
program~\eqref{program-alg1-frlp-relax-ineq-dual}, we obtain the expression
formed by the dependent terms, and calculate it as a function of the considered
variables. Notice that there is an expression for each primal variable of
program~\eqref{program-alg1-frlp-relax-ineq}. These expressions are analogous to
the primal variables coefficients used in Lemma~\ref{lemma-alg1-323}, thus, for
each primal variable $x$, we say that this is the \emph{coefficient expression}
for~$x$, and we will denote it by $\coef{x}$.

Now we create the minimization upper bound factor-revealing program.  The
objective value is obtained by applying
definition~\eqref{eq-alg1-upfrep-relax-ineq-def} to the objective value of
program~\eqref{program-alg1-frlp-relax-ineq-dual}.  Then, for each group of
coefficient expressions that has the same value, we include a constraint in
the upper bound program that bounds the expression by the independent
term. Notice that each upper bound factor-revealing linear program constraint
may correspond to an arbitrarily large number of constraints of the
factor-revealing linear program. In the following, we calculate and bound each
coefficient expression.

First notice that $a_j - a_{j-1} = a_{\hat{j}}' - a_{\hat{j}-1}'$. To see this,
it is enough to use definition~\eqref{eq-alg1-upfrep-relax-ineq-def}
and consider the cases $\hat{j} = \hat{(j-1)}$,
and $\hat{j} = \hat{(j-1)} + 1$. Now we have:
\begin{align*}
\coef{\alpha_j}
  &=   a_j - a_{j-1}
   +   \sum_{i=1}^{m}     \sum_{l=1}^{k}                                c_{jli}
   -   \sum_{i=1}^{m} B_i \sum_{l=1}^{k}                                c_{lji}
   +                      \sum_{l=j}^{k}                                e_{jl} \\
  &=   a_{\hat{j}}' - a_{\hat{j}-1}'
   +   \sum_{i=1}^{m}     \sum_{l=1}^{pt}                         \frac{c_{\hat{j}\hat{l}i}'}{p}
   -   \sum_{i=1}^{m} B_i \sum_{l=1}^{pt}                         \frac{c_{\hat{l}\hat{j}i}'}{p}
   +                      \sum_{l=j}^{pt}                         \frac{e_{\hat{j}\hat{l}}'}{p} \\
  &\ge a_{\hat{j}}' - a_{\hat{j}-1}'
   +   \sum_{i=1}^{m}     \sum_{l'=1}^{t}                       p \frac{c_{\hat{j}l'i}'}{p}
   -   \sum_{i=1}^{m} B_i \sum_{l'=1}^{t}                       p \frac{c_{l'\hat{j}i}'}{p}
   +                      \sum_{l'=\hat{j}+1}^{t}  p \frac{e_{\hat{j}l'}'}{p} \\
  &= a_{\hat{j}}' - a_{\hat{j}-1}'
   +   \sum_{i=1}^{m}     \sum_{l'=1}^{t}                               c_{\hat{j}l'i}'
   -   \sum_{i=1}^{m} B_i \sum_{l'=1}^{t}                               c_{l'\hat{j}i}'
   +                      \sum_{l'=\hat{j}+1}^{t}                       e_{\hat{j}l'}'
   \ge 1.
\end{align*}

\begin{align*}
\coef{d_j}
  &=   \gamma
      - \sum_{i=1}^{m} C_i \sum_{l=1}^{k} c_{jli}
      - \sum_{i=1}^{m} D_i \sum_{l=1}^{k} c_{lji}
      -                    \sum_{l=1}^{j} e_{jl} \\
  &=  \gamma'
      - \sum_{i=1}^{m} C_i \sum_{l=1}^{pt}                     \frac{c_{\hat{j}\hat{l}i}'}{p}
      - \sum_{i=1}^{m} D_i \sum_{l=1}^{pt}                     \frac{c_{\hat{l}\hat{j}i}'}{p}
      -                    \sum_{l=1}^{j}                      \frac{e_{\hat{j}\hat{l}}'}{p}  \\
  &\ge  \gamma'
      - \sum_{i=1}^{m} C_i \sum_{l'=1}^{t}                    p \frac{c_{\hat{j}l'i}'}{p}
      - \sum_{i=1}^{m} D_i \sum_{l'=1}^{t}                    p \frac{c_{l'\hat{j}i}'}{p}
      -                    \sum_{l'=1}^{\hat{j}} p \frac{e_{\hat{j}l'}'}{p}  \\
  &=   \gamma'
      - \sum_{i=1}^{m} C_i \sum_{l'=1}^{t}                           c_{\hat{j}l'i}'
      - \sum_{i=1}^{m} D_i \sum_{l'=1}^{t}                           c_{l'\hat{j}i}'
      -                    \sum_{l'=1}^{\hat{j}}                     e_{\hat{j}l'}'  \ge 0.
\end{align*}
\begin{align*}
\coef{f}
   = \gamma  - \sum_{j=1}^{k}          h_j
   = \gamma' - \sum_{j=1}^{pt}   \frac{h_{\hat{j}}'}{p}
   = \gamma' - \sum_{j'=1}^{t} p \frac{h_{j'}'}{p}
   = \gamma' - \sum_{j'=1}^{t}         h_{j'} ' \ge 0.
\end{align*}
\begin{align*}
\coef{x_{jl}}
  =       h_j              -       e_{jl}
  = \frac{h_{\hat{j}}'}{p} - \frac{e_{\hat{j}\hat{l}}'}{p}  \ge 0.
\end{align*}

%

We notice that, for each primal variable, the constraint for
its coefficient expression is equivalent to the constraint of any
other primal variable in the same group. For example,
for any pair $\alpha_j$ and $\alpha_l$ such that $\hat{j} = \hat{l}$,
we need to add only one constraint to the upper bound factor-revealing
program; therefore, we need only $t$ constraints for this
kind of primal variable. We remark that
the constraint obtained for $\coef{x_{jl}}$ does not depend on
$p$. Conjoining all different constraints,
and fixing variables $a_1'$ and $a_t'$ to zero,
we obtain program~\eqref{program-alg1-upfrlp-relax-ineq}.
\begin{equation}\label{program-alg1-upfrlp-relax-ineq}\mystyle
\begin{array}{rllll}
\!\!\!\!\! \xx{A1}{t} = \mbox{\rm min}    & \gamma  \\
             \mbox{\rm s.t.}   & a_j  - a_{j-1}
                               + \suml_{i=1}^{m}     \suml_{l=1}^{t}                    c_{jli}
                               - \suml_{i=1}^{m} B_i \suml_{l=1}^{t}                    c_{lji}
                               +                     \suml_{l=j + 1}^{t}   e_{jl}  \ge 1
                               & \forall & 1 \le j \le t \\
                               & \suml_{i=1}^{m} C_i \suml_{l=1}^{t} c_{jli}
                               + \suml_{i=1}^{m} D_i \suml_{l=1}^{t} c_{lji}
                               +                     \suml_{l=1}^{j} e_{lj} \le \gamma
                               & \forall & 1 \le j \le t \\
                               & \suml_{j=1}^{t} h_j \le \gamma
                               &  \\
                               & e_{jl} \le h_j
                               & \forall & 1 \le j \le l \le t \\
                               & a_0 = a_t = 0, \, a_j, \, h_j, \, e_{jl}, \, c_{jli} \ge 0
                               & \forall & \!\!\!
                               \begin{array}{l}
                                 1 \le j, \, l \le t  \\
                                 1 \le i \le m.
                               \end{array}
\end{array}
\end{equation}

Now, we want to use Lemma~\ref{corollary-alg1-inequality} and choose a
set of tuples $(\beta, \gamma, \delta)$ so that the squared metric is
minimally relaxed. To accommodate the premises of
Lemma~\ref{corollary-alg1-inequality}, we solve the dual of the upper
bound factor-revealing LP, so we may use the same cutting plane
strategy used in Lemma~\ref{lemma-alg1-frlp-cut}. The dual is given in
the following.
\begin{equation}\label{program-alg1-upfrlp-relax-ineq-dual}
\begin{array}{rlll}
\xx{A1}{t} = & \mbox{\rm max  }  & \sum_{j=1}^{t} \alpha_j \\
      & \mbox{\rm s.t.}  & f + \sum_{j=1}^t d_j \le 1 & \\
      &                        & \alpha_j \le \alpha_{j+1}
                               & \quad \forall \; 1 \le j < t \\
      &                        & \alpha_j \le B_i  \alpha_l + C_i  d_j + D_i  d_l
                               & \quad \forall \; 1 \le j, \, l \le t ,\quad 1 \le i \le m\\
      &                        & x_{jl} \ge  \alpha_j - d_l
                               & \quad \forall \; 1 \le j < l \le t \\
      &                        & \sum_{l = j}^{t} x_{jl} \le f
                               & \quad \forall \; 1 \le j \le t \\
      &                        & \alpha_j, d_j, f, x_{jl} \ge 0
                               & \quad \forall \; 1 \le j \le l \le t.
\end{array}
\end{equation}

Using the cutting plane strategy with CPLEX we obtain \MAGIC{$\xx{A1}{700}
\approx 2.8697 < 2.87$}.
\end{proof}

If we apply this analysis for the metric case, we obtain an upper bound
factor-revealing program similar to
program~\eqref{program-alg1-upfrlp-relax-ineq-dual}.  The only difference is
that, for the metric case, there are no coefficients $B_l$, $C_l$, and $D_l$.
We use this modified linear program to tighten the approximation factor for
the metric case.

\newcommand{\PasteLemmaFactorUprevealEuc}{
For the \mflp, the approximation factor of $A1$~\cite{JainMMSV03}
is between \MAGIC{$1.814$} and \MAGIC{$1.816$}.
}
\begin{lemma}\label{lemma-factor-upreveal-euc}
\PasteLemmaFactorUprevealEuc
\end{lemma}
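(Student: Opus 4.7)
The plan is to run the same two computations performed in Lemmas~\ref{lemma-alg1-frlp-cut} and~\ref{lemma-alg1-upfrlp-cut}, but on the \emph{metric} versions of the lower and upper bound factor-revealing programs. In the metric case Lemma~3.2 of Jain~\etal~\cite{JainMMSV03} asserts the plain linear inequality $\alpha_j \le \alpha_l + d_j + d_l$, so program~\eqref{program-alg1-frlp} is already linear: no square-root relaxation and no cutting plane strategy is required, and in both~\eqref{program-alg1-upfrlp-relax-ineq} and its dual~\eqref{program-alg1-upfrlp-relax-ineq-dual} one sets $m=1$ and $B_1=C_1=D_1=1$, exactly as anticipated in the paragraph preceding the lemma.

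For the lower bound $\MAGIC{1.814}$, I would solve the metric lower bound factor-revealing LP, namely~\eqref{program-alg1-frlp-relax-ineq} with $m=1$ and $B_1=C_1=D_1=1$, using CPLEX for a large value of $k$ (comparable to those used in Lemma~\ref{lemma-alg1-frlp-cut}). Since $\sup_{k \ge 1}\zz{A1}{k}$ is an upper bound on the approximation factor of $A1$ by the metric analogue of Lemma~\ref{l34}, and since for every $k$ one has $\zz{A1}{k} \le \sup_{k \ge 1}\zz{A1}{k}$, the numerical value obtained at that $k$ is a valid lower bound on the approximation factor. The goal is to push $k$ large enough that the computed $\zz{A1}{k}$ exceeds $\MAGIC{1.814}$.

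For the upper bound $\MAGIC{1.816}$, I would apply the systematic scaling procedure of Lemma~\ref{lemma-alg1-upfrlp-cut} step by step: write $k=pt$, take the dual of the metric lower bound factor-revealing LP, substitute block variables according to~\eqref{eq-alg1-upfrep-relax-ineq-def} (dropping the $i$ index, which now ranges over a single constraint family), and verify that each of the four coefficient expressions $\coef{\alpha_j}$, $\coef{d_j}$, $\coef{f}$, $\coef{x_{jl}}$ lifts to a constraint of a $t$-only sized program. The resulting small program is~\eqref{program-alg1-upfrlp-relax-ineq} specialized to $m=1$, $B_1=C_1=D_1=1$, and any feasible $\gamma$ for it will upper bound $\sup_{k \ge 1}\zz{A1}{k}$ by the same argument used in Lemma~\ref{lemma-alg1-upfrlp-cut}. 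Solving the corresponding dual (a metric-only version of~\eqref{program-alg1-upfrlp-relax-ineq-dual}) with CPLEX for a sufficiently large $t$ should yield a value strictly below $\MAGIC{1.816}$.

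The main obstacle is numerical rather than conceptual: the gap $[\MAGIC{1.814},\MAGIC{1.816}]$ is only about $0.002$ wide, so both the lower bound program (for large $k$) and the upper bound factor-revealing program (for large $t$) must be solved at scales large enough and with enough numerical care that CPLEX certifies values on the correct sides of the reported figures. Everything else reduces to a mechanical specialization of the previous two lemmas to the metric setting, where the nonlinear apparatus disappears.
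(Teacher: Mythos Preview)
Your proposal is correct and matches the paper's approach: the paper simply writes down the metric specialization of the upper bound factor-revealing program (i.e., program~\eqref{program-alg1-upfrlp-relax-ineq-dual} with $m=1$, $B_1=C_1=D_1=1$), and then reports CPLEX values $\zzm{A1}{1000} \approx 1.81412 > 1.814$ and $\xxm{A1}{1000} \approx 1.81584 < 1.816$. The only minor difference is that the paper skips re-deriving the scaling argument and jumps straight to the resulting LP, while you outline the derivation; conceptually the two are identical.
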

\begin{proof}
Let $\zzm{A1}{k}$ be the optimal value of the lower bound factor-revealing program~(5)
in~\cite{JainMMSV03}. The corresponding upper bound factor-revealing program
is:
\begin{equation}\label{program-alg1-upfrlp-metric}
\begin{array}{rlll}
\xxm{A1}{t} = & \mbox{\rm max }  & \sum_{j=1}^{t} \alpha_j \\
      & \mbox{\rm s.t.}  & f + \sum_{j=1}^t d_j \le 1 & \\
      &                        & \alpha_j \le \alpha_{j+1}
                               & \quad \forall \; 1 \le j < t \\
      &                        & \alpha_j \le  \alpha_l + d_j + d_l
                               & \quad \forall \; 1 \le j, \, l \le t\\
      &                        & x_{jl} \ge  \alpha_j - d_l
                               & \quad \forall \; 1 \le j < l \le t \\
      &                        & \sum_{l = j}^{t} x_{jl} \le f
                               & \quad \forall \; 1 \le j \le t \\
      &                        & \alpha_j, d_j, f, x_{jl} \ge 0
                               & \quad \forall \; 1 \le j \le l \le t.
\end{array}
\end{equation}

Numerical computations using CPLEX show that
\MAGIC{$\zzm{A1}{1000} \approx 1.81412 > 1.814 $}, and that
\MAGIC{$\xxm{A1}{1000} \approx 1.81584 < 1.816 $}.
\end{proof}

We notice that the only difference between the upper and lower bound
factor-revealing programs is that the upper bound factor-revealing program
does not contain the restrictions $\alpha_j - d_j \le x_{jj}$ for all $j$. We
exploit the similarity between these programs to bound the gap between their
optimal values. The following lemma is valid for both the metric and squared
metric cases.


\newcommand{\PasteLemmaAlgOneGapuplower}{
Let $\zz{A1}{k}$ be the optimal value of the lower bound factor-revealing
program~\eqref{program-alg1-frlp-relax-ineq} (program~(5) in~\cite{JainMMSV03})
and let $({\bm\alpha}, \mathbf{d}, \mathbf{x}, \mathbf{f})$ be an
optimal solution for program~\eqref{program-alg1-upfrlp-relax-ineq-dual}
(respectively program~\eqref{program-alg1-upfrlp-metric}) with cost value
$\xx{A1}{k}$. If $\varepsilon = \max_{j}\{ \alpha_j - d_j \}$, then $\zz{A1}{k} \ge
\frac{1}{1+\varepsilon} \xx{A1}{k}$.
}
\begin{lemma}\label{lemma-alg1-gapuplower}
\PasteLemmaAlgOneGapuplower%
\end{lemma}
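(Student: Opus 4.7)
The plan is to turn the given optimal solution of the upper-bound program into a feasible solution of the lower-bound program~\eqref{program-alg1-frlp-relax-ineq} whose objective value is at least $\xx{A1}{k}/(1+\varepsilon)$. The two programs share every constraint except that the lower-bound program additionally imposes $x_{jj} \ge \alpha_j - d_j$ for each $j$. Since $x_{jj}$ appears only in the per-row budget constraint $\sum_{l=j}^k x_{jl} \le f$ of the upper-bound program and otherwise has only $x_{jj}\ge 0$, I may assume without loss of generality that the given optimal solution sets $x_{jj} = 0$ for all $j$.

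Given $(\bm\alpha, \mathbf{d}, \mathbf{x}, f)$ optimal for the upper-bound program, I would construct a candidate lower-bound solution by scaling uniformly by $1/(1+\varepsilon)$ and slightly inflating the facility budget to absorb the new diagonal mass. Concretely, set
$$\alpha_j' = \frac{\alpha_j}{1+\varepsilon},\quad d_j' = \frac{d_j}{1+\varepsilon},\quad x_{jl}' = \frac{x_{jl}}{1+\varepsilon}\;(j<l),\quad x_{jj}' = \frac{\max(0,\alpha_j-d_j)}{1+\varepsilon},\quad f' = \frac{f+\varepsilon}{1+\varepsilon}.$$
The additive bump of $\varepsilon$ inside $f'$ (before dividing by $1+\varepsilon$) is the key trick: it creates precisely the per-row slack needed to pay for the new $x_{jj}'$ entries without destroying the global budget.

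I would then verify the constraints one by one. The ordering, the (squared) metric relaxations, and the off-diagonal $x_{jl}' \ge \alpha_j' - d_l'$ constraints follow immediately by uniform scaling. The new diagonal constraint $x_{jj}' \ge \alpha_j' - d_j'$ holds by definition of $x_{jj}'$. The global budget holds since $f' + \sum_j d_j' = \bigl((f+\sum_j d_j)+\varepsilon\bigr)/(1+\varepsilon) \le (1+\varepsilon)/(1+\varepsilon) = 1$. The per-row budget, which is the only delicate point, follows from $x_{jj}' \le \varepsilon/(1+\varepsilon)$ (by the definition of $\varepsilon$) and $\sum_{l>j} x_{jl}' \le f/(1+\varepsilon)$ (from the original constraint with $x_{jj}=0$), giving $\sum_{l\ge j} x_{jl}' \le (\varepsilon+f)/(1+\varepsilon) = f'$.

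The main (mild) obstacle is precisely calibrating the inflation so that the global budget $f' + \sum d_j' \le 1$ and the per-row budgets $\sum_{l\ge j} x_{jl}' \le f'$ remain simultaneously feasible; any naive uniform scaling violates one of the two. Once this is handled, the objective value of the constructed solution equals $\sum_j \alpha_j/(1+\varepsilon) = \xx{A1}{k}/(1+\varepsilon)$, and feasibility for program~\eqref{program-alg1-frlp-relax-ineq} yields the claimed bound $\zz{A1}{k} \ge \xx{A1}{k}/(1+\varepsilon)$. Since the argument never uses the specific form of the metric inequality, it applies verbatim both to the squared metric program \eqref{program-alg1-upfrlp-relax-ineq-dual} and to the metric program \eqref{program-alg1-upfrlp-metric}.
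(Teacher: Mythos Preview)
Your proposal is correct and takes essentially the same approach as the paper: both first set $x'_{jj}=\max\{0,\alpha_j-d_j\}$ (using $x_{jj}=0$ in the upper-bound optimum) and bump $f$ up by $\varepsilon$, then uniformly scale by $1/(1+\varepsilon)$. The only difference is presentational---you fold the two steps into one combined definition, while the paper performs them sequentially.
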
%
\begin{proof}
  Let $f'= f+\varepsilon$ and $x'$ be such that $x'_{jl} = x_{jl}$ if $j \ne
  l$, and $x'_{jj} = \max\{0, \alpha_j - d_j\} \geq 0 = x_{jj}$. Observe that 
  $({\bm\alpha}, \mathbf{d}, \mathbf{x'}, \mathbf{f'})$ has objective value 
  $\xx{A1}{k}$ and is a feasible solution for the lower bound 
  factor-revealing program~\eqref{program-alg1-frlp-relax-ineq}, except that 
  it might violate the first restriction of program~\eqref{program-alg1-frlp-relax-ineq} 
  (program~(5) in~\cite{JainMMSV03}, respectively). Indeed, it might be the 
  case that $1 < f'+ \sum_{j=1}^k d_j \leq 1 + \varepsilon$. Now, it is enough 
  to multiply each variable by $\frac{1}{1+\varepsilon}$, and obtain a feasible 
  solution.
\end{proof}

From the last lemma, it is clear that the upper and lower bound
factor-revealing programs yield close values, except for an error factor that
depends only on the variable values of an optimal solution for the upper
bound factor-revealing program. Since the optimal value decreases as the
number of variables $k$ increases, it is reasonable to expect that the value
of both factor-revealing programs become very close as $k$ tends to infinity.
Indeed, for the metric case, it is easy to show that this error vanishes as
$k$ goes to infinity and, therefore, the upper bound and the lower bound
factor-revealing programs converge to the same value, as~$k$ goes to infinity.

\newcommand{\PasteTheoremAlgOneConv}{
Let $\zzm{A1}{k}$ be as in program~(5) in~\cite{JainMMSV03} and let $\xxm{A1}{k}$ be
as in program~\eqref{program-alg1-upfrlp-metric}. Then
$\sup_{k \ge 1} \zzm{A1}{k} = \inf_{k\ge1} \xxm{A1}{k}$.
}
\begin{theorem}\label{theorem-alg1-conv}
\PasteTheoremAlgOneConv
\end{theorem}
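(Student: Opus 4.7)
The plan is to prove the equality by establishing both inequalities separately. The direction $\sup_{k\ge1}\zzm{A1}{k}\le\inf_{k\ge1}\xxm{A1}{k}$ is the easy one and follows from the UPFRP construction of this section, adapted to the metric case. Concretely, repeating the derivation in the proof of Lemma~\ref{lemma-alg1-upfrlp-cut} without the square-root-linearization coefficients $B_i,C_i,D_i$ gives exactly program~\eqref{program-alg1-upfrlp-metric} and shows $\zzm{A1}{k}\le\xxm{A1}{t}$ whenever $k$ is a multiple of~$t$; the replication argument from the proof of Lemma~\ref{lemma-alg1-323} shows that $\zzm{A1}{k}$ is non-decreasing in~$k$, so $\sup_k\zzm{A1}{k}\le\xxm{A1}{t}$ for every~$t$, and the desired inequality follows by taking the infimum over~$t$.

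For the reverse inequality, the main tool is Lemma~\ref{lemma-alg1-gapuplower}, which guarantees that an optimal solution $({\bm\alpha},\mathbf{d},\mathbf{x},\mathbf{f})$ of~\eqref{program-alg1-upfrlp-metric} satisfies $\zzm{A1}{k}\ge\xxm{A1}{k}/(1+\varepsilon_k)$ for $\varepsilon_k=\max_j(\alpha_j-d_j)$, so it suffices to show that $\varepsilon_k\to 0$ as $k\to\infty$. I would bound $\varepsilon_k$ by using only the metric triangle constraints of~\eqref{program-alg1-upfrlp-metric}: fixing any~$j$ and summing $\alpha_j\le\alpha_l+d_j+d_l$ over $l=1,\ldots,k$ yields $k(\alpha_j-d_j)\le\sum_l\alpha_l+\sum_l d_l\le\xxm{A1}{k}+1$, so $\varepsilon_k\le(\xxm{A1}{k}+1)/k$.

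To turn this into $\varepsilon_k\to 0$, the remaining point is that $\xxm{A1}{k}$ stays bounded in~$k$. Since~\eqref{program-alg1-upfrlp-metric} is a relaxation of the lower-bound factor-revealing program (the constraints $x_{jj}\ge\alpha_j-d_j$ are dropped), we have $\zzm{A1}{k}\le\xxm{A1}{k}$; combining this with Lemma~\ref{lemma-alg1-gapuplower} gives $\xxm{A1}{k}\le(1+\varepsilon_k)\zzm{A1}{k}$, and $\zzm{A1}{k}$ is uniformly bounded above by any known finite approximation factor for $A1$ on the \mflp (for instance the $1.861$ of Jain~\etal). Plugging these inequalities into $\varepsilon_k\le(\xxm{A1}{k}+1)/k$ gives, after a short bootstrap solving for $\varepsilon_k$, that $\varepsilon_k=\Oh(1/k)$.

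From $\zzm{A1}{k}\le\xxm{A1}{k}\le(1+\varepsilon_k)\zzm{A1}{k}$, both sequences then have the same limit; since $\zzm{A1}{k}$ is non-decreasing, its supremum $Z^{*}$ coincides with its limit, and consequently $\xxm{A1}{k}\to Z^{*}$, giving $\inf_k\xxm{A1}{k}\le Z^{*}=\sup_k\zzm{A1}{k}$, which closes the loop. The main obstacle is precisely the bound $\varepsilon_k\to 0$: this step relies essentially on the \emph{linear} form of the metric triangle inequality, since the corresponding summing argument does not go through as cleanly for the square-root inequalities of the squared metric case, which is why the convergence statement is given only in the metric setting.
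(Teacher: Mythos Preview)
Your proof is correct and follows essentially the same approach as the paper: both sum the metric constraint $\alpha_j-d_j\le\alpha_l+d_l$ over $l$ to get $\varepsilon_k\le(\xxm{A1}{k}+1)/k$, then invoke Lemma~\ref{lemma-alg1-gapuplower} to sandwich $\zzm{A1}{k}$ and $\xxm{A1}{k}$. The only difference is in how the uniform bound on $\xxm{A1}{k}$ is obtained: the paper simply cites the CPLEX computation of Lemma~\ref{lemma-factor-upreveal-euc} (giving $\xxm{A1}{k}\le 1.816$ along multiples of $1000$), whereas you derive it by a short bootstrap from $\zzm{A1}{k}\le 1.861$ and $\xxm{A1}{k}\le(1+\varepsilon_k)\zzm{A1}{k}$; your route is slightly more self-contained since it avoids relying on a numerical LP solution.
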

\begin{proof}
First notice that, if we double a dual solution of
program~\eqref{program-alg1-upfrlp-metric}, then the obtained solution
for the corresponding minimization upper bound factor-revealing is still
feasible.  Therefore, we may assume that $k$ is arbitrarily large. Consider
an optimal solution of program~\eqref{program-alg1-upfrlp-metric}. We have
that $\alpha_j - d_j \le \alpha_l + d_l$, for every $j$ and $l$. Let $j$ be
such that $\varepsilon = \alpha_j - d_j$ is maximum and add up these
inequalities for all $l$. We get $k \varepsilon = k (\alpha_j - d_j) =
\sum_{l = 1}^{k} (\alpha_j - d_j) \le \sum_{l = 1}^{k} (\alpha_l + d_l) \le
\xxm{A1}{k} + 1 \le 1.816 + 1$. From Lemmas~\ref{lemma-factor-upreveal-euc} and
\ref{lemma-alg1-gapuplower}, we get that $\xxm{A1}{k} \ge \zzm{A1}{k} \ge
\frac{1}{1+\varepsilon} \xxm{A1}{k} \ge \frac{1}{1+2.816/k} \xxm{A1}{k}$. Taking the limit
as $k$ goes to infinity, we get that $\sup_{k \ge 1} \zzm{A1}{k} = \inf_{k\ge1} \xxm{A1}{k}$.
\end{proof}

It would be nice to bound the values of the variables of
program~\eqref{program-alg1-upfrlp-relax-ineq-dual}, as this would suffice
to show that the factor-revealing programs also converge for the squared metric
case. Since the coefficients of the squared triangle inequality involved in
program~\eqref{program-alg1-upfrlp-relax-ineq-dual} are all greater than one,
we cannot use the same approach as in Theorem~\ref{theorem-alg1-conv}.
Although experiments suggest that the value of variable $\alpha_k$ in an
optimal solution decreases as $k$ increases, it does not seem trivial to
determine whether $\alpha_k$ vanishes when $k$ goes to infinity.

\section{Analysis of the second algorithm}\label{sec:segundo}

In this section, we analyze the second algorithm of Jain \etal~\cite{JainMMSV03}
for the squared metric case. The algorithm is essentially the same as
Algorithm~$A1$, but each connected city keeps contributing to unopened
facilities. The contribution of a connected city $j$ to an unopened facility $i$
is the budget that the city would save if facility $i$ were opened. The algorithm,
that is denoted by $A2$, is described in the following.

\paragraph{Algorithm $A2$ $(C,F,c,f)$~\cite{JainMMSV03}}

\begin{enumerate}
\item Set $U:=C$, meaning that every facility starts unopened, and every city unconnected.
  Each city $j$ has some budget $\alpha_j$, initially 0. At every moment, for each
  unopened facility~$i$, if city $j$ is unconnected, then $j$ offers
  $\max(\alpha_j - c_{ij}, 0)$ to $i$, and, if city $j$ is connected to facility $i'$,
  then $j$ offers $\max(c_{i'j} - c_{ij}, 0)$ to $i$.
\item While $U \ne \emptyset$, the budget of each unconnected city is
increased continuously until one of the following events occur:
  \begin{enumerate}
    \item For some unconnected city $j$ and some open facility $i$, $\alpha_j = c_{ij}$.
    In this case, connect city $j$ to facility $i$ and remove $j$ from $U$.
    \item For some unopened facility $i$, the total offer $i$ receives from the
    cities equals the cost~$f_i$ of opening $i$. In this case, open facility $i$, connect
    to $i$ each city $j$ with a positive offer to $i$, and remove each connected
    city from $U$.
  \end{enumerate}
\end{enumerate}

\medskip

For the metric case, the approximation factor is $1.61$. With a completely
analogous reasoning, we obtain the corresponding factor-revealing
program~\eqref{program-alg2-frlp}. The variables are the same as in
program~\eqref{program-alg1-frlp}. The new variable $r_{jl}$ corresponds to
the budget~$\alpha_j$ if city $j$ is connected at the same time as city~$l$,
or corresponds to the distance from~$j$ to the facility to which $j$ is
connected just before $l$ is connected.
\begin{equation}\label{program-alg2-frlp}
\begin{array}{rlll}
\zz{A2}{k} = & \mbox{\rm max }  & \frac{\sum_{j=1}^{k} \alpha_j}{f + \sum_{j=1}^{k} d_j}  \\
      & \mbox{\rm s.t.}  & \alpha_j \le \alpha_{j+1}
                               & \quad \forall \; 1 \le j < k\\
      &                        & r_{jl} \ge r_{j,l+1}
                               & \quad \forall \; 1 \le j < l < k \\
      &                        & \sqrt{\alpha_l} \le \sqrt{r_{jl}} +\sqrt{d_l} + \sqrt{d_j}
                               & \quad \forall \; 1 \le j < l \le k \\
      &                        & \suml_{j = 1}^{l-1} \max(r_{jl} - d_j, 0)
                               + \suml_{j = l}^{k}   \max(\alpha_l - d_j, 0)  \le f
                               & \quad \forall \; 1 \le l \le k \\
      &                        & \alpha_j, \, d_j, \, f, \, r_{j,l} \ge 0
                               & \quad \forall \; 1 \le j \le l \le k.
\end{array}
\end{equation}

We repeat the previous analysis to give lower and upper bounds on the
approximation factor of the second algorithm for the \smflp.

\newcommand{\PasteLemmaAlgTwoFrlpCut}{
$\MAGIC{2.415} \le \sup_{k \ge 1} \zz{A2}{k}  \le \MAGIC{2.425}$.
}
\begin{lemma}\label{lemma-alg2-frlp-cut}
\PasteLemmaAlgTwoFrlpCut
\end{lemma}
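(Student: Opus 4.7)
The plan is to mimic the two-step proof used for algorithm $A1$ in Lemmas~\ref{lemma-alg1-frlp-cut} and~\ref{lemma-alg1-upfrlp-cut}, adapted to program~\eqref{program-alg2-frlp}. First, for the lower bound, I would obtain an analogue of Lemma~\ref{corollary-alg1-inequality} that linearizes the nonlinear inequality $\sqrt{\alpha_l} \le \sqrt{r_{jl}} + \sqrt{d_l} + \sqrt{d_j}$ by a family of linear constraints of the form $\alpha_l \le (1+\beta+1/\gamma)\,r_{jl} + (1+\gamma+1/\delta)\,d_l + (1+\delta+1/\beta)\,d_j$ (the proof of this is identical to the one of Lemma~\ref{corollary-alg1-inequality}, using the three AM--GM style expansions). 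Also, the $\max$ expressions in the facility cost constraint can be relaxed by introducing auxiliary variables $x_{jl} \ge r_{jl} - d_j$ for $j<l$ and $x_{jl} \ge \alpha_l - d_j$ for $j \ge l$. The resulting relaxation is a linear program, which I would solve with CPLEX using the cutting plane insertion strategy: start with a small family of tuples $(\beta,\gamma,\delta)$, and whenever the computed solution violates a square-root constraint of~\eqref{program-alg2-frlp}, add the corresponding linear cut from the adapted lemma. Running this for $k$ sufficiently large (on the order of~$700$, as in Lemma~\ref{lemma-alg1-frlp-cut}) should yield $\zz{A2}{k} > 2.415$.

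For the upper bound, I would construct a UPFRP following the systematic recipe described after Lemma~\ref{lemma-alg1-upfrlp-cut}: take the linearized lower bound LP, compute its dual, restrict attention to $k=pt$, and replace each dual variable by a block variable scaled by $1/p$ (or $1/p^2$ where appropriate), interpolating the multipliers of the ordering constraints $\alpha_j \le \alpha_{j+1}$ and $r_{jl} \ge r_{j,l+1}$ in the same way $a_j$ is interpolated in equation~\eqref{eq-alg1-upfrep-relax-ineq-def}. Then, one checks that, after the substitution and suitable bounding (using $\sum_{l=jp+1}^{lp}\to p\cdot$ block term, and $\hat{n}=\lceil n/p\rceil$), the coefficient expression $\coef{x}$ of every primal variable $x$ of~\eqref{program-alg2-frlp} collapses onto a constraint depending only on the $t$ block variables, as was done for $\coef{\alpha_j}$, $\coef{d_j}$, $\coef{f}$, and $\coef{x_{jl}}$ in the proof of Lemma~\ref{lemma-alg1-upfrlp-cut}. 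Dualizing the resulting minimization UPFRP gives a maximization program of size depending only on $t$ which is an upper bound on $\zz{A2}{k}$ for every $k$; I would then run CPLEX with the cutting plane strategy on this dual for large $t$ to certify $\xx{A2}{t} < 2.425$.

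The main obstacle is that program~\eqref{program-alg2-frlp} has the two-index variable $r_{jl}$ constrained by the monotonicity relation $r_{jl} \ge r_{j,l+1}$, rather than a single-index variable like $\alpha_j$. In the block substitution, the ordering multipliers for $r_{jl}$ must be interpolated across blocks in both the $j$ and $l$ directions so that the telescoping difference $r_{jl}-r_{j,l+1}$ turns into a difference of block values $r'_{\hat j \hat l}-r'_{\hat j,\hat l+1}$ on the boundaries between blocks and into zero inside a block. This interpolation is essentially the same trick as used for $a_j$ in~\eqref{eq-alg1-upfrep-relax-ineq-def}, but in two indices, so the bookkeeping for the coefficient expressions $\coef{r_{jl}}$ is the delicate step; once set up correctly, the remaining inequalities follow by the same block-summation argument, and the numerical results are obtained by CPLEX exactly as in Lemmas~\ref{lemma-alg1-frlp-cut} and~\ref{lemma-alg1-upfrlp-cut}.
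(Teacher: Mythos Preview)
Your proposal is correct and follows essentially the same route as the paper, whose details appear in Appendix~\ref{sec-appendix-alg2}; the paper uses $k=500$ and arrives at a UPFRP that differs from program~\eqref{program-alg2-frlp} only by dropping the $j=l$ term $\max(\alpha_l-d_l,0)$ from the fourth constraint. One minor correction to your obstacle analysis: since the monotonicity on $r_{jl}$ runs only in the second index ($r_{jl}\ge r_{j,l+1}$), the dual multiplier $b_{jl}$ needs interpolation only in the $l$-direction, not in both---the paper takes $b_{jl} = b'_{\hat j\hat l} - \tfrac{p\hat l - l}{p}(b'_{\hat j\hat l} - b'_{\hat j,\hat l-1})$.
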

\begin{proof}
First, we obtain an upper bound factor-revealing program. See details in
Appendix~\ref{sec-appendix-alg2}. This program is exactly the same as
program~\eqref{program-alg2-frlp}, except that the fourth constraint is replaced with
\[
\sum_{j = 1}^{l-1} \max(r_{jl} - d_j, 0) + \sum_{j = l+1}^{k}   \max(\alpha_l - d_j, 0)  \le f.
\]
Let $\xx{A2}{k}$ be the optimal value of such a program.
With CPLEX we get that \MAGIC{$\zz{A2}{500} \approx 2.41565 > 2.415$},
and that \MAGIC{$\xx{A2}{500} \approx 2.42473 < 2.425$}.
\end{proof}

Solving the upper bound factor-revealing LP obtained for the \mflp for \MAGIC{$k
= 500$}, we may show that the approximation factor of $A2$~\cite{JainMMSV03} is
\MAGIC{$1.602$}. The lower bound factor-revealing program and the maximization
upper bound factor-revealing program are essentially the same, except for
the extra terms of the kind $\max(\alpha_l - d_l)$. Therefore,
Lemma~\ref{lemma-alg1-gapuplower} also holds for such programs. For the metric
case, using a similar analysis to that of Theorem~\ref{theorem-alg1-conv}, one
can show that the lower and the upper bound factor-revealing
programs converge.

\begin{theorem}\label{theorem-alg2-conv}
Let $\zzm{A2}{k}$ be as in program~(25) in~\cite{JainMMSV03} and let
$\xxm{A2}{k}$ be the optimal value of the corresponding upper bound
factor-revealing program obtained by removing the terms of the kind
$\max(\alpha_l - d_l)$ from the fourth restriction. Then $\sup_{k \ge 1}
\zz{A2}{k} = \inf_{k\ge1} \xx{A2}{k}$.
\end{theorem}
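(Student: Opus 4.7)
The plan is to mirror, almost verbatim, the proof of Theorem~\ref{theorem-alg1-conv}, exploiting the two observations stated immediately before Theorem~\ref{theorem-alg2-conv}: that Lemma~\ref{lemma-alg1-gapuplower} extends to the $A2$ factor-revealing programs, and that the $A2$ upper bound factor-revealing program for the metric case is identical to the lower bound one except for the removal of the $j=l$ terms of the form $\max(\alpha_l - d_l)$ from the fourth constraint. The whole point is to show that, at any optimal solution of the maximization upper bound program, $\varepsilon := \max_{j}\{\alpha_j - d_j\}$ is $O(1/k)$.

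The first step is a reduction to arbitrarily large $k$. Exactly as in the opening of the proof of Theorem~\ref{theorem-alg1-conv}, block-duplicating any feasible dual solution of the minimization upper bound factor-revealing program at size $t$ produces a feasible solution at size $2t$ with the same objective, so $\xxm{A2}{2t} \le \xxm{A2}{t}$, and in estimating $\inf_{k \ge 1}\xxm{A2}{k}$ we may take $k$ arbitrarily large.

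The heart of the argument is the bound on $\varepsilon$. Fix an optimal solution $(\alpha, d, r, f, x)$ of the maximization upper bound program and a maximizer $j^\star$. In Theorem~\ref{theorem-alg1-conv} one simply summed $\alpha_{j^\star} \le \alpha_l + d_{j^\star} + d_l$ over all $l$ to get $k\varepsilon \le \xxm{A1}{k}+1$. Here the corresponding metric triangle inequality available from program~\eqref{program-alg2-frlp} is $\alpha_l \le r_{jl} + d_j + d_l$ for $j < l$, so the new task is to eliminate the auxiliary variable $r_{jl}$ in favour of $\alpha_j$. The natural bound $r_{jl} \le \alpha_j$ either is explicit in program~(25) of~\cite{JainMMSV03} or, failing that, may be imposed at an optimum without loss of generality, since replacing $r_{jl}$ by $\min(r_{jl},\alpha_j)$ only relaxes the fourth constraint and does not change the objective. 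Combining this with the monotonicity $\alpha_j \le \alpha_{j+1}$, and applying the triangle inequality with the ordered pair $(j^\star, l)$ for $l > j^\star$ and $(l, j^\star)$ for $l < j^\star$, one obtains $\alpha_{j^\star} \le \alpha_l + d_{j^\star} + d_l$ for every $l \ne j^\star$. Summing over $l$ and invoking $f + \sum_l d_l \le 1$ yields $k\varepsilon \le \xxm{A2}{k} + 1$.

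Finally, the $A2$-version of Lemma~\ref{lemma-alg1-gapuplower} gives $\zzm{A2}{k} \ge \frac{1}{1+\varepsilon}\xxm{A2}{k}$, while Lemma~\ref{lemma-alg2-frlp-cut} together with the remark bounding the metric case of $A2$ below \MAGIC{$1.61$} bounds $\xxm{A2}{k}$ uniformly in $k$. As in Theorem~\ref{theorem-alg1-conv}, letting $k \to \infty$ forces $\varepsilon \to 0$, whence $\xxm{A2}{k} - \zzm{A2}{k} \to 0$; combined with the easy direction $\inf_{k}\xxm{A2}{k} \ge \sup_{k}\zzm{A2}{k}$ (each $\xxm{A2}{k}$ is already an upper bound on the approximation factor $\sup_{k}\zzm{A2}{k}$), this gives the claimed equality. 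The main obstacle is the justification of $r_{jl} \le \alpha_j$ at an optimum; once this auxiliary bound is in hand, the rest is a routine transcription of Theorem~\ref{theorem-alg1-conv}.
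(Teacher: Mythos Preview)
Your proposal is essentially the argument the paper intends: the paper's proof of Theorem~\ref{theorem-alg2-conv} is literally ``using a similar analysis to that of Theorem~\ref{theorem-alg1-conv}'', and you have correctly identified that the only new ingredient is controlling the auxiliary variables $r_{jl}$ via $r_{jl}\le\alpha_j$, after which the computation $k\varepsilon\le \xxm{A2}{k}+1$ and the conclusion via Lemma~\ref{lemma-alg1-gapuplower} go through exactly as for $A1$. The paper itself invokes ``we may assume $r_{jl}\le\alpha_j$'' in the discussion following Lemma~\ref{lemma-alg3-bifactor}, so you are on the same track.

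One correction is warranted. Your fallback WLOG argument---replace $r_{jl}$ by $\min(r_{jl},\alpha_j)$ since this ``only relaxes the fourth constraint''---is incomplete: decreasing $r_{jl}$ also tightens the \emph{third} (triangle) constraint $\alpha_l \le r_{jl}+d_l+d_j$, so the modified point need not remain feasible and the maximum could in principle require $r_{jl}>\alpha_j$. The safe route is your first option: the constraint $r_{jl}\le\alpha_j$ is present in program~(25) of~\cite{JainMMSV03} (it encodes that city $j$'s connection cost at the moment any later city $l$ connects is at most $\alpha_j$), and since $\xxm{A2}{k}$ is by definition obtained from that program by deleting only the $j=l$ terms in the fourth constraint, the bound $r_{jl}\le\alpha_j$ is inherited. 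With that in hand, your case split (monotonicity for $l>j^\star$, triangle inequality plus $r_{l j^\star}\le\alpha_l$ for $l<j^\star$) yields $\alpha_{j^\star}-d_{j^\star}\le\alpha_l+d_l$ for every $l$, and the rest is a transcription of Theorem~\ref{theorem-alg1-conv} as you say.
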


\section{Scaling and greedy augmentation}\label{sec:MahdianYZ}

Algorithm $A2$ can be analyzed as a bi-factor approximation algorithm.  The
analysis uses a factor-revealing linear program, and is similar to the previous
analysis.  Mahdian, Ye, and Zhang~\cite{MahdianYZ06} observed that, due to the
asymmetry between the approximation guarantee for the opened facilities cost and
the connections cost, Algorithm $A2$ may be used to open facilities that are
very economical. This gives rise to a two-phase algorithm, denoted here by $A3(\delta)$,
based on scaling the cost of facilities by a constant $\delta \ge 1$, and on the
greedy augmentation technique introduced by Guha and Khuller~\cite{GuhaK98}. The
first phase opens the most economical facilities, and the second phase greedily
includes facilities that reduce the cost of the solution.

\paragraph{Algorithm $A3(\delta)$ $(C,F,c,f)$~\cite{MahdianYZ06}}
\begin{enumerate}
  \item \emph{Scaling}:
    \begin{enumerate}
    \item Scale the facility costs by a factor $\delta$.
    \item Run Algorithm $A2$ on the scaled instance.
    \end{enumerate}
  \item \emph{Greedy augmentation}: \\
    While there are facilities that, if open, reduce the total cost:
    \begin{enumerate}
      \item Compute the gain $g_i$ of opening each unopened facility $i$.
      \item Open a facility $i$ that maximizes the ratio $\frac{g_i}{f_i}$.
    \end{enumerate}
\end{enumerate}

\medskip

In~\cite{MahdianYZ06}, a factor-revealing linear program is used to analyze
Algorithm $A3(\delta)$ with a somewhat different, but equivalent, greedy
augmentation procedure. This was used to balance a bi-factor from Algorithm $A2$
for the \mflp. As noticed by Byrka and Aardal~\cite{ByrkaA2010}, this analysis
is not restricted to Algorithm $A2$, and applies to any bi-factor approximation
for the \flp. Therefore, since it does not depend on the cost function being a
metric, we can use it to balance a bi-factor approximation for the squared
metric case. This result is precisely stated as follows.

\begin{lemma}[\cite{MahdianYZ06}]\label{lemma-mahdian-bifactor}
Consider a $(\gamma_f,\gamma_c)$-approximation for the \flp. Then, for every
$\delta \ge 1$, Algorithm $A3(\delta)$ is a $(\gamma_f + \ln \delta + \varepsilon, 1 +
\frac{\gamma_c-1}{\delta})$-approximation for the \flp.
\end{lemma}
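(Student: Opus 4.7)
The plan is to analyze the two phases of Algorithm $A3(\delta)$ separately and then combine them using a tangent-line upper bound on the logarithm. Let $(F^*, C^*)$ denote the facility and connection costs of an optimal solution for the original instance; the same solution has total cost $\delta F^* + C^*$ in the scaled instance. After running the $(\gamma_f,\gamma_c)$-approximation on the scaled instance, I obtain a solution whose \emph{original} facility cost $F_1$ and connection cost $C_1$ satisfy
\[
\delta F_1 + C_1 \;\le\; \gamma_f\,\delta F^* + \gamma_c\,C^*,
\qquad\text{so}\qquad
F_1 + \frac{C_1}{\delta} \;\le\; \gamma_f F^* + \frac{\gamma_c C^*}{\delta}.
\]

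For the augmentation phase, I would invoke the classical Guha--Khuller augmentation lemma (as refined by Charikar and Guha), which is proven independently of any triangle inequality and therefore applies to the general \flp: starting from any solution $(F_1,C_1)$ and for any feasible $(F^*,C^*)$,
\[
F_2 + C_2 \;\le\; F_1 + F^* \ln\max\!\Big\{1,\tfrac{C_1 - C^*}{F^*}\Big\} + C^* + F^*.
\]
The polynomial-time version of the augmentation routine relies on a geometric discretization of the gain-to-cost ratios, which incurs a multiplicative $(1+\varepsilon')$ loss per step and, after integration, contributes the additive $\varepsilon F^*$ term in the facility-cost coefficient of the final bi-factor.

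The main calculation is to feed the phase-one inequality into the augmentation bound via the elementary tangent-line inequality $\ln x \le x/\delta + \ln\delta - 1$, valid for all $x,\delta>0$ (it is simply the tangent to the concave function $\ln$ at $x=\delta$). Applied with $x=(C_1-C^*)/F^*$, it rewrites $F^*\ln((C_1-C^*)/F^*)$ as at most $(C_1-C^*)/\delta + F^*\ln\delta - F^*$. The $-F^*$ cancels the $+F^*$ from the augmentation bound, and a short regrouping turns the remaining terms into $F_1 + C_1/\delta$, which is controlled by the phase-one inequality. After rearrangement this yields
\[
F_2 + C_2 \;\le\; (\gamma_f + \ln\delta)F^* + \Big(1 + \tfrac{\gamma_c - 1}{\delta}\Big)C^* + \varepsilon F^*,
\]
which is exactly the advertised bi-factor. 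The $\max\{1,\cdot\}$ case ($C_1 \le C^*+F^*$) reduces to checking $0\le \ln\delta + 1/\delta - 1$, which holds for $\delta\ge 1$.

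The main obstacle is the Guha--Khuller augmentation lemma itself, whose proof integrates the marginal gain of each added facility against its opening cost and requires careful handling of the polynomial-time discretization of ratios. Since this is a standard result with no dependence on the cost function being a metric, I would cite it rather than reprove it, and the remainder of the argument is the short combination sketched above.
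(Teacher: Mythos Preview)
The paper does not prove this lemma: it is quoted from Mahdian, Ye, and Zhang~\cite{MahdianYZ06}, with the remark that their original analysis uses a factor-revealing linear program tied to a continuous version of the greedy augmentation procedure, and that (as observed by Byrka and Aardal) the argument is independent of the metric assumption.

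Your proof is correct and follows a genuinely different---though standard---route: instead of the MYZ factor-revealing LP, you invoke the discrete Charikar--Guha augmentation bound and close the argument with the tangent-line inequality $\ln x \le x/\delta + \ln\delta - 1$. This is more elementary and makes the independence from the triangle inequality transparent, since both the phase-one bound and the augmentation lemma hold for arbitrary connection costs. One small point: your treatment of the case $C_1 \le C^* + F^*$ is a bit terse. The inequality $0 \le \ln\delta + 1/\delta - 1$ is indeed the right check, but it enters by bounding $F_2 + C_2 \le F_1 + C_1$ (greedy augmentation never increases cost) and then using $C_1 < C^* + F^*$ together with the phase-one inequality; the tangent-line bound on $\ln\max\{1,x\}$ can fail for $x<1$, so the split is genuinely needed. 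With that clarification the argument is complete.
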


For the metric case, it has been shown that Algorithm $A2$ is a $(1.11,
1.78)$-approxi\-mation. This and Lemma~\ref{lemma-mahdian-bifactor} give a
$1.52$-approximation for the \mflp. For the \smflp, we present an analysis
based on an upper bound factor-revealing program. Using straightforward
calculations, we may obtain the following:

\begin{lemma}\label{lemma-alg3-bifactor}
Let $\gamma_f \ge 1$ be a fixed value and let $\gamma_c = \xx{A2_c}{k}$, where
\begin{equation}\label{program-alg2-upfrlp-bifactor}
\begin{array}{rlll}
\!\!\! \xx{A2_c}{k} = & \mbox{\rm max  }  & \frac{\sum_{j=1}^{k} \alpha_j - \gamma_f f}{\sum_{j=1}^{k} d_j}  \\
      & \mbox{\rm s.t.}  & \alpha_l \le \alpha_{l+1}
                               & \ \forall \; 1 \le l < k\\
      &                        & r_{jl} \ge r_{j,l+1}
                               & \ \forall \; 1 \le j < l < k \\
      &                        & \sqrt{\alpha_l} \le \sqrt{r_{jl}} +\sqrt{d_l} + \sqrt{d_j}
                               & \ \forall \; 1 \le j < l \le k \\
      &                        & \suml_{j = 1}^{l-1} \max(r_{jl} - d_j, 0)
                               + \suml_{j = l+1}^{k}   \max(\alpha_l - d_j, 0)  \le f
                               & \ \forall \; 1 \le l \le k \\
      &                        & \alpha_j, d_j, f, r_{jl} \ge 0
                               & \ \forall \; 1 \le j \le l \le k.
\end{array}
\end{equation}
Then, if $\gamma_c < \infty$, Algorithm $A2$ is a ($\gamma_f, \gamma_c)$-approximation
for the \smflp.
\end{lemma}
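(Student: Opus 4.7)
The plan is to carry out a bi-factor dual-fitting analysis analogous to those already developed in the paper for Algorithm $A2$, but decoupling the facility and connection factors. Algorithm $A2$ produces a vector $\alpha$ with $\sum_j \alpha_j$ equal to the cost of the solution it returns. The goal is to show that, for every optimal solution with open facility set $F^*$ and any assignment $\sigma:C\to F^*$, one has
\[
  \sum_{j\in C}\alpha_j \;\le\; \gamma_f \sum_{i^*\in F^*}f_{i^*} \;+\; \gamma_c\sum_{j\in C}c_{\sigma(j)\,j},
\]
since the right-hand side is exactly $\gamma_f\,\mathrm{OPT}_f+\gamma_c\,\mathrm{OPT}_c$ once $\sigma$ is chosen optimally. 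To reach this, I would prove the per-facility inequality: for each $i^*\in F^*$ with associated cities $C_{i^*}=\sigma^{-1}(i^*)$,
\[
  \sum_{j\in C_{i^*}}\alpha_j \;\le\; \gamma_f\, f_{i^*} \;+\; \gamma_c\sum_{j\in C_{i^*}}c_{i^*j},
\]
and then sum over $i^*\in F^*$.

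To establish the per-facility inequality, fix $i^*$, let $k=|C_{i^*}|$, and order $C_{i^*}=\{j_1,\dots,j_k\}$ so that $\alpha_{j_1}\le\cdots\le\alpha_{j_k}$. Following exactly the construction used for program~\eqref{program-alg2-frlp}, define $\alpha_l:=\alpha_{j_l}$, $d_l:=c_{i^*j_l}$, $f:=f_{i^*}$, and $r_{jl}$ equal to the budget that city $j_j$ would have offered to $i^*$ (had it remained unopened) at the instant just before $j_l$ is connected, i.e.\ $\alpha_{j_j}$ if $j_j$ is still unconnected at that moment, or $c_{i'j_j}$ if $j_j$ was previously connected to some $i'$. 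I would then verify that this tuple is feasible for program~\eqref{program-alg2-upfrlp-bifactor}: monotonicity of $\alpha_l$ holds by construction; $r_{jl}\ge r_{j,l+1}$ holds because a city's offer to a fixed unopened facility can only decrease over time; the squared-metric inequality $\sqrt{\alpha_l}\le\sqrt{r_{jl}}+\sqrt{d_l}+\sqrt{d_j}$ follows from the squared-metric analog of Lemma~3.2 of~\cite{JainMMSV03}, argued as in Lemma~\ref{lemma-alg1-metric} but applied to facility $i^*$ at the moment of $j_l$'s connection; and the fourth inequality (with the weaker sum starting at $l+1$) follows from Lemma~3.3 of~\cite{JainMMSV03}, noting that dropping the $j=l$ term only relaxes the constraint, exactly as observed in the proof of Lemma~\ref{lemma-alg2-frlp-cut}.

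Feasibility then gives $\sum_{l=1}^k\alpha_l-\gamma_f f\le \xx{A2_c}{k}\sum_{l=1}^k d_l\le\gamma_c\sum_{l=1}^k d_l$, which is the desired per-facility inequality. Summing over $i^*\in F^*$ and using that the algorithm's cost equals $\sum_j\alpha_j$ yields the bi-factor guarantee, and the hypothesis $\gamma_c<\infty$ is exactly what is needed for the bound coming from program~\eqref{program-alg2-upfrlp-bifactor} to be meaningful.

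The main obstacle is the verification of the squared-metric inequality for $r_{jl}$, because $r_{jl}$ may refer to the distance to a different facility $i'$ (the one $j_j$ was earlier connected to) rather than $i^*$; one must invoke the squared-triangle inequality through $i'$ and $j_l$ and combine it with the timing argument that forces $\alpha_l\le c_{i'j_l}$ at the moment $j_l$ is first connected, paralleling the original metric analysis of Jain~\etal but with the $\sqrt{\cdot}$ relaxation used in Lemma~\ref{lemma-alg1-metric}. Once that step is in place, the rest of the argument is an essentially routine adaptation of the dual-fitting proofs already developed in Sections~\ref{sec:newanalysis} and~\ref{sec:segundo}.
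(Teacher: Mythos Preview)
Your per-facility dual-fitting setup and the verification of the constraints of program~\eqref{program-alg2-upfrlp-bifactor} are fine. The gap is in the last step, where you pass from feasibility to the bound $\gamma_c$. You set $k=|C_{i^*}|$ and deduce from feasibility that the objective is at most $\xx{A2_c}{|C_{i^*}|}$; but $\gamma_c$ in the lemma is $\xx{A2_c}{k}$ for one \emph{fixed} $k$ (the one that will later be fed to a solver, $k=300$ in Theorem~\ref{theorem-alg3-factor}), not for the variable cluster size $|C_{i^*}|$. Different optimal facilities give different cluster sizes, and the values $\xx{A2_c}{k}$ \emph{decrease} with $k$ (Table~\ref{table-ex3}), so the inequality $\xx{A2_c}{|C_{i^*}|}\le\gamma_c$ that you implicitly use is unjustified and in fact false for small clusters.

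What is missing is exactly the content of an upper-bound factor-revealing program: the scaling/blocking argument of Lemma~\ref{lemma-alg1-323} and Appendix~\ref{sec-appendix-alg2}, adapted to the bi-factor objective, which shows that solving program~\eqref{program-alg2-upfrlp-bifactor} for a \emph{single} $k$ dominates the corresponding lower-bound program for \emph{every} cluster size $k'$. Your feasibility check (and the observation that the algorithm's data in fact satisfies the full fourth constraint, including the $j=l$ term you dropped) only yields $\sum_l\alpha_l-\gamma_f f\le \zz{A2_c}{|C_{i^*}|}\sum_l d_l$. Getting from $\sup_{k'}\zz{A2_c}{k'}$ down to a single $\xx{A2_c}{k}$ is precisely the ``straightforward calculations'' the paper alludes to before stating the lemma, and your argument does not supply it.
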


The only difference between program~\eqref{program-alg2-upfrlp-bifactor} and the
corresponding lower bound factor-revealing program is the extra term
$\max(\alpha_l - d_l, 0)$ in the lower bound program, which is not in the fourth
constraint of program~\eqref{program-alg2-upfrlp-bifactor}. Again, having a
bound for this term is sufficient to show convergence of the upper and lower
bound factor-revealing programs. For the metric case, this can be done easily.
Notice that we may assume $r_{jl} \le \alpha_j$, so, using a similar analysis to
that of Theorem~\ref{theorem-alg2-conv}, one can show that, if $z_k$ and $x_k$
are solutions for the lower and upper bound programs respectively, then $x_k -
\gamma_f \varepsilon \le z_k \le x_k$, for some $\varepsilon =
\Oh(\frac{1}{k})$.

We observe that program~\eqref{program-alg2-upfrlp-bifactor} is unbounded for
values of $\gamma_f$ close to one. This happens also for the corresponding lower
bound factor-revealing program. This is in contrast to the factor-revealing
programs obtained for the metric case, for which we know that Algorithm~$A2$ is
a $(1,2)$-approximation. In this case, the lower bound program is always
bounded, but the upper bound program is unbounded for $\gamma_f =1$, or for
values close to one. It would be interesting  to strengthen this upper bound
factor-revealing program, so that it could also be used in the analysis also for
$\gamma_f = 1$.

\newcommand{\PasteAlgThreeFactor}{
Algorithm $A3$ is a $2.17$-approximation for the \smflp.
}
\begin{theorem}\label{theorem-alg3-factor}
\PasteAlgThreeFactor
\end{theorem}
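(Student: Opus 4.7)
The plan is to combine the bi-factor analysis of Lemma~\ref{lemma-alg3-bifactor} with the trade-off provided by Lemma~\ref{lemma-mahdian-bifactor}. First I would fix a value $\gamma_f \ge 1$ large enough to make program~\eqref{program-alg2-upfrlp-bifactor} bounded, and solve it numerically using CPLEX together with the cutting plane strategy employed in Lemma~\ref{lemma-alg1-upfrlp-cut} (to handle the square-root constraints via Lemma~\ref{corollary-alg1-inequality}), obtaining a number $\gamma_c = \xx{A2_c}{k}$ for sufficiently large $k$. By Lemma~\ref{lemma-alg3-bifactor}, this certifies that Algorithm~$A2$ is a $(\gamma_f,\gamma_c)$-approximation for the \smflp.

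Next I would invoke Lemma~\ref{lemma-mahdian-bifactor} with a scaling parameter $\delta \ge 1$, which turns the bi-factor $(\gamma_f,\gamma_c)$ into a single approximation ratio
\begin{equation*}
\rho(\gamma_f,\delta) \;=\; \max\!\left\{\, \gamma_f + \ln \delta + \varepsilon,\ 1 + \frac{\gamma_c - 1}{\delta} \,\right\}
\end{equation*}
for Algorithm $A3(\delta)$. The natural choice is to pick $\delta$ so that the two quantities are equal, namely $\gamma_f + \ln \delta = 1 + (\gamma_c-1)/\delta$; this gives a univariate minimization over $\gamma_f$ of the balanced value. I would then do a numerical sweep of $\gamma_f$, computing $\gamma_c$ and the balancing $\delta$ for each candidate, and identify a pair $(\gamma_f,\delta)$ making $\rho(\gamma_f,\delta) \le 2.17 - \varepsilon$ for any fixed small $\varepsilon > 0$. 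Reporting the solver outputs for this chosen pair concludes the argument.

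The main obstacle is that the program is unbounded as $\gamma_f$ approaches~$1$, as noted after Lemma~\ref{lemma-alg3-bifactor}, so the sweep must avoid that regime; additionally, $\gamma_c$ as produced by the upper bound factor-revealing program is only an upper bound on the tight bi-factor, so the minimization is conservative but still valid. In practice, the difficulty reduces to a careful one-dimensional search and a call to CPLEX for each tried $\gamma_f$, entirely analogous in structure to the certifications in Lemmas~\ref{lemma-alg1-upfrlp-cut} and~\ref{lemma-alg2-frlp-cut}. Once a balanced pair witnesses $\rho \le 2.17$, the theorem follows directly from Lemma~\ref{lemma-mahdian-bifactor}.
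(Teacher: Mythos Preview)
Your proposal is correct and follows essentially the same approach as the paper: the paper fixes $\gamma_f = 1.45$, solves program~\eqref{program-alg2-upfrlp-bifactor} with CPLEX for $k=300$ to get $\gamma_c < 3.4034$, applies Lemma~\ref{lemma-alg3-bifactor}, and then chooses the balancing $\delta = 2.0543$ in Lemma~\ref{lemma-mahdian-bifactor} to obtain a $(2.169\ldots,2.169\ldots)$-approximation. Your description of the sweep over~$\gamma_f$ and the balancing equation for~$\delta$ matches exactly what the paper did to arrive at these specific values (see also the appendix tables and figures).
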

\begin{proof}
Consider program~\eqref{program-alg2-upfrlp-bifactor} for \MAGIC{$\gamma_f =
1.45$}. Numerical computations using CPLEX show that \MAGIC{$\xx{A2_c}{300}
\approx 3.40339 < 3.4034$}. From Lemma~\ref{lemma-alg3-bifactor}, we get that
Algorithm~$A2$ is a \MAGIC{$(1.45, 3.4034)$}-approximation for the \smflp. Now,
for \MAGIC{$\delta = 2.0543$}, Lemma~\ref{lemma-mahdian-bifactor} states that
Algorithm~A3 is a \MAGIC{$(2.169\ldots, 2.169\ldots)$}-approximation for the
\smflp.
\end{proof}

In Appendix~\ref{sec-appendix-experiments}, we summarize the results obtained
with CPLEX for the analysis of algorithms~$A1$, $A2$, and $A3$.

\section{An optimal approximation algorithm}\label{sec-optimal}

Byrka and Aardal~\cite{ByrkaA2010} (see also~\cite{ByrkaGS2010}) gave a
$1.5$-approximation for the \mflp, combining a $(1.11, 1.78)$-approximation from
Jain, Mahdian, and Saberi~\cite{Jain2002} and a new analysis of the LP-rounding
algorithm $CS(\gamma)$ of Chudak and Shmoys~\cite{Chudak2004}, that leads to a
$(1.6774, 1.3737)$-approximation. Byrka showed that $CS(\gamma)$ has the optimal
bi-factor approximation $(\gamma, 1 + 2 e^{-\gamma})$ for $\gamma \ge \gamma_0
\approx 1.6774$. By randomly selecting $\gamma$ according to a given probability
distribution, Li~\cite{Li2011} improved this result to $1.488$, that is
currently the best known approximation for the \mflp.

We show that $CS(\gamma)$, when applied to the \smflp, touches its optimal
bi-factor approximation curve $(\gamma, 1 + 8 e^{-\gamma})$ for $\gamma \ge
\gamma_0 \approx 2.00492$. Therefore, we have an $(\alpha, \alpha)$-approximation
for the \smflp, where $\alpha \approx 2.04011$ is the solution of equation
$\gamma = 1 + 8 e^{-\gamma}$. Since $\alpha$ is the approximation lower
bound, this result implies that $CS(\alpha)$, solely used, is an optimal
approximation for the \smflp.

The natural linear program relaxation is given in the following:
\begin{equation}\label{program-rounding}
  \begin{array}{llll}
      \min \         & \sum_{i \in F} y_i f_i + \sum_{j \in C} \sum_{i \in F} x_{ij} c_{ij} \\
      \mbox{  s.t. } & \sum_{i \in F} x_{ij} = 1 && \quad \forall j \in C               \\
                     & x_{ij} \le y_i           && \quad \forall i \in F, j \in C \smallskip  \\
                     & x_{ij}, y_i \ge 0        && \quad \forall i \in F, j \in C.
  \end{array}
\end{equation}

The corresponding integer variables $y_i$ indicate whether facility $i$ is
open, and the corresponding integer variables $x_{ij}$ indicate whether
facility $i$ serves city $j$ in the solution. Algorithm $CS(\gamma)$ may be
summarized as follows. First, a solution $(x^*, y^*)$ of
program~\eqref{program-rounding} is obtained. Then, the fractional opening
variables $y^*_i$ are scaled by a factor $\gamma \ge 1$, $\oy_i = \gamma\,
y^*_i$, and variables $\ox_{ij}$ are defined so that city $j$ is served
entirely by its closest facilities, obtaining a new solution $(\ox,\oy)$.  We
may assume that this solution is \emph{complete}, \emph{i.e.} for every city
$j$ and facility $i$, if $\ox_{ij} > 0$, then $\ox_{ij} = \oy_i$, and that for
every $i$, $\oy_i \le 1$, since, in either case, we can split facility $i$,
and obtain an equivalent instance with these properties. Finally, a clustering
of some of the facilities is obtained according to a given criterion, and a
probabilistic rounding procedure is used to obtain the final solution. For a
detailed description of the algorithm, see~\cite{ByrkaA2010}
(also~\cite{ByrkaGS2010}).

A facility $i$ with $\ox_{ij} > 0$ is called a \emph{close facility} of city~$j$, 
and the set of such facilities is denoted by $C_j$. Similarly, a
facility $i$ with $\ox_{ij} = 0$ but $x_{ij}^* > 0$ is called a \emph{distant
  facility} of~$j$, and the set of such facilities is denoted by $D_j$.
Let $F_j = C_j \cup D_j$.  The analysis of $CS(\gamma)$ uses the notion of
\emph{average distance} between a city $j \in C$ and a subset of facilities
$F' \subseteq F$ such that $\sum_{i \in F'}{\oy_i} > 0$, defined as $ d(j,F') =
\frac{\sum_{i \in F'}{c_{ij} \cdot \oy_i}} {\sum_{i \in F'}{\oy_i}}.  $ For a
city $j$, we also use some definitions from~\cite{ByrkaGS2010}:
the average connection cost, $\djf = d(j, F_j)$; the average distance from
close facilities, $\djc = d(j, C_j)$; the average distance from distant
facilities, $\djd = d(j, D_j)$; the maximum distance from close facilities,
$\djm = \max_{i \in C_j} c_{ij}$; and the irregularity parameter~$\rho_j$,
defined as $\rho_j = {(\djf - \djc)}/{\djf}$ if $\djf > 0$, and $\rho_j = 0$
otherwise.

With these definitions, we can describe the clustering of the facilities. In
each iteration, greedily select a city $j$, called the \emph{cluster center},
such that the sum $\djc{+}\djm$ is minimum, and build a cluster formed by $j$
and its close facilities $C_j$. Remove~$j$ and every other city $j'$ such that
$C_j \cap C_{j'}$ is not empty, and repeat this process until every city is
removed. The set of facilities opened by $CS(\gamma)$ is given by the following
rounding procedure: for each cluster center $j$, open one facility $i$ from
$C_j$ with probability $\ox_{ij} = \oy_i$, and, for each unclustered
facility $i$, open it independently with probability $\oy_i$. Each city is
connected to its closest opened facility.

The following lemma of Byrka and Aardal~\cite{ByrkaA2010} is used to bound the
expected connection cost between a city and the closest facility from a set of
facilities.

\begin{lemma}[\cite{ByrkaA2010}] \label{lemma-exp-distance}
Consider a random vector $y \in \{0,1\}^{|F|}$ produced by Algorithm
$CS(\gamma)$, a subset $A\subseteq F$ of facilities such that $\sum_{i\in A}\bar{y}_i>0$,
and a city $j\in C$. Then, the following holds:
\[
 E\left[\min_{i\in A, y_i=1}c_{ij} \ | \ \sum_{i\in A} y_i \geq 1\right] \leq d(j,A).
\]
\end{lemma}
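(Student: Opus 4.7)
The plan is to exploit the specific probabilistic structure of the rounding in $CS(\gamma)$: for each cluster center~$j'$, exactly one facility of~$C_{j'}$ is opened, with facility~$i\in C_{j'}$ selected with probability~$\bar{y}_i$ (these sum to~$1$ because the solution is complete); each unclustered facility~$i$ is opened independently with probability~$\bar{y}_i$; and the openings of distinct clusters and of the unclustered part are mutually independent. First I would sort the facilities of~$A$ by non-decreasing~$c_{ij}$, writing $A=\{i_1,\dots,i_m\}$ with $c_{i_1 j}\le\cdots\le c_{i_m j}$, and set $T=\min\{t:y_{i_t}=1\}$ (with $T=\infty$ if no facility of~$A$ opens). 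The left-hand side of the lemma then equals $\sum_{t=1}^{m} c_{i_t j}\,P[T=t]/P[T<\infty]$, to be compared with $d(j,A)=\sum_{t} c_{i_t j}\,\bar{y}_{i_t}/\sum_{s}\bar{y}_{i_s}$.

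Since the coefficients $c_{i_t j}$ are non-decreasing in~$t$, a standard summation-by-parts (stochastic dominance) argument reduces the claim to the following prefix inequality: for every $k$, with $A^{(k)}=\{i_1,\dots,i_k\}$,
\begin{equation*}
 \frac{P[\exists i\in A^{(k)}: y_i=1]}{\sum_{i\in A^{(k)}}\bar{y}_i}
 \;\ge\;
 \frac{P[\exists i\in A: y_i=1]}{\sum_{i\in A}\bar{y}_i}.
\end{equation*}
Equivalently, the function $g(S)=P[\exists i\in S:y_i=1]\big/\sum_{i\in S}\bar{y}_i$ is non-increasing as $S$ grows inside~$A$.

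To establish this monotonicity, I would decompose an arbitrary $S\subseteq A$ according to the clustering, $S=S_0\cup S_1\cup\cdots\cup S_s$, where $S_t$ for $t\ge 1$ is $S\cap C_{j'_t}$ for some cluster center~$j'_t$ and $S_0$ collects the unclustered facilities of~$S$. Writing $V=\sum_{i\in S}\bar{y}_i$ and $V_t=\sum_{i\in S_t}\bar{y}_i$, independence across parts together with mutual exclusion inside each cluster and independence inside~$S_0$ yields
\begin{equation*}
 1-P[\exists i\in S:y_i=1]
 \;=\;\Big(\prod_{t=1}^{s}(1-V_t)\Big)\cdot\prod_{i\in S_0}(1-\bar{y}_i)
 \;\le\; e^{-V},
\end{equation*}
using $1-x\le e^{-x}$. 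Combined with $e^{-V}\le 1/(1+V)$, this gives the key inequality $P[\exists i\in S:y_i=1]\ge V/(1+V)$. A short case analysis on where a newly added facility~$i^*$ lies relative to the cluster decomposition of~$S$ (unclustered, in an existing cluster part, or starting a new one) shows that the required inequality $g(S\cup\{i^*\})\le g(S)$ collapses, in each case, to this key inequality applied either to~$S$ itself or to $S$ with its cluster part containing~$i^*$ removed. The main obstacle is precisely this case analysis, which requires carefully expanding $P[\exists i\in S\cup\{i^*\}:y_i=1]$ through the independence across parts; once it is done, the reduction from the second paragraph finishes the proof.
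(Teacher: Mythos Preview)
The paper does not prove this lemma: it is quoted verbatim from Byrka and Aardal~\cite{ByrkaA2010} and used as a black box, so there is no in-paper argument to compare against. Your outline is in fact the standard proof of this statement and is correct. The reduction via stochastic dominance to the prefix inequality
\[
 \frac{P[\exists i\in A^{(k)}:y_i=1]}{\sum_{i\in A^{(k)}}\bar y_i}\ \ge\ \frac{P[\exists i\in A:y_i=1]}{\sum_{i\in A}\bar y_i}
\]
is valid, and the monotonicity step goes through exactly as you indicate. Concretely, writing $p=P[\exists i\in S:y_i=1]$, $V=\sum_{i\in S}\bar y_i$, and $\delta=\bar y_{i^*}$: if $i^*$ is unclustered or lies in a cluster disjoint from~$S$, then $p'=p+\delta(1-p)$ and $p'(V)\le p(V+\delta)$ reduces to $p\ge V/(1+V)$, which is your key inequality for~$S$; if $i^*$ lies in the same cluster as the existing part $S_t$, then with $q=P[\exists i\in S\setminus S_t:y_i=1]$ one gets $p'-p=(1-q)\delta$ and the needed inequality $(1-q)V\le p$ follows from the key inequality applied to $S\setminus S_t$ (giving $1-q\le 1/(1+V-V_t)$) together with $p\ge V/(1+V-V_t)$. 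So the ``short case analysis'' you defer does collapse precisely to the two applications you name, and nothing further is needed.
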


For a given city $j$, if one facility in $C_j$ or $D_j$ is opened, then
Lemma~\ref{lemma-exp-distance} states that the expected connection cost is bounded
by $\djc$ and $\djd$, respectively. If no facility in $C_j \cup D_j = F_j$ is
opened, then city $j$ can always be connected to one of the close facilities
$C_{j'}$ of the associated cluster center $j'$, with expected connection cost
$d(j,C_{j'} \setminus F_j)$.  Byrka and Aardal~\cite{ByrkaA2010} showed that,
for the \mflp, when $\gamma < 2$, this cost is at most $\djd + \djlm +
\djlc$. Since for the \smflp we need $\gamma > 2$, we will use an improved
version of this lemma by Li~\cite{Li2011}. The adapted lemma for the squared
metric is given in the following. The proof is the same, except that we use the
squared metric property, instead of the triangle inequality.

\newcommand{\PasteLemmaTri}{
Let $j$ be a city and $j'$ be the associated cluster center such that
${C_j \cap C_{j'} \ne \emptyset}$. Then,
$
d(j,C_{j'} \setminus F_j) \le 3 \cdot \left( (2 - \gamma) \djm + (\gamma - 1) \djd + \djlm + \djlc \right).
$
}
\begin{lemma} \label{lemma-tri}
\PasteLemmaTri
\end{lemma}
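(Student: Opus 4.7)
The plan is to follow the proof of the metric analog due to Li~\cite{Li2011} verbatim, replacing the single use of the triangle inequality with the squared metric inequality, and then linearizing by squaring. Since $j'$ is the cluster center associated with $j$, the clustering step guarantees a witness facility $i^* \in C_j \cap C_{j'}$. For every $i \in C_{j'}\setminus F_j$, the squared metric gives
\[
\sqrt{c_{ij}} \le \sqrt{c_{ij'}} + \sqrt{c_{i^*j'}} + \sqrt{c_{i^*j}},
\]
and squaring together with the elementary Cauchy--Schwarz consequence $(a+b+c)^2 \le 3(a^2+b^2+c^2)$ yields the pointwise linear bound $c_{ij} \le 3\bigl(c_{ij'} + c_{i^*j'} + c_{i^*j}\bigr)$. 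This is the only place where the squared metric assumption is invoked, and it is the sole source of the extra multiplicative factor of $3$ appearing in the statement.

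From here the argument is linear in the distances and is identical to Li's. Averaging the pointwise bound over $i \in C_{j'}\setminus F_j$ with weights $\oy_i$ produces a bound of the form $3\,\bigl(d(j',C_{j'}\setminus F_j) + c_{i^*j'} + c_{i^*j}\bigr)$. The first summand is at most $\djlm$ by definition, and the second is at most $\djlm$ because $i^* \in C_{j'}$. For the third summand I would use Li's averaging over the choice of connector: because the $\oy$-mass of $C_j$ equals $1$ while the $\oy$-mass of $D_j$ equals $\gamma - 1$, one can let $i^*$ range over $F_j$ with suitable weights to obtain the convex combination $(2-\gamma)\djm + (\gamma-1)\djd$. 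Multiplying the three resulting bounds by $3$ produces the stated inequality.

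The main obstacle I anticipate is the third summand: the averaged choice of $i^*$ must still be supported in $C_{j'}$ so that the bound $c_{i^*j'} \le \djlm$ continues to apply, while at the same time exploiting the full $\oy$-mass of $F_j = C_j \cup D_j$ to produce the $(2-\gamma)\djm + (\gamma-1)\djd$ combination. This is handled in Li's proof by a careful splitting argument that I would reuse unchanged. Since the squared metric property enters only through the pointwise linearization above and every subsequent step is a linear manipulation of distances, no further technical difficulty arises in the squared metric setting, and the metric proof transfers with every coefficient simply scaled by $3$.
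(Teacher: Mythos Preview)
Your high-level plan --- apply the squared metric once to pick up a factor of~$3$ and then invoke Li's metric bound unchanged --- is exactly the paper's approach. But your execution diverges from the paper in a way that creates a real gap.

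Your three term-by-term bounds do not sum to the target. You bound $d(j',C_{j'}\setminus F_j)\le\djlm$ and $c_{i^*j'}\le\djlm$, giving $2\djlm$ where the statement has $\djlm+\djlc$. Since $\djlc\le\djlm$ in general, your inequality is strictly weaker than the one claimed. Moreover, your description of the third summand is not how Li's argument works: averaging $c_{i^*j}$ over $i^*\in F_j$ with weights $\oy_{i^*}$ yields $\djf$, not the combination $(2-\gamma)\djm+(\gamma-1)\djd$; the latter arises from a case analysis inside Li's proof, not from a single convex average over $F_j$. And, as you yourself note, letting $i^*$ range over all of $F_j$ destroys the hypothesis $i^*\in C_{j'}$ needed for your bound on $c_{i^*j'}$; you cannot have both at once, and Li's ``splitting argument'' does not resolve this because Li never commits to a single intermediate $i^*$ that must lie in $C_{j'}$.

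The paper sidesteps this obstacle by introducing $\djjl=\min_{l\in F}(c_{lj}+c_{lj'})$ and applying the squared metric through the minimizer~$l$, not through a witness in $C_j\cap C_{j'}$. This yields $c_{ij}\le 3(\djjl+c_{ij'})$, hence $d(j,C_{j'}\setminus F_j)\le 3\bigl(\djjl+d(j',C_{j'}\setminus F_j)\bigr)$, and the bracketed quantity is precisely the metric quantity Li's Lemma~1 bounds by $(2-\gamma)\djm+(\gamma-1)\djd+\djlm+\djlc$. The squared-metric step and Li's lemma are thus fully decoupled, and no averaging over the intermediate facility is needed. Your route can be repaired by making this same move; fixing a single $i^*\in C_j\cap C_{j'}$ and then trying to average it is the source of both errors.
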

\begin{proof}
Let $\djjl = \min\limits_{l \in F} (c_{lj} + c_{lj'})$,
that is, the minimum connection cost of a path of length two from $j$ to
$j'$.\footnote{In~\cite{Li2011}, the connection cost $c$ is extended to 
 a distance between $j$ and $j'$, and the triangle inequality is then used to bound this
distance with the connection cost of any path of length two. Here, we make a more
explicit definition to avoid confusion, since the squared metric property is not
sufficient for this purpose.}
Fix a facility $l$ such that $c_{lj} + c_{lj'} = \djjl$.
For each facility $i$ in $C_{j'} \setminus F_j$,
we say that a path $(j,l,j',i)$ is the \emph{center-path} to $i$. The cost
of such center-path to $i$ is defined as $\djjl + c_{ij'}$. Notice that,
using the squared metric property, $c_{ij} \le 3 (\djjl + c_{ij'})$,
and therefore
\begin{align*}
d(j,C_{j'} \setminus F_j)&=    \frac{\sum_{i \in C_{j'} \setminus F_j}{c_{ij} \cdot \oy_i}}
                                    {\sum_{i \in C_{j'} \setminus F_j}{\oy_i}} \\
                         &\le  \frac{\sum_{i \in C_{j'} \setminus F_j}{3 (\djjl + c_{ij'} ) \cdot \oy_i}}
                                    {\sum_{i \in C_{j'} \setminus F_j}{\oy_i}} \\
                         &= 3 \cdot ( \djjl + d(j',C_{j'} \setminus F_j) ).
\end{align*}
That is, $d(j,C_{j'} \setminus F_j)$ is at most three times the average
center-path cost. Following the lines of Li~\cite[Lemma~1]{Li2011},
we know that
$
\djjl + d(j',C_{j'} \setminus F_j)
\le
(2 - \gamma) \djm + (\gamma - 1) \djd + \djlm + \djlc.
$
Therefore, the lemma holds.
\end{proof}

The next lemma follows from Lemma~\ref{lemma-tri}, and is straightforward.

\begin{lemma} \label{lemma-tri-corollary}
$
d(j,C_{j'} \setminus F_j) \le 3 \left( \gamma \djf + (3 - \gamma) \djd \right).
$
\end{lemma}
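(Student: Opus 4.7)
The plan is to derive this corollary directly from Lemma~\ref{lemma-tri} by bounding its right-hand side using three elementary ingredients. First, I would record the averaging identity $\gamma\,\djf = \djc + (\gamma-1)\djd$, which falls out of the completed scaled LP solution: the close facilities of $j$ contribute total weight $\sum_{i\in C_j}\oy_i = \sum_i \ox_{ij} = 1$, while the distant facilities contribute $\sum_{i\in D_j}\oy_i = \gamma-1$. Splitting $\djf$ over $F_j = C_j \cup D_j$ with these weights gives the identity, and rearranging yields $\djc + 2\djd = \gamma\,\djf + (3-\gamma)\,\djd$, which is exactly (up to the factor $3$) the target right-hand side.

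Second, I would use two properties of the algorithm: (i) the greedy clustering rule selects $j'$ to minimize $\djlc + \djlm$ among the available candidates, and since $j$ was still available at that moment, $\djlm + \djlc \le \djm + \djc$; and (ii) since $C_j$ collects the facilities closest to $j$ in the completed solution, we have $c_{ij}\ge \djm$ for every $i\in D_j$, and therefore $\djm \le \djd$.

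With these ingredients, the proof reduces to the chain
\begin{align*}
(2-\gamma)\djm + (\gamma-1)\djd + \djlm + \djlc
  &\le (3-\gamma)\djm + (\gamma-1)\djd + \djc \\
  &\le (3-\gamma)\djd + (\gamma-1)\djd + \djc \\
  &= \djc + 2\djd \;=\; \gamma\,\djf + (3-\gamma)\,\djd,
\end{align*}
where the first step uses (i), and the second uses (ii) together with $3-\gamma\ge 0$. Multiplying by $3$ and applying Lemma~\ref{lemma-tri} then yields the claim. I do not expect any real obstacle; the only implicit restriction is $\gamma\le 3$, which is comfortably satisfied in the regime $\gamma\approx 2.04$ used for the algorithm $CS(\gamma)$.
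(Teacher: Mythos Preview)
Your argument is correct and is exactly the kind of ``straightforward'' computation the paper alludes to (the paper omits the proof entirely, stating only that the lemma follows from Lemma~\ref{lemma-tri}). The three ingredients you isolate---the averaging identity $\gamma\,\djf=\djc+(\gamma-1)\djd$, the greedy clustering inequality $\djlm+\djlc\le\djm+\djc$, and the bound $\djm\le\djd$---are the standard facts from Byrka--Aardal/Li, and your chain is clean; your remark that the step needs $\gamma\le 3$ is also appropriate, since the paper only uses the lemma in the regime $\gamma\approx 2.04$.
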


Now, we can bound the expected facility and connection cost
of a solution generated by~$CS(\gamma)$. The next theorem
is an adapted version of Theorem~2.5 from~\cite{ByrkaGS2010}.

\newcommand{\PasteTheoremFactor}{
For $\gamma \ge 1$, Algorithm $CS(\gamma)$ produces a solution $(x,y)$
for the integer program corresponding to~\eqref{program-rounding} with expected
facility and connection costs
\[
E[y_i f_i] = \gamma \cdot F^*_i,
\mbox{\ \ and \ \ } 
E\left[\min_{i\in F, y_i=1} c_{ij} \right]
   \leq \max\left\{1 + 8e^{-\gamma}, \frac{5 e^{-\gamma}+e^{-1}}{1 - \frac{1}{\gamma}}\right\} \cdot C^*_j,
\]
where $F^*_i = y^*_i f_i$ and $C^*_j = \sum_{i \in F} x^*_{ij} c_{ij}$.
}
\begin{theorem}\label{theorem-factor}
\PasteTheoremFactor
\end{theorem}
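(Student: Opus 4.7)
The plan is to bound the two cost components separately. For the facility cost, the argument is immediate: either $i$ belongs to some cluster $C_{j'}$, in which case the rounding opens exactly one facility from $C_{j'}$ with probabilities proportional to $\bar{y}$ (and $\sum_{l \in C_{j'}} \bar{y}_l = 1$ by the completeness assumption), or $i$ is unclustered and opened by an independent Bernoulli trial of parameter $\bar{y}_i$. In both cases $E[y_i] = \bar{y}_i = \gamma y_i^*$, which yields $E[y_i f_i] = \gamma F_i^*$ at once.

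The bulk of the proof will address the connection cost. Fix a city $j$ with associated cluster center $j'$, and decompose into three disjoint events: $E_c$, that some $i \in C_j$ opens; $E_d$, that no $i \in C_j$ opens but some $i \in D_j$ does; and $E_0$, that no facility of $F_j$ opens, in which case $j$ falls back to whichever facility opens in $C_{j'}$. By Lemma~\ref{lemma-exp-distance}, the conditional expected connection costs under $E_c$ and $E_d$ are at most $d_j^{(c)}$ and $d_j^{(d)}$ respectively, while under $E_0$ the fallback cost is at most $d(j, C_{j'} \setminus F_j)$, which Lemma~\ref{lemma-tri-corollary} in turn bounds by $3\bigl(\gamma d_j^{(f)} + (3-\gamma) d_j^{(d)}\bigr)$. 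A standard calculation based on $1 - x \le e^{-x}$, together with $\sum_{i \in C_j} \bar{y}_i = 1$ and $\sum_{i \in F_j} \bar{y}_i = \gamma$, yields the probability bounds $\Pr[E_c^c] \le e^{-1}$ and $\Pr[E_0] \le e^{-\gamma}$.

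Assembling these pieces gives
\[
  E\bigl[\min_{i\in F,\, y_i=1} c_{ij}\bigr] \;\le\; \Pr[E_c]\, d_j^{(c)} + \bigl(\Pr[E_d] + 3(3-\gamma)\Pr[E_0]\bigr)\, d_j^{(d)} + 3\gamma\, \Pr[E_0]\, d_j^{(f)}.
\]
Since the solution is complete, $C_j^* = d_j^{(f)}$; and from the definition of $\rho_j$ together with the identity $\gamma d_j^{(f)} = d_j^{(c)} + (\gamma - 1) d_j^{(d)}$, one can write $d_j^{(c)} = (1-\rho_j)\, C_j^*$ and $d_j^{(d)} = \frac{\gamma - 1 + \rho_j}{\gamma - 1}\, C_j^*$. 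Substituting, the right-hand side becomes an \emph{affine} function of $\rho_j \in [0,1]$, so its maximum lies at $\rho_j = 0$ or $\rho_j = 1$. Endpoint evaluation yields $1 + 8\Pr[E_0] \le 1 + 8 e^{-\gamma}$ at $\rho_j = 0$, and $\frac{\gamma}{\gamma - 1}\bigl(\Pr[E_d] + 6\Pr[E_0]\bigr) \le \frac{5 e^{-\gamma} + e^{-1}}{1 - \frac{1}{\gamma}}$ at $\rho_j = 1$ (using $\Pr[E_d] + \Pr[E_0] \le e^{-1}$), matching exactly the two branches of the stated maximum.

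The step I expect to be most delicate is the probabilistic bookkeeping around the clustering: one must carefully track how the mass of $C_j$ splits between $C_{j'}$ and the remaining (independently rounded) portion when computing $\Pr[E_c]$ and $\Pr[E_0]$, and verify that both bounds are maximized at the extremal configuration. A secondary subtlety is that for $\gamma \ge 3$ the coefficient $3(3-\gamma)$ becomes non-positive; but since Lemma~\ref{lemma-tri-corollary} is applied pointwise before taking expectation, the substitution remains valid in every regime, and the affine-in-$\rho_j$ structure of the resulting expression absorbs both cases into the single $\max$ appearing in the theorem.
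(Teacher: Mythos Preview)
Your proof is correct and follows essentially the same route as the paper's: the same three-event decomposition ($p_c$, $p_d$, $p_s$ in the paper's notation), the same invocation of Lemmas~\ref{lemma-exp-distance} and~\ref{lemma-tri-corollary}, the same probability bounds $p_s \le e^{-\gamma}$ and $1 - p_c \le e^{-1}$ from~\cite{ByrkaA2010}, and the same substitution $d_j^{(c)} = (1-\rho_j)d_j$, $d_j^{(d)} = (1+\tfrac{\rho_j}{\gamma-1})d_j$. The only cosmetic difference is that the paper rewrites the resulting expression directly as the convex combination $(1+8p_s)(1-\rho_j) + \tfrac{5p_s + 1 - p_c}{1 - 1/\gamma}\,\rho_j$, whereas you argue affinity in $\rho_j$ and evaluate at the endpoints; the two presentations are equivalent.
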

\begin{proof}
The expected cost of facility $i$ is
$E[y_i f_i] = \oy_i f_i = \gamma \cdot y^*_i f_i = \gamma \cdot F^*_i$.

If $j$ is a cluster center, one of its close facilities is open, then
the expected connection cost is $\djc \le \djf = C^*_j$.
We may assume that $j$ is not a cluster center. Let
$p_c$ be the probability that the closest facility to $j$ is in $C_j$,
and $p_d$ the probability that it is in $D_j$. If neither case occurs,
then, with probability $p_s = 1 - p_c - p_d$, the closest facility is
in $C_{j'} \setminus F_j$, where $j'$ is the cluster center
associated with $j$.
From the definitions, we have that $\djc = (1- \rho_j) \djf$, \ 
$\djd = (1 + \frac{\rho_j}{\gamma-1})\djf$, and $\rho_j \le 1$. Also, from~\cite{ByrkaA2010}, we know
that $p_s \le e^{-\gamma}$ and $p_c \ge 1 - e^{-1}$.
Combining these facts with Lemmas~\ref{lemma-exp-distance} and~\ref{lemma-tri-corollary},
we obtain
\begin{align*}
E\left[\min_{i\in F, y_i=1} c_{ij}\right]
  &\leq p_c \cdot \djc + p_d \cdot \djd + p_s \cdot 3 \left( \gamma \djf + (3 - \gamma) \djd \right)\\
  &=    \left( (p_c + p_d + 9 p_s) + \frac{(p_c + p_d + 9 p_s)-(p_c + 3 p_s)}{\gamma - 1} \rho_j \right) \djf \\
  &=    \left( (1 + 8 p_s) + \frac{(1 + 8 p_s)-(p_c + 3 p_s)\gamma}{\gamma - 1} \rho_j \right) \djf \\
  &=    \left( (1 + 8 p_s) (1 - \rho_j) + \frac{5 p_s + 1 - p_c}{1 - \frac{1}{\gamma}} \rho_j \right) \djf \\
  &\leq \left( (1 + 8 e^{-\gamma}) (1 - \rho_j) + \frac{5 e^{-\gamma} + e^{-1}}{1 - \frac{1}{\gamma}} \rho_j \right) \djf \\
  &\leq \max\left\{1 + 8e^{-\gamma}, \frac{5 e^{-\gamma}+e^{-1}}{1 - \frac{1}{\gamma}}\right\} \cdot C^*_j.
\end{align*}
\end{proof}

Let $\gamma_0$ be the solution of equation 
$$\left(\frac{5 e^{-\gamma}+e^{-1}}{1 - \frac{1}{\gamma}}\right) = \left(1 +
  8e^{-\gamma}\right).$$ 
For $\gamma \ge \gamma_0 \approx 2.00492$, the maximum connection cost
factor is $1 + 8e^{-\gamma}$, so $CS(\gamma)$ touches the curve
$(\gamma, 1 + 8e^{-\gamma})$, that is, its approximation factor is the
best possible for the \smflp, unless $\PP = \NP$. The next theorem
follows immediately.
\begin{theorem}\label{theorem-opt-alg}
Let $\alpha \approx 2.04011$ be the solution of the equation $\gamma = 1 + 8 e^{-\gamma}$.
Then $CS(\alpha)$ is an $\alpha$-approximation for the \smflp and the approximation factor 
is the best possible unless $\PP = \NP$.
\end{theorem}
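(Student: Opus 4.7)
The plan is to combine Theorem~\ref{theorem-factor} with the algebraic definition of $\alpha$ and the $2.04$-inapproximability result from Section~\ref{sec:complexity}. First I would verify that $\alpha > \gamma_0$, which follows by noting that $\alpha \approx 2.04011 > 2.00492 \approx \gamma_0$. By the definition of $\gamma_0$, for every $\gamma \geq \gamma_0$ we have $1 + 8e^{-\gamma} \geq \frac{5e^{-\gamma}+e^{-1}}{1-1/\gamma}$, so the maximum in Theorem~\ref{theorem-factor} is attained by $1 + 8e^{-\gamma}$. In particular, at $\gamma = \alpha$, the bound on the expected connection cost simplifies to $(1 + 8e^{-\alpha}) \cdot C^*_j$, which, by the defining equation $\alpha = 1 + 8e^{-\alpha}$, equals $\alpha \cdot C^*_j$.

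Next I would sum the per-facility and per-city bounds from Theorem~\ref{theorem-factor} with $\gamma = \alpha$, obtaining that the expected total cost of the solution produced by $CS(\alpha)$ is at most
\begin{equation*}
\sum_{i \in F} \alpha \cdot F^*_i + \sum_{j \in C} \alpha \cdot C^*_j \ = \ \alpha \Bigl( \sum_{i \in F} y^*_i f_i + \sum_{j \in C} \sum_{i \in F} x^*_{ij} c_{ij} \Bigr).
\end{equation*}
Since $(x^*,y^*)$ is an optimal solution of the LP relaxation~\eqref{program-rounding}, the parenthesized quantity is a lower bound on the optimal integral value of the \smflp. Thus $CS(\alpha)$ returns a solution of expected cost at most $\alpha$ times the optimum, making it an $\alpha$-approximation for the \smflp. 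A standard derandomization (or a Markov-style argument over the random clustering choices) yields a deterministic $\alpha$-approximation, but the expected guarantee already suffices for the approximation-ratio claim.

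For the matching lower bound, I would invoke the $2.04$-inapproximability limit for the \smflp established in Section~\ref{sec:complexity}: under $\PP \neq \NP$, no polynomial-time algorithm can achieve a factor below this threshold. Since $\alpha \approx 2.04011$ lies at this threshold, $CS(\alpha)$ is optimal among polynomial-time approximation algorithms, completing the theorem.

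The main obstacle, as I see it, is not in this proof itself (which is essentially an algebraic verification followed by an appeal to prior results) but rather ensuring that the two results being combined are genuinely compatible: specifically, that the upper-bound curve $1 + 8e^{-\gamma}$ actually intersects the identity line $\gamma = \alpha$ in the regime $\gamma \geq \gamma_0$. This is precisely what the numerical estimates $\gamma_0 \approx 2.00492 < \alpha \approx 2.04011$ certify, so the argument closes cleanly. No additional guessing or case analysis is required beyond verifying these two numerical inequalities and substituting into the bound.
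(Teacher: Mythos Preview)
Your proposal is correct and takes essentially the same approach as the paper, which states that the theorem ``follows immediately'' from Theorem~\ref{theorem-factor} (specialized to $\gamma = \alpha \ge \gamma_0$ so that the maximum equals $1+8e^{-\alpha}=\alpha$) together with the inapproximability result of Theorem~\ref{theorem-inapprox}. You have simply written out explicitly the algebraic verification and the summation over facilities and cities that the paper leaves implicit.
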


\paragraph{Relaxed triangle inequality.}

We notice that the analysis of Lemma~\ref{lemma-alg1-323} and of
Theorem~\ref{theorem-factor} apply to a more general case of the \flp when the
connection cost function satisfies $c_{ij}\leq 3 (c_{ij'} + c_{i'j'} + c_{i'j})$ 
for all facilities~$i$ and~$i'$, and cities~$j$ and $j'$. Charikar
\etal~\cite{Charikar1999-constant} considered a similar relaxed triangle
inequality to extend their constant approximation for the $k$-medians problem with
center costs to the case in which the objective is to minimize the sum of the
squares of the distances of clients to their nearest centers.

We say that a connection cost function $c$ for the \flp satisfies a
$\tau$-relaxed triangle inequality if $c_{ij}\leq \tau \cdot (c_{ij'} + c_{i'j'}
+ c_{i'j}),$ for all~$i, i' \in F$, and $j,j' \in C$. Also, we say that the
subset of the \flp that contains only instances that satisfy the $\tau$-relaxed
triangle inequality is the $\tau$-{\sc relaxed \flp}. The following
theorems extend Theorems~\ref{theorem-inapprox} and~\ref{theorem-factor}.

\begin{theorem}\label{theorem-inapprox-tau}
Let $\gamma_f$ and $\gamma_c$ be positive constants with $\gamma_c < 1 + (3 \tau - 1)
e^{-\gamma_f}$. If there is a $(\gamma_f,\gamma_c)$-approximation for the
$\tau$-relaxed \flp, then $\PP = \NP$. 
\end{theorem}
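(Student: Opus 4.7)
The plan is to mirror the bi-factor hardness argument of Guha and Khuller~\cite{GuhaK99}, which (in its Sviridenko-strengthened form) underlies Theorem~\ref{theorem-inapprox}, and to substitute the single quantity ``$3$'' (the diameter allowed by the ordinary triangle inequality over unit edges) by ``$3\tau$'' (the diameter allowed by the $\tau$-relaxed inequality over unit edges). The algebra of the gap calculation then delivers $1+(3\tau-1)e^{-\gamma_f}$ in place of $1+2e^{-\gamma_f}$.

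The starting point is the standard parametric hardness of {\sc Max-$k$-Coverage}: for every $\gamma_f\ge 1$ and $\varepsilon>0$, it is \NP-hard to distinguish, given a family of subsets of an $n$-element universe and a parameter~$k$, the case where some~$k$ sets cover the whole universe from the case where every choice of $\gamma_f k$ sets covers at most $(1-e^{-\gamma_f}+\varepsilon)n$ elements; this is precisely the hypothesis used for the metric and squared-metric hardness results cited in the paper. From such an instance, the reduction builds a $\tau$-relaxed \flp instance with one facility per set, one city per element, uniform facility cost $f_i=n/k$, and $c_{ij}=1$ when element~$j$ belongs to set~$i$, $c_{ij}=3\tau$ otherwise. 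The $\tau$-relaxed triangle inequality holds trivially, since for all $i,i',j,j'$ one has $c_{ij}\le 3\tau \le \tau(c_{ij'}+c_{i'j'}+c_{i'j})$, as each of the three summands is at least~$1$.

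The gap analysis then proceeds as follows. In the YES case, opening the $k$ sets of a cover yields facility cost $n$ and connection cost $n$, so the optimum is at most $2n$ with facility and connection components each equal to $n$. A $(\gamma_f,\gamma_c)$-approximation therefore opens at most $\gamma_f k$ facilities (more would exceed $\gamma_f n$ in opening cost) and produces connection cost at most $\gamma_c n$. In the NO case, any $\gamma_f k$ facilities leave at least $(e^{-\gamma_f}-\varepsilon)n$ cities uncovered, each contributing $3\tau$ to the connection cost, for a total of at least $n+(3\tau-1)(e^{-\gamma_f}-\varepsilon)n$. For $\gamma_c<1+(3\tau-1)e^{-\gamma_f}$ one can pick $\varepsilon$ small enough that this exceeds $\gamma_c n$, contradicting the approximation guarantee and forcing $\PP=\NP$. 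The only delicate ingredient is the parametric form of Feige's hardness, where the ``$\gamma_f k$ sets'' version (rather than the basic ``$k$ sets'' version) is what produces the $e^{-\gamma_f}$ factor; this is the same ingredient invoked for Theorem~\ref{theorem-inapprox}, so the remainder of the argument is a purely mechanical substitution $3\mapsto 3\tau$.
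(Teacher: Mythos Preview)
Your reduction and the choice of connection costs $1$ and $3\tau$ are right, and the $\tau$-relaxed inequality check is fine. The hole is in the bi-factor bookkeeping. You argue that a $(\gamma_f,\gamma_c)$-approximation ``opens at most $\gamma_f k$ facilities \dots\ and produces connection cost at most $\gamma_c n$'', i.e.\ you treat the two factors as \emph{separate} bounds on facility and connection cost. The definition used in the paper (see the proof of Theorem~\ref{theorem-inapprox}) is only the \emph{total-cost} bound: the output has cost at most $\gamma_f F + \gamma_c C$ for any reference solution with costs $(F,C)$. Under that definition nothing prevents the algorithm from opening $\beta k$ facilities with $\beta \neq \gamma_f$; with your fixed price $f_i = n/k$, the NO-case cost is then at least $\bigl(\beta + 1 + (3\tau-1)e^{-\beta}\bigr)n$, and the minimum of this over $\beta$ does \emph{not} exceed $(\gamma_f+\gamma_c)n$ for arbitrary $(\gamma_f,\gamma_c)$ just below the curve $\gamma_c = 1+(3\tau-1)e^{-\gamma_f}$. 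So as written, the distinguisher fails.

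The missing ingredient is a tunable facility price $f_i = \gamma\, n/k$. The paper's proof of Theorem~\ref{theorem-inapprox} (which extends verbatim here with $9\to 3\tau$ and $8\to 3\tau-1$) carries this free parameter through, compares the bi-factor bound $\gamma_f\gamma+\gamma_c$ against the actual cost $\beta\gamma + 3\tau - (3\tau-1)d$, and only at the end chooses $\gamma$ (essentially $\gamma \approx (3\tau-1)e^{-\gamma_f}$) so that the minimum over $\beta$ lands at $\beta=\gamma_f$ and forces the contradiction $\gamma_c \ge 1+(3\tau-1)e^{-\gamma_f}$. Your one-shot Max-$k$-Coverage route can be made to work, but only after inserting this scaling and then arguing over all $\beta$ rather than just $\beta=\gamma_f$.
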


\begin{theorem}\label{theorem-factor-tau}
For $\gamma \ge 1$, $CS(\gamma)$ is a 
$(\gamma,\max \{1 + (3\tau - 1) e^{-\gamma}, \frac{(2 \tau - 1) e^{-\gamma}+e^{-1}}{1 - \gamma^{-1}}\})$-approxi\-mation
for the $\tau$-relaxed \flp.
\end{theorem}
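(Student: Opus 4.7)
The plan is to reprise the proof of Theorem~\ref{theorem-factor}, tracing the single place where the squared-metric constant $3$ enters and replacing it by the general relaxation constant $\tau$. The only use of the squared metric in the whole chain of arguments is inside Lemma~\ref{lemma-tri}: we pick a facility $l$ realizing $\djjl=\min_{l\in F}(c_{lj}+c_{lj'})$, and from the $\sqrt{\cdot}$-form of the triangle inequality deduce $c_{ij}\le 3(\djjl+c_{ij'})$ for every $i\in C_{j'}\setminus F_j$. Under the $\tau$-relaxed triangle inequality we instead obtain, directly, $c_{ij}\le \tau(c_{lj}+c_{lj'}+c_{ij'})=\tau(\djjl+c_{ij'})$. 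The remainder of Lemma~\ref{lemma-tri}'s proof (the choice of cluster center, the use of $\djlc+\djlm\le\djc+\djm$, etc.) is metric-free, yielding the $\tau$-analogue
$$d(j,C_{j'}\setminus F_j)\le \tau\bigl((2-\gamma)\djm+(\gamma-1)\djd+\djlm+\djlc\bigr),$$
and, by the same bookkeeping as in Lemma~\ref{lemma-tri-corollary}, the $\tau$-analogue $d(j,C_{j'}\setminus F_j)\le \tau(\gamma\djf+(3-\gamma)\djd)$.

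Next, I would repeat the expected-cost computation of Theorem~\ref{theorem-factor}. The facility bound $E[y_if_i]=\gamma F^*_i$ does not depend on the cost function and is inherited verbatim. For the connection cost, if $j$ is a cluster center the expected cost is at most $\djc\le\djf=C^*_j$; otherwise, letting $p_c$, $p_d$, $p_s$ denote the probabilities that the nearest open facility lies in $C_j$, in $D_j$, or in $C_{j'}\setminus F_j$ (so $p_c+p_d+p_s=1$), Lemma~\ref{lemma-exp-distance} together with the $\tau$-analogue of Lemma~\ref{lemma-tri-corollary} gives
$$E\Bigl[\min_{i\in F,\,y_i=1}c_{ij}\Bigr]\le p_c\djc+p_d\djd+p_s\cdot\tau\bigl(\gamma\djf+(3-\gamma)\djd\bigr).$$
Substituting $\djc=(1-\rho_j)\djf$ and $\djd=\bigl(1+\tfrac{\rho_j}{\gamma-1}\bigr)\djf$ and running the same algebraic collapse as in Theorem~\ref{theorem-factor}, the constant (in $\rho_j$) coefficient of $\djf$ becomes $p_c+p_d+3\tau p_s=1+(3\tau-1)p_s$, and the coefficient of $\rho_j\djf$ reduces to $\frac{(2\tau-1)p_s+1-p_c}{1-1/\gamma}$.

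Finally, I would apply the two deterministic inequalities from~\cite{ByrkaA2010}, $p_s\le e^{-\gamma}$ and $p_c\ge 1-e^{-1}$, to bound the two coefficients above by $1+(3\tau-1)e^{-\gamma}$ and $\frac{(2\tau-1)e^{-\gamma}+e^{-1}}{1-1/\gamma}$, respectively. Since the resulting expression is an affine combination in $\rho_j\in[0,1]$, it is bounded by the maximum of the two vertex values, and multiplying by $\djf=C^*_j$ yields the claim. The main obstacle is verifying that the four-point step inside Lemma~\ref{lemma-tri} is indeed the unique place where the squared-metric property was used; once that is confirmed, the rest of the argument is a mechanical substitution of $3\mapsto\tau$ in the proofs of Lemmas~\ref{lemma-tri} and~\ref{lemma-tri-corollary} and of Theorem~\ref{theorem-factor}, with the selection rule for $j'$, the bounds $p_s\le e^{-\gamma}$, $p_c\ge 1-e^{-1}$, and the convexity-in-$\rho_j$ reduction to the two extreme values all carrying over verbatim.
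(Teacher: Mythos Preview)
Your proposal is correct and is exactly the approach the paper intends: the paper does not give an explicit proof of Theorem~\ref{theorem-factor-tau}, only remarks that the analysis of Theorem~\ref{theorem-factor} applies with the constant $3$ replaced by $\tau$, and your write-up carries this out faithfully. One small wording issue: what you call ``the coefficient of $\rho_j\djf$'' is really the value of the affine expression at $\rho_j=1$ after rewriting it as $(1+(3\tau-1)p_s)(1-\rho_j)+X\rho_j$; your subsequent sentence about bounding by the maximum of the two vertex values makes clear you mean this, so the argument is fine as stated.
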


Let $\alpha(\tau)$ be the solution of equation $\gamma = 1 + (3 \tau -
1)e^{-\gamma}$. It is straightforward to verify
that for $\tau
\ge 2.620\dots$ we have $1 + (3\tau - 1) e^{-\alpha(\tau)} \ge
\frac{(2 \tau - 1) e^{-\alpha(\tau)}+e^{-1}}{1 - \alpha(\tau)^{-1}}$.
Therefore, for $\tau \ge 2.620\dots$, Algorithm $CS(\alpha(\tau))$ has
the best approximation factor for the $\tau$-relaxed \flp.

We say that the {\sc Metric$^\alpha$ \flp}, denoted M$^\alpha$FLP, is the variant of \flp that
considers instances such that the connection cost function is the $\alpha^{\mbox{\small th}}$ power
of a given metric.
We may use the following known fact to derive approximations for M$^\alpha$FLP using 
approximations for $\tau$-relaxed \flp{}s.

\begin{lemma}\label{lemma-power-relax}
Let $A$, $B$, $C$, and $D$ be non-negative numbers
such that $A \leq B + C + D$, and let $\alpha \geq 1$, then 
$A^\alpha \leq 3^{\alpha-1} ( B^\alpha + C^\alpha + D^\alpha )$.
\end{lemma}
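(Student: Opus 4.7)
The plan is to reduce the statement to Jensen's inequality applied to the convex function $x \mapsto x^\alpha$ on $[0,\infty)$. First I would observe that since $A, B, C, D \ge 0$ and $\alpha \ge 1$, the map $t \mapsto t^\alpha$ is monotonically nondecreasing on $[0,\infty)$, so from the hypothesis $A \le B + C + D$ it follows immediately that $A^\alpha \le (B+C+D)^\alpha$. Hence it suffices to prove the three-variable inequality
\begin{equation*}
(B + C + D)^\alpha \le 3^{\alpha - 1}\,(B^\alpha + C^\alpha + D^\alpha).
\end{equation*}

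Next, I would invoke convexity of $f(x) = x^\alpha$ on $[0,\infty)$, which holds because $\alpha \ge 1$ (the second derivative $\alpha(\alpha-1)x^{\alpha-2}$ is nonnegative for $x > 0$ and the inequality is trivial at the boundary). Applying Jensen's inequality with the uniform weights $1/3$ to the points $B, C, D$ yields
\begin{equation*}
\left(\frac{B + C + D}{3}\right)^{\!\alpha} \le \frac{B^\alpha + C^\alpha + D^\alpha}{3}.
\end{equation*}
Multiplying both sides by $3^\alpha$ gives precisely the desired three-variable bound, and combining with the monotonicity step finishes the proof.

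I do not anticipate any real obstacle: the argument is a direct application of the power-mean (or Jensen) inequality, and the only subtle point is remembering to separate the monotonicity step (which uses $A \le B + C + D$) from the convexity step (which turns the sum of $\alpha$-th powers into the $\alpha$-th power of the sum, at the cost of the factor $3^{\alpha-1}$). Equality holds when $B = C = D$ and $A = B + C + D$, which confirms that the constant $3^{\alpha - 1}$ is best possible, consistent with the $\tau = 3^{\alpha-1}$ relaxed triangle inequality obtained by applying this lemma to the M$^\alpha$FLP instances discussed in the surrounding text.
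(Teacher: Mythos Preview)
Your proof is correct and is the standard argument via monotonicity plus Jensen's inequality for the convex function $x \mapsto x^\alpha$ on $[0,\infty)$. The paper itself does not supply a proof of this lemma; it is stated there only as a ``known fact,'' so there is nothing to compare against, and your write-up fills the gap cleanly.
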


This implies that the connection cost function that is the $\alpha^{\mbox{\small th}}$
power of a metric satisfies the $(3^{\alpha-1})$-relaxed triangle inequality, and
therefore M$^\alpha$FLP is a particular case of the $(3^{\alpha-1})$-relaxed
\flp.

\section{The inapproximability threshold for \smflp}\label{sec:complexity}

For the \mflp, Jain \etal~\cite{JainMMSV03} adapted the $1.463$ hardness result
by Guha and Khuller~\cite{GuhaK99}, and showed that no algorithm is a $(\gamma_f,
\gamma_c)$-approximation, with $\gamma_c < 1 + 2 e^{-\gamma_f}$, unless $\NP
\subseteq \DTIME[n^{\Oh(\log \log n)}]$.
Following the lines of Sviridenko (see~Vygen~\cite[Section
  4.4]{Vygen2005}), the condition is strengthened to \emph{unless
  $\PP=\NP$}. We extend these results for the \smflp as follows.

\newcommand{\PasteTheoremInapprox}{
Let $\gamma_f$ and $\gamma_c$ be positive constants with $\gamma_c < 1 + 8
e^{-\gamma_f}$. If there is a $(\gamma_f,\gamma_c)$-approximation for the
\smflp, then $\PP = \NP$. In particular, let $\alpha \approx 2.04011$ be the
solution of the equation $\gamma = 1 + 8 e^{-\gamma}$, then there is no
$\alpha'$-approximation with $\alpha' < \alpha$ for the \smflp unless $\PP =
\NP$.
}
\begin{theorem}\label{theorem-inapprox}
\PasteTheoremInapprox
\end{theorem}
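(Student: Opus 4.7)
The plan is to adapt the Guha--Khuller reduction from Set Cover to \flp, in the strengthened form due to Sviridenko (see Vygen~\cite{Vygen2005}, Section~4.4), into the squared-metric setting. The essential observation is that, under the squared-metric inequality $\sqrt{c_{ij}} \le \sqrt{c_{ij'}} + \sqrt{c_{i'j'}} + \sqrt{c_{i'j}}$, a non-covering facility--city pair may be placed at distance $c_{ij} = 9$ rather than the $c_{ij} = 3$ used in the metric case: indeed, $\sqrt{9} = 3 = 1 + 1 + 1$, so the inequality becomes tight when the other three distances equal $1$. The constant $8 = 9 - 1$ then replaces the $2 = 3 - 1$ appearing in the metric inapproximability threshold $1 + 2\, e^{-\gamma_f}$ of~\cite{JainMMSV03}, yielding the claimed $1 + 8\, e^{-\gamma_f}$.

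I would first invoke the partial-cover form of Feige's Set Cover hardness under the assumption $\PP \neq \NP$ (via Sviridenko's strengthening): for every $\epsilon > 0$, no polynomial-time algorithm can distinguish (Y) Set Cover instances $(U, \mathcal{S})$ with $|U| = n$ having a cover of size at most $k$, from (N) those in which no sub-family of $\lfloor p k \rfloor$ sets covers more than $(1 - e^{-p} + \epsilon)\,n$ elements of $U$, where the latter is required to hold simultaneously over the relevant range of $p$ (this being implicit in the structure of the gap construction). From such an instance, form the \smflp instance $(C, F, c, f)$ with $C := U$, $F := \mathcal{S}$, uniform opening cost $f_i := f$ (a parameter to be fixed), and $c_{ij} := 1$ if $j \in S_i$, else $c_{ij} := 9$. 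The squared-metric property is immediate, since $\sqrt{c_{ij}} \in \{1, 3\}$ and $3 \le 1 + 1 + 1$.

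Suppose for contradiction that a polynomial-time $(\gamma_f, \gamma_c)$-approximation $A$ exists with $\gamma_c < 1 + 8\, e^{-\gamma_f}$. In (Y), opening the $k$ cover sets yields a feasible integer solution with facility and connection costs $(kf, n)$, so $A$'s output has total cost at most $\gamma_f\,kf + \gamma_c\,n$. In (N), any integer solution opening $k'$ facilities and covering $n'$ cities has cost $k'f + n' + 9(n - n') = k'f + n + 8(n - n')$; applying the partial-cover property with $p := k'/k$ gives $n - n' \ge (e^{-p} - \epsilon)\,n$, whence the cost is at least $k'f + n + 8(e^{-p} - \epsilon)\,n$. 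Choose $f := 8\, e^{-\gamma_f}\,n/k$, so that the relaxed minimum over $p \ge 0$ of $p\,kf + 8\, e^{-p}\,n$ is attained at $p = \gamma_f$ with value $8\, e^{-\gamma_f}(\gamma_f + 1)\,n$. Thus $A$'s output in (N) has cost at least $(1 + 8\, e^{-\gamma_f}(\gamma_f + 1) - 8 \epsilon)\,n$, while in (Y) it is at most $(8 \gamma_f\, e^{-\gamma_f} + \gamma_c)\,n$. The gap simplifies to $(1 + 8\, e^{-\gamma_f} - \gamma_c - 8 \epsilon)\,n$, which is strictly positive for $\epsilon$ sufficiently small by the hypothesis $\gamma_c < 1 + 8\, e^{-\gamma_f}$, letting $A$ distinguish (Y) from (N) in polynomial time and hence implying $\PP = \NP$. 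The single-factor statement follows by setting $\gamma_f = \gamma_c = \alpha'$ with $\alpha' < \alpha$ and using that $g(\gamma) := 1 + 8\, e^{-\gamma} - \gamma$ is strictly decreasing with $g(\alpha) = 0$, so $g(\alpha') > 0$.

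The main technical obstacle is the invocation of the \emph{simultaneous} partial-cover hardness under $\PP \neq \NP$, which is Sviridenko's refinement of Feige's Set Cover result and must be cited rather than reproven. Beyond that, the argument is a direct algebraic mirror of the metric-case proof of~\cite{JainMMSV03}, with all constants scaled to reflect the $9$-to-$1$ distance ratio permitted by the squared-metric inequality.
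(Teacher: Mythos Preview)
Your proposal is correct and uses the same underlying reduction as the paper: Set Cover to \smflp with connection costs $c_{ij}=1$ for covering pairs and $c_{ij}=9$ otherwise, exploiting that $\sqrt{9}=1+1+1$ makes the squared-metric inequality tight. The paper, however, presents the argument in the iterative Guha--Khuller form: it assumes a $(\gamma_f,\gamma_c)$-approximation, repeatedly runs it on residual instances, and shows that this yields a $(d'\ln n)$-approximation for Set Cover with $d'<1$, first under $\NP\subseteq\DTIME[n^{\Oh(\log\log n)}]$ and then deferring to Sviridenko for the $\PP=\NP$ strengthening. You instead go directly to the gap formulation (the Sviridenko/Vygen presentation), choosing $f=8e^{-\gamma_f}n/k$ and comparing the (Y) upper bound $\gamma_f kf+\gamma_c n$ against the (N) lower bound minimized over $p=k'/k$. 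Both routes are standard and yield the same constants; yours is shorter and avoids the iterative machinery, at the cost of needing the simultaneous-$p$ partial-cover hardness as a black box, which you correctly flag.
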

\begin{proof}[Adapted from~\cite{GuhaK99}]
For simplicity, here we show that the lower bound holds unless $\NP \subseteq
\DTIME[n^{\Oh(\log \log n)}]$. If we follow the lines of Sviridenko
(see~Vygen~\cite[Section~4.4]{Vygen2005}), the condition is changed to unless
$\PP=\NP$.

Assume $A$ is a $(\gamma_f,\gamma_c)$-approximation for the \smflp with
$\gamma_c < 1 + 8 e^{-\gamma_f}$. Let $\calj=(\calu, \cals)$ be an instance of
the Set Cover, with $\calu$ being a set of elements, $\cals$ a collection
of subsets of $\calu$ and $n=|\calu|$. We will derive a $(d'\ln
n)$-approximation algorithm for the Set Cover problem, for some $d'<1$.

Let $k$ be the optimal value of $\calj$ for the Set Cover.
If $k$ is not known, one can run this algorithm for $k=1,\ldots,n$ and output
the best solution found.

The algorithm will find a solution for $\calj$ by iteratively solving a
sequence of instances of the \smflp of the form
$\cali^{(j)}=(C^{(j)},F,c,f^{(j)})$, where $F=\cals$ and the initial
set $C^{(1)} = \calu$. For each element $x_j \in S_i$, set $c_{ij} = 1$,
and for each $x_j \not\in S_i$, set $c_{ij}=9$. Note that such $c$ is a squared
metric. Let $n_j=|C^{(j)}|$.  In the $j$th instance, every facility cost is
$f^{(j)} = \gamma \frac{n_j}{k}$, for some positive $\gamma$ to be fixed later.
For each $j$, let $S^{(j)}$ denote the solution for $\cali^{(j)}$ produced
by Algorithm $A$ and let $C^{(j+1)}$ be the elements of $C^{(j)}$ not covered by
any set in $S^{(j)}$.  This process stops when $C^{(j+1)}=\emptyset$ and yields
the solution $S^{(1)}\cup\cdots\cup S^{(j)}$ for $\calj$.

Observe that an optimal solution for $\calj$ is a solution for each
$\cali^{(j)}$ with total facility cost $k\,f^{(j)}$ and connection cost one
for each of the $n_j$ cities. Therefore, $S^{(j)}$ has cost at most
$\gamma_f k f^{(j)} +\gamma_c n_j = (\gamma_f \gamma + \gamma_c) n_j$, because
$f^{(j)} = \gamma \frac{n_j}{k}$.  Let $\beta_j = |S^{(j)}|/k$ and $d_j$ be such
that $d_jn_j$ is the number of elements covered in iteration $j$, that is, the
number of elements of $C^{(j)}$ in the union of the sets in $S^{(j)}$.  Then the
total facility cost of $S^{(j)}$ is $\beta_j k f^{(j)} = \beta_j \gamma\, n_j$. 
Moreover, $d_jn_j$ cities are connected with cost one and the other
$n_j-d_jn_j = (1-d_j)n_j$ cities are connected with cost nine.  Hence the total
cost of $S^{(j)}$ is $\beta_j \gamma n_j + d_j n_j + 9(1 - d_j)n_j = (\beta_j
\gamma + 9 - 8d_j)n_j$. We conclude that $\gamma_f \gamma + \gamma_c \ \ge
\ \beta_j \gamma + 9 - 8d_j.$ So we have that $\gamma_c \ \ge \ (\beta_j
- \gamma_f) \gamma + 9 - 8d_j$.

Let $d<1$ be such that $1+8e^{-\gamma_f/d} > \gamma_c$.
Suppose, for the sake of contradiction, that $d_j \leq 1-e^{-\beta_j/d}$ for
some~$j$. Then
\begin{align*}
\gamma_c  &\ge  (\beta_j - \gamma_f) \gamma  + 9 - 8(1-e^{-\beta_j/d}).
\end{align*}
Considering $\gamma_f$, $\gamma$ and $d$ fixed, the minimum value of the right hand
side is achieved when $\beta_j = d \ln \frac{8}{d \, \gamma}$. Substituting
$\beta_j$ above, we get
\begin{align*}
\gamma_c  &\ge  (d\ln \frac{8}{d\,\gamma} - \gamma_f) \gamma  + 1 + d\,\gamma.
\end{align*}
Considering $d$ and $\gamma_f$ fixed, we choose the value of $\gamma$ that maximizes the
right hand side, that is, $\gamma = \frac{8}{d} e^{-\frac{\gamma_f}{d}}$.
Replacing in the inequality, we obtain $\gamma_c  \ge  1 +
8\,e^{-\frac{\gamma_f}{d}} > \gamma_c$, a contradiction.
So $d_j > 1-e^{-\beta_j/d}$ for every $j$, for this $d<1$.

Following the lines of Guha and Khuller~\cite{GuhaK99},
one can prove that the algorithm described above for the Set Cover is a $(d'\ln n)$-approximation
for some $d'<1$. This  implies that $\NP \subseteq \DTIME[n^{\Oh(\log \log n)}]$.
\end{proof}

\section{Concluding remarks}\label{sec:conclu}

We presented a new technique for deriving upper bound factor-revealing programs, that
can be solved by computer, as an alternative way to obtain an upper bound on the
approximation factors of the corresponding algorithm. This technique allowed us to
tighten the obtained approximation factors, and to simplify the analysis of the three
primal-dual algorithms, when used for both $\smflp$ and $\mflp$ instances. We hope
that this technique can be employed for other problems and algorithms analyzed
through factor-revealing LPs. We also showed that the algorithm of Chudak and
Shmoys\cite{Chudak2004} is a $2.04$-approximation and that it has the best
approximation factor for the \smflp. This is in contrast to the $\mflp$, for which
this algorithm is a $1.575$-approximation and the lower bound by Guha and
Kuller~\cite{GuhaK99} is $1.463$. Also, we note that, although there is an
approximation scheme for Euclidean \flp by Arora~\etal~\cite{AroraRR98}, the analysis
of this algorithm is not valid for non-metric \flp, such as the \seflp. We do not
know whether \seflp has an approximation strictly better than $2.04$.

\bibliographystyle{plain}
\bibliography{squared}

\appendix

\clearpage
\newenvironment{statement}[1]{\noindent{}{\bf #1.} \it}{\normalfont\par}

\section{Square root constraints}
\label{apendice-provas}

The proof of Lemma~\ref{corollary-alg1-inequality} is a straightforward 
consequence of Lemma~\ref{lemma-alg1-metric} and the following result. 

\newcommand{\PasteLemmaIneq}{
Let $A$, $B$, $C$, and $D$ be non-negative numbers.
Then $\sqrt{A} \leq \sqrt{B} + \sqrt{C} + \sqrt{D}$ if and
only if
$A \leq (1 + \beta  + \frac{1}{\gamma}) B
      + (1 + \gamma + \frac{1}{\delta}) C
      + (1 + \delta + \frac{1}{\beta} ) D$
for every positive numbers $\beta$, $\gamma$, and~$\delta$.
In particular, if $\sqrt{A} \leq \sqrt{B}+\sqrt{C}+\sqrt{D}$,
then $A \leq 3B + 3C + 3D$.
}
\begin{lemma}\label{lemma-ineq}
\PasteLemmaIneq
\end{lemma}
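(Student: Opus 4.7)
The plan is to prove both implications of the biconditional directly. The key identity driving everything is $(\sqrt{B}+\sqrt{C}+\sqrt{D})^{2}=B+C+D+2\sqrt{BC}+2\sqrt{BD}+2\sqrt{CD}$, so the linear inequality in the statement should arise from bounding each of the three cross terms by a weighted AM--GM, and the equality case of AM--GM should give tightness.

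For the forward direction I would assume $\sqrt{A}\leq\sqrt{B}+\sqrt{C}+\sqrt{D}$ and square to get
\[
A\ \leq\ B+C+D+2\sqrt{BD}+2\sqrt{BC}+2\sqrt{CD}.
\]
Now I would apply AM--GM three times with weights tailored so that the resulting $B,C,D$ coefficients match exactly the target $\beta+\tfrac{1}{\gamma}$, $\gamma+\tfrac{1}{\delta}$, $\delta+\tfrac{1}{\beta}$. The right pairing is
\[
2\sqrt{BD}\leq \beta B+\tfrac{1}{\beta}D,\qquad 2\sqrt{BC}\leq \tfrac{1}{\gamma}B+\gamma C,\qquad 2\sqrt{CD}\leq \tfrac{1}{\delta}C+\delta D,
\]
which indeed distributes $\beta$ to $B$, $\gamma$ to $C$, $\delta$ to $D$ and their reciprocals to the remaining variables in exactly the cyclic pattern prescribed. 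Summing these three bounds and adding $B+C+D$ gives the desired inequality for every positive $\beta,\gamma,\delta$.

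For the converse direction I would show that one specific choice of $\beta,\gamma,\delta$ already forces $A\leq(\sqrt{B}+\sqrt{C}+\sqrt{D})^{2}$. Assuming first that $B,C,D>0$, the AM--GM bounds above are tight precisely when $\beta B=\tfrac{1}{\beta}D$, $\gamma C=\tfrac{1}{\gamma}B$, $\delta D=\tfrac{1}{\delta}C$, that is, $\beta=\sqrt{D/B}$, $\gamma=\sqrt{B/C}$, $\delta=\sqrt{C/D}$. Plugging these values into the hypothesis recovers exactly $A\leq(\sqrt{B}+\sqrt{C}+\sqrt{D})^{2}$, from which $\sqrt{A}\leq\sqrt{B}+\sqrt{C}+\sqrt{D}$. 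The only subtlety is when some of $B,C,D$ vanish, which I would handle by taking appropriate limits of positive triples (e.g.\ if $B=0$, let $\beta\to\infty$ and $\gamma\to 0^{+}$; the terms $\beta B$ and $B/\gamma$ stay zero while $D/\beta$ and $\gamma C$ vanish, so the relevant coefficients collapse to those of the two-term inequality $A\leq(\sqrt{C}+\sqrt{D})^{2}$, which is then tightened by choosing $\delta=\sqrt{C/D}$).

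The final ``in particular'' claim is then immediate by specializing to $\beta=\gamma=\delta=1$, which gives each coefficient equal to $1+1+1=3$. I do not expect any serious obstacle here: the only delicate point is bookkeeping the correct cyclic pairing of $\beta,\gamma,\delta$ with the three cross terms so that the coefficients match the asymmetric form in the statement, and checking the boundary case where one of $B,C,D$ is zero.
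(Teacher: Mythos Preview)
Your proof is correct and follows essentially the same route as the paper: the forward direction via weighted AM--GM on the three cross terms is identical, and for the converse the paper uses exactly your tight choice $\beta=\sqrt{D/B}$, $\gamma=\sqrt{B/C}$, $\delta=\sqrt{C/D}$ (phrased there as the contrapositive). The one difference is that the paper handles the degenerate cases where some of $B,C,D$ vanish by an explicit four-case split rather than a limiting argument, which makes the lemma \emph{constructive}---given a violated square-root constraint it produces an explicit violated linear inequality---a feature the authors rely on for their cutting-plane strategy in Lemmas~\ref{lemma-alg1-frlp-cut} and~\ref{lemma-alg1-upfrlp-cut}.
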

\begin{proof}
Suppose $\sqrt{A} \leq \sqrt{B} + \sqrt{C} + \sqrt{D}$.
As $(\sqrt{\beta B} - \sqrt{D/\beta})^2 \geq 0$, we have that
  $2\sqrt{BD} \leq \beta  B + D/\beta $.  Similarly,
  $2\sqrt{CB} \leq \gamma C + B/\gamma$ and
  $2\sqrt{DC} \leq \delta D + C/\delta$. Therefore, if
  $\sqrt{A} \leq \sqrt{B} + \sqrt{C} + \sqrt{D}$, then
\begin{eqnarray*}
A & \leq & (\sqrt{B} + \sqrt{C} + \sqrt{D})^2 \\
  &   =  & B + C + D + 2\sqrt{BD}         + 2\sqrt{CB}          + 2\sqrt{DC} \\
  & \leq & B + C + D + \beta  B + D/\beta + \gamma C + B/\gamma + \delta D + C/\delta \\
  &   =  & (1 + \beta  + \frac{1}{\gamma}) B
         + (1 + \gamma + \frac{1}{\delta}) C
         + (1 + \delta + \frac{1}{\beta} ) D.
\end{eqnarray*}
Choosing $\beta=\gamma=\delta=1$, we obtain $A \leq 3B+3C+3D$.

Now suppose $\sqrt{A} > \sqrt{B} + \sqrt{C} + \sqrt{D}$.
Let $d > 0$ be such that
$A = B + C + D + 2\sqrt{BD} + 2\sqrt{CB} + 2\sqrt{DC} + d$.
Then, $A > (1 + \beta  + \frac{1}{\gamma}) B
      + (1 + \gamma + \frac{1}{\delta}) C
      + (1 + \delta + \frac{1}{\beta} ) D$ is equivalent to
$(\beta  + \frac{1}{\gamma}) B  + (\gamma + \frac{1}{\delta}) C
+ (\delta + \frac{1}{\beta} ) D < 2\sqrt{BD} + 2\sqrt{CB} + 2\sqrt{DC} + d$.
We will analyze the cases in which none, one, two or all numbers $B$, $C$ and $D$
are zero.
Let $\xi$ and $\xi'$ be positive numbers such that $\xi + \xi' < 1$.
\begin{description}
 \item[Case 1:] $B, C, D > 0$.
 Let $\beta = \sqrt{\frac{D}{B}}$,
 $\gamma = \sqrt{\frac{B}{C}}$ and $\delta = \sqrt{\frac{C}{D}}$.
 Then $(\beta  + \frac{1}{\gamma}) B  + (\gamma + \frac{1}{\delta}) C
 + (\delta + \frac{1}{\beta} ) D =  2\sqrt{BD} + 2\sqrt{CB} + 2\sqrt{DC} <
  2\sqrt{BD} + 2\sqrt{CB} + 2\sqrt{DC} + d$.

 \item[Case 2:] $B = 0$ and $C, D > 0$.
 Let $\beta = \frac{D}{\xi d}$,
 $\gamma = \frac{\xi' d}{C}$ and $\delta = \sqrt{\frac{C}{D}}$.
 Then $(\beta  + \frac{1}{\gamma}) B  + (\gamma + \frac{1}{\delta}) C
 + (\delta + \frac{1}{\beta} ) D =  2\sqrt{DC} + (\xi + \xi') d <
  2\sqrt{BD} + 2\sqrt{CB} + 2\sqrt{DC} + d$.

 \item[Case 3:] $B, C = 0$ and $D > 0$.
 Let $\beta = \frac{D}{\xi d}$,
 $\gamma = 1$ and $\delta = \frac{\xi' d}{D}$.
 Then $(\beta  + \frac{1}{\gamma}) B  + (\gamma + \frac{1}{\delta}) C
 + (\delta + \frac{1}{\beta} ) D =  (\xi + \xi') d <
  2\sqrt{BD} + 2\sqrt{CB} + 2\sqrt{DC} + d$.

 \item[Case 4:] $B, C, D = 0$.
 Let $\beta = 1$, $\gamma = 1$ and $\delta = 1$.
 Then $(\beta  + \frac{1}{\gamma}) B  + (\gamma + \frac{1}{\delta}) C
 + (\delta + \frac{1}{\beta} ) D =  0 <
  2\sqrt{BD} + 2\sqrt{CB} + 2\sqrt{DC} + d$.\qedhere
\end{description}
\end{proof}

Observe that the lemma above is constructive in the sense that,
if the given inequality with square roots is not satisfied, 
then shows how to determine a linear inequality that is not 
satisfied. 













\clearpage
\section{Upper Bound Factor-Revealing Program for A2}
\label{sec-appendix-alg2}

Consider tuples $(\beta_i, \gamma_i, \delta_i) \in \RP^3$ and
$B_i = 1 + \beta_i  + \frac{1}{\gamma_i}$,
$C_i = 1 + \gamma_i + \frac{1}{\delta_i}$,
$D_i = 1 + \delta_i + \frac{1}{\beta_i}$ for $1 \le i \le m$.
Using Lemma~\ref{lemma-ineq}, we insert inequalities corresponding to these
tuples, replacing the nonlinear constraint, and obtain $\zz{A2}{k} \le \ww{A2}{k}$, where
$\ww{A2}{k}$ is given by
\begin{equation}\label{program-alg2-frlp-relax-ineq}
\begin{array}{rlll}
\ww{A2}{k} = & \mbox{\rm max }  & \sum_{j=1}^{k} \alpha_j  \\
      & \mbox{\rm s.t.}  & f + \sum_{j=1}^{k} d_j \le 1
                               & \\
      &                        & \alpha_j \le \alpha_{j+1}
                               & \forall \; 1 \le j < k\\
      &                        & r_{jl} \ge r_{j,l+1}
                               & \forall \; 1 \le j < l < k \\
      &                        & \alpha_l \le B_i r_{jl} + C_i d_l + D_i d_j
                               & \forall \; 1 \le j < l \le k , 1 \le i \le m\\
      &                        & r_{jl} - d_j  \le x_{jl}
                               & \forall \; 1 \le j < l \le k \\
      &                        & \alpha_l - d_j  \le x_{jl}
                               & \forall \; 1 \le l \le j \le k \\
      &                        & \sum_{j = 1}^{k}   x_{jl}  \le f
                               & \forall \; 1 \le l \le k \\
      &                        & \alpha_j, d_j, f, r_{jl}  \ge 0
                               & \forall \; 1 \le j \le l \le k \\
      &                        &  x_{jl} \ge 0
                               & \forall \; 1 \le j, l \le k.
\end{array}
\end{equation}

Now, we calculate the dual of program~\eqref{program-alg2-frlp-relax-ineq}
to derive the upper bound factor-revealing linear program.
After that, we calculate its dual program~\eqref{program-alg2-upfrlp-relax-ineq-dual},
in order to use Lemma~\ref{lemma-ineq}, and solve the upper bound factor-revealing
program inserting cutting planes.
We proceed the same way as done in Lemma~\ref{lemma-alg1-upfrlp-cut}.
With similar arguments, we may see that $\zz{A2}{k} \le \zz{A2}{kt}$, for any $t$,
and we assume that $k$ has the form $k = pt$, for some integer $t$.
The dual of linear program~\eqref{program-alg2-frlp-relax-ineq} is given
in the following.

\begin{equation}\label{program-alg2-frlp-relax-ineq-dual}
\begin{array}{rlll}
\ww{A2}{k} = & \mbox{\rm min}         & \gamma  \\
      & \mbox{\rm s.t.}        & a_l  - a_{l-1}
                               + \suml_{i=1}^{m}\suml_{j=1}^{l-1} c_{jli}
                               + \suml_{j=l}^{k}e_{jl} \ge 1
                               & \forall \, 1 \le l \le k \\
      &                        & \gamma
                               - \suml_{i=1}^{m} C_i \suml_{j=1}^{l-1} c_{jli}
                               - \suml_{i=1}^{m} D_i \suml_{j=l+1}^{k} c_{lji}
                               - \suml_{j=1}^{k} e_{lj} \ge  0
                               & \forall \, 1 \le l \le k \\
      &                        & \gamma
                               - \suml_{l=1}^{k} h_l \ge 0
                               &  \\
      &                        & b_{j,l-1} - b_{jl}
                               + e_{jl}
                               - \suml_{i=1}^{m} B_i c_{jli} \ge 0
                               & \forall \, 1 \le j < l \le k \\
      &                        & h_l
                               - e_{jl} \ge 0
                               & \forall \, 1 \le j , l \le k \\
      &                        & a_0 = a_k = b_{ll} = b_{lk} = 0
                               & \forall \, 1 \le l \le k \\
      &                        & a_l, h_l, e_{jl}\ge 0
                               & \forall \, 1 \le l, j \le k \\
      &                        & b_{jl}, c_{jli} \ge 0
                               & \forall \!
                               \begin{array}{l}
                                 1 \le j < l \le k  \\
                                 1 \le i \le m.
                               \end{array}\\
\end{array}
\end{equation}

Now, we may derive the upper bound factor-revealing linear program. Let $\hat{n}
= \ceil{\frac{n}{p}}$ and consider prime variables $\gamma', a_l', b_{jl}',
c_{jli}', e_{jl}', h_l'$. We obtain a candidate solution for
program~\eqref{program-alg2-frlp-relax-ineq-dual} by defining:

\begin{equation}\label{eq-alg2-upfrep-relax-ineq-def}
\begin{array}{c}
\gamma = \gamma',
\quad
a_l = p \, a_{\hat{l}}'  - (p \, \hat{l} - l) (a_{\hat{l}}' - a_{\hat{l}-1}'),
\quad
b_{jl} = b'_{\hat{j},\hat{l}} - \frac{p \, \hat{l} - l}{p} (b'_{\hat{j}\hat{l}} - b'_{\hat{j},\hat{l}-1}),
\\
c_{jll}   = \frac{c_{\hat{j}\hat{l}l}'}{p},
\qquad
e_{jl}     = \frac{e_{\hat{j}\hat{l}}'}{p}
\qquad
\mbox{ and }
\qquad
h_l         = \frac{h_{\hat{l}}'}{p}.
\end{array}
\end{equation}

In the following, we apply definition~\eqref{eq-alg2-upfrep-relax-ineq-def}
and calculate each coefficient expression for program~\eqref{program-alg2-frlp-relax-ineq-dual}.
Again, notice that  $a_l - a_{l-1} = a_{\hat{l}}' - a_{\hat{l}-1}'$,
and that $b_{j,l-1} - b_{jl} = (b'_{\hat{j},\hat{l}-1} - b'_{\hat{j}\hat{l}})/p$.
Also, fix variables $c_{lli}'$ at zero.

\begin{align*}
\coef{\alpha_l}
  &= a_l  - a_{l-1}
   + \sum_{i=1}^{m} \sum_{j=1}^{l-1}       c_{jli}
   +                \sum_{j=l}^{k}         e_{jl} \\
  &= a_{\hat{l}}' - a_{\hat{l}-1}'
   + \sum_{i=1}^{m} \sum_{j=1}^{l-1} \frac{c_{\hat{j}\hat{l}i}'}{p}
   +                \sum_{j=l}^{pt}  \frac{e_{\hat{j}\hat{l}}'}{p} \\
&\ge a_{\hat{l}}' - a_{\hat{l}-1}'
   + \sum_{i=1}^{m} \sum_{j'=1}^{\hat{l}-1}   p \frac{c_{j'\hat{l}i}'}{p}
   +                \sum_{j'=\hat{l} + 1}^{t} p \frac{e_{j'\hat{l}}'}{p} \\
  &= a_{\hat{l}}' - a_{\hat{l}-1}'
   + \sum_{i=1}^{m} \sum_{j'=1}^{\hat{l}-1}                        c_{j'\hat{l}i}'
   +                \sum_{j'=\hat{l} + 1}^{t}                      e_{j'\hat{l}}' \ge 1.
\end{align*}

\begin{align*}
\coef{d_l}
  &=  \gamma
      - \sum_{i=1}^{m} \sum_{j=1}^{l-1} C_i c_{jli}
      - \sum_{i=1}^{m} \sum_{j=l+1}^{k} D_i c_{lji}
      -                \sum_{j=1}^{k} e_{lj} \\
  &=  \gamma'
      - \sum_{i=1}^{m} C_i \sum_{j=1}^{l-1} \frac{c_{\hat{j}\hat{l}i}'}{p}
      - \sum_{i=1}^{m} D_i \sum_{j=l+1}^{k} \frac{c_{\hat{l}\hat{j}i}'}{p}
      -                    \sum_{j=1}^{k}       \frac{e_{\hat{l}\hat{j}}'}{p}  \\
  &\ge \gamma'
      - \sum_{i=1}^{m} C_i \sum_{j'=1}^{\hat{l}}   p \frac{c_{j'\hat{l}i}'}{p}
      - \sum_{i=1}^{m} D_i \sum_{j'=\hat{l}}^{t}   p \frac{c_{\hat{l}j',i}'}{p}
      -                    \sum_{j'=1}^{t}                      p \frac{e_{\hat{l}j'}'}{p}  \\
  &=   \gamma'
      - \sum_{i=1}^{m} C_i \sum_{j'=1}^{\hat{l} - 1}     c_{j'\hat{l}i}'
      - \sum_{i=1}^{m} D_i \sum_{j'=\hat{l}+1}^{t}     c_{\hat{l}j'i}'
      -                \sum_{j'=1}^{t}               e_{\hat{l}j'}' \ge 0.
\end{align*}

\begin{align*}
\coef{f}
   = \gamma  - \sum_{l=1}^{k} h_l
   = \gamma' - \sum_{l=1}^{k} \frac{h_{\hat{l}}'}{p}
   = \gamma' - \sum_{l'=1}^{t} p \cdot  \frac{h_{l'}'}{p}
   = \gamma' - \sum_{l'=1}^{t}                h_{l'} ' \ge 0.
\end{align*}

\begin{align*}
\coef{r_{j,l}}
    =  b_{j,l-1} - b_{jl} + e_{jl}
    - \sum_{i=1}^{r} B_i \, c_{jli}
    = \frac{b'_{\hat{j},\hat{l}-1} - b'_{\hat{j}\hat{l}}}{p} + \frac{e_{\hat{j}\hat{l}}'}{p}
    - \sum_{i=1}^{r} B_i \frac{c_{\hat{j}\hat{l}i}'}{p}  \ge 0.
\end{align*}

\begin{align*}
\coef{x_{jl}}
  =  h_l - e_{j,l}
  =  \frac{h_{\hat{l}}'}{p} -  \frac{e_{\hat{j}\hat{l}}'}{p} \ge 0.
\end{align*}

Conjoining all constraints, the obtained upper bound factor-revealing linear program is:

\begin{equation}\label{program-alg2-upfrlp-relax-ineq}
\begin{array}{rlll}
\xx{A2}{t} = & \mbox{\rm min}         & \gamma  \\
      & \mbox{\rm s.t.}        & a_l  - a_{l-1}
                               + \suml_{i=1}^{m}\suml_{j=1}^{l-1} c_{jli}
                               + \suml_{j=l+1}^{t}e_{jl} \ge 1
                               & \forall \, 1 \le l \le t \\
      &                        & \gamma
                               - \suml_{i=1}^{m} C_i \suml_{j=1}^{l-1} c_{jli}
                               - \suml_{i=1}^{m} D_i \suml_{j=l+1}^{t} c_{lji}
                               - \suml_{j=1}^{t} e_{lj} \ge  0
                               & \forall \, 1 \le l \le t \\
      &                        & \gamma
                               - \suml_{l=1}^{t} h_l \ge 0
                               &  \\
      &                        & b_{j,l-1} - b_{jl}
                               + e_{jl}
                               - \suml_{i=1}^{m} B_i c_{jli} \ge 0
                               & \forall \, 1 \le j < l \le t \\
      &                        & h_l
                               - e_{jl} \ge 0
                               & \forall \, 1 \le j , l \le t \\
      &                        & a_0 = a_t = b_{ll} = b_{lt} = 0
                               & \forall \, 1 \le l \le t \\
      &                        & a_l, h_l, e_{jl}\ge 0
                               & \forall \, 1 \le l, j \le t \\
      &                        & b_{jl}, c_{jli} \ge 0
                               & \forall \!
                               \begin{array}{l}
                                 1 \le j < l \le k  \\
                                 1 \le i \le m.
                               \end{array}\\
\end{array}
\end{equation}

Finally, calculating the dual of program~\eqref{program-alg2-upfrlp-relax-ineq},
we obtain program~\eqref{program-alg2-upfrlp-relax-ineq-dual}.

\begin{equation}\label{program-alg2-upfrlp-relax-ineq-dual}
\begin{array}{rlll}
\xx{A2}{t} = & \mbox{\rm max } & \sum_{j=1}^{t} \alpha_j  \\
      & \mbox{\rm s.t.}        & f + \sum_{j=1}^{t} d_j \le 1
                               & \\
      &                        & \alpha_j \le \alpha_{j+1}
                               & \forall \; 1 \le j < t\\
      &                        & r_{jl} \ge r_{j,l+1}
                               & \forall \; 1 \le j < l < t \\
      &                        & \alpha_l \le B_i r_{jl} + C_i d_l + D_i d_j
                               & \forall \; 1 \le j < l \le t , 1 \le i \le m\\
      &                        & r_{jl} - d_j  \le x_{jl}
                               & \forall \; 1 \le j < l \le t \\
      &                        & \alpha_l - d_j  \le x_{jl}
                               & \forall \; 1 \le l < j \le t \\
      &                        & \sum_{j = 1}^{t}   x_{jl}  \le f
                               & \forall \; 1 \le l \le t \\
      &                        & \alpha_j, d_j, f, r_{jl}  \ge 0
                               & \forall \; 1 \le j \le l \le t \\
      &                        &  x_{jl} \ge 0
                               & \forall \; 1 \le j, l \le t.
\end{array}
\end{equation}

\clearpage
\section{Experimental results}
\label{sec-appendix-experiments}

In Table~\ref{table-ex1}, we present computational results using CPLEX for the
lower bound (column $\zz{A1}{k}$) and upper bound (column $\xx{A1}{k}$) for the
approximation factor of Algorithm $A1$.
In Table~\ref{table-ex2}, we present lower and upper bounds on the approximation
factor of Algorithm $A2$ (columns $\zz{A2}{k}$ and $\xx{A2}{k}$, respectively).
In Table~\ref{table-ex3}, we present computational results for
program~\eqref{program-alg2-frlp} when $\gamma_f = 1.45$, and the approximation
factor obtained from Lemma~\ref{lemma-mahdian-bifactor}. The chosen $\delta$ is
given by the solution of equation $\gamma_f + \ln \delta  = 1 +
\frac{\gamma_c-1}{\delta}$, that is, $\delta =
e^{W_0((\gamma_c-1)e^{\gamma_f-1})-(\gamma_f-1)}$.
Figure~\ref{fig-factor} shows the trade-off between connection and facility
costs approximation guarantees for the Algorithm $A2$, and Figure~\ref{fig-trend}
shows the trend of obtained factor for Algorithm $A3$ as we vary the value of
$\gamma_f$, when $k = 50$.

\noindent{}
\hfil
\begin{minipage}[t][][t]{0.4\textwidth}
  \captionof{table}{Solutions of the factor-revealing programs for $A1$.\label{table-ex1}}
  \hfil
  \begin{tabular}{| r | c|c |}
    \hline  \multicolumn{1}{|c|}{$k$}      &  $\zz{A1}{k}$   & $\xx{A1}{k}$    \\ \hline
    10   & 2.57261 &  3.18162 \\ \hline
    20   & 2.71704 &  3.01717 \\ \hline
    50   & 2.80540 &  2.92579 \\ \hline
    100  & 2.83534 &  2.89553 \\ \hline
    200  & 2.85034 &  2.88046 \\ \hline
    300  & 2.85532 &  2.87543 \\ \hline
    400  & 2.85782 &  2.87292 \\ \hline
    500  & 2.85930 &  2.87142 \\ \hline
    600  & 2.86029  &  2.87041 \\ \hline
    700  & 2.86099 &  2.86970 \\ \hline
  \end{tabular}
  \hfil
\end{minipage}
\hfil
\begin{minipage}[t][][t]{0.4\textwidth}
  \captionof{table}{Solutions of the factor-revealing programs for $A2$.\label{table-ex2}}
  \hfil
  \begin{tabular}{|r|c|c|}
    \hline  \multicolumn{1}{|c|}{$k$}      &  $\zz{A2}{k}$   & $\xx{A2}{k}$    \\ \hline
    10   & 2.20702 &  2.65131 \\ \hline
    20   & 2.30987 &  2.53301 \\ \hline
    50   & 2.37551 &  2.46544 \\ \hline
    100  & 2.39773 &  2.44278 \\ \hline
    200  & 2.40894 &  2.43150 \\ \hline
    300  & 2.41267 &  2.42775 \\ \hline
    400  & 2.41453 &  2.42586 \\ \hline
    500  & 2.41565 &  2.42473 \\ \hline
  \end{tabular}
  \hfil
\end{minipage}
\hfil

\bigskip

\noindent{}
\hfil
\begin{minipage}[t][][t]{0.55\textwidth}
  \captionof{table}{Solutions of connection factor-revealing programs for $A2$, and obtained factor for $A3$.}
  \label{table-ex3}
  \hfil
    \begin{tabular}{| r | c | c | c |}
      \hline  \multicolumn{1}{|c|}{$k$} &  $\xx{A2_c}{k}$   & best $\delta$  & factor  \\ \hline
      10   & 4.02931 & 2.33433 & 2.29772 \\ \hline
      20   & 3.64790 & 2.16561 & 2.22270 \\ \hline
      50   & 3.48465 & 2.09159 & 2.18792 \\ \hline
      100  & 3.43524 & 2.06895 & 2.17704 \\ \hline
      200  & 3.41127 & 2.05793 & 2.17170 \\ \hline
      300  & 3.40339 & 2.05430 & 2.16993 \\ \hline
    \end{tabular}
  \hfil
\end{minipage}
\hfil

\bigskip

\noindent{}
\hfil
\begin{minipage}[t]{0.4\textwidth}
  \hfil
  \includegraphics[scale=0.48]{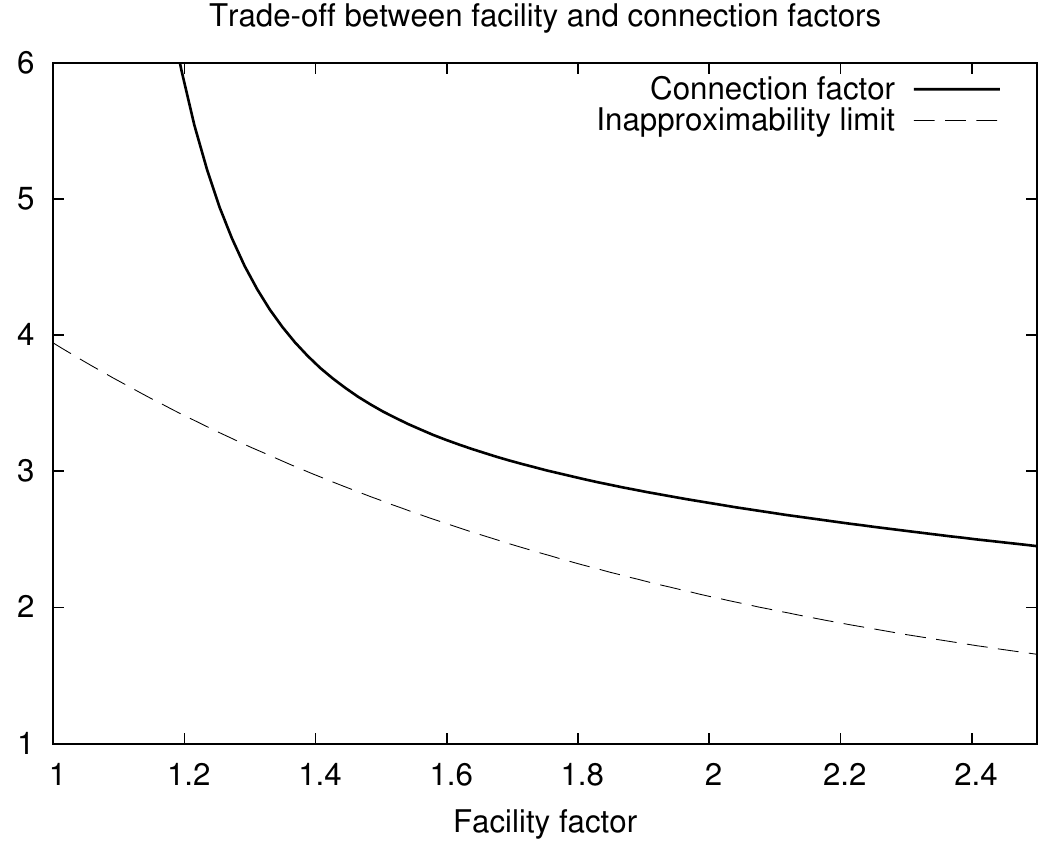}
  \hfil
  \captionof{figure}{Trade-off between connection and facility approximation factors.\label{fig-factor}}
\end{minipage}
\hfil
\begin{minipage}[t]{0.4\textwidth}
  \hfil
  \includegraphics[scale=0.48]{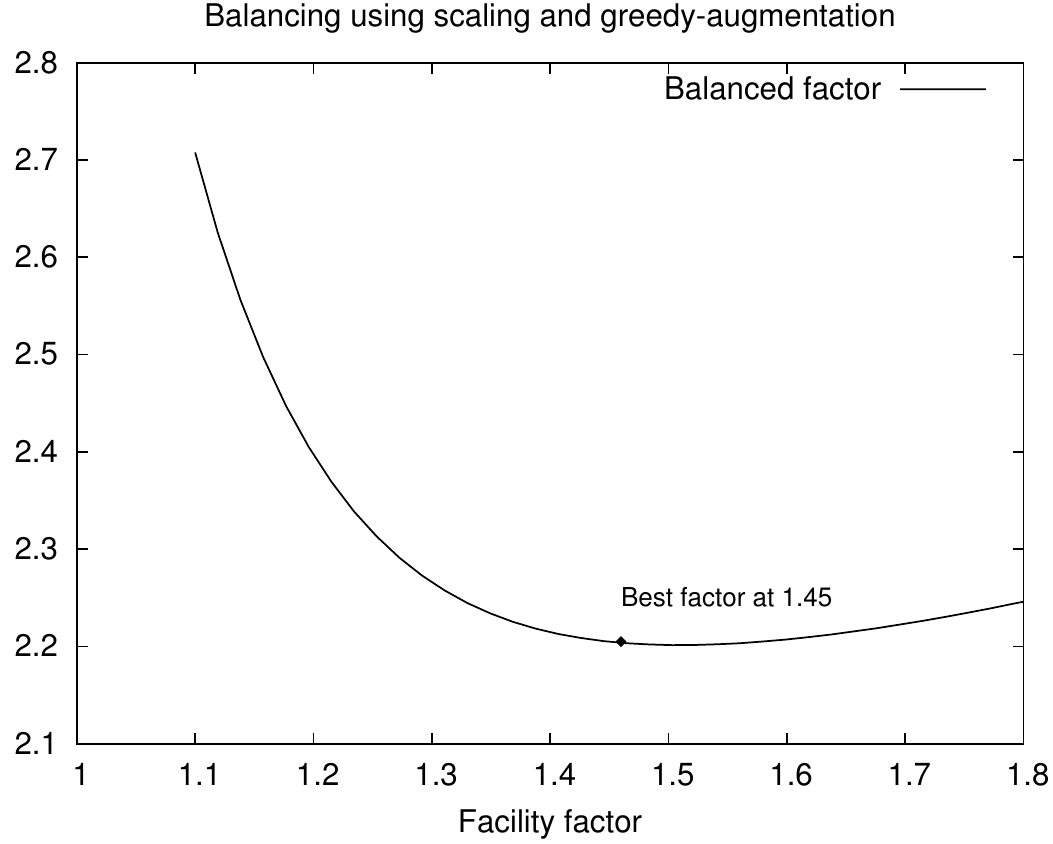}
  \hfil
  \captionof{figure}{Trend of the obtained balanced approximation factors.\label{fig-trend}}
\end{minipage}
\hfil

\end{document}